\newcommand{\veq}{\mathrel{\rotatebox{90}{$=$}}}
\newcommand{\ubar}[1]{\underaccent{\bar}{#1}}
\newcommand{\ucbar}[1]{\text{\b{$#1$}}}
\setlist[enumerate]{leftmargin=.5in}
\setlist[itemize]{leftmargin=.5in}
\crefname{hypothesis}{Hypothesis}{Hypotheses}
\title{Nonlinear Decomposition Principle \\ and Fundamental Matrix Solutions \\ for Dynamic Compartmental Systems}
\author{Huseyin Coskun\thanks{Department of Mathematics, University of Georgia, Athens, GA 30602 (\email{hcoskun@uga.edu}).} }
\DeclareMathOperator{\diag}{diag}
\begin{document}

\maketitle

\begin{abstract}
A decomposition principle for nonlinear dynamic compartmental systems is introduced in the present paper. This theory is based on the mutually exclusive and exhaustive, analytical and dynamic, novel system and subsystem partitioning methodologies. A deterministic mathematical method is developed for the dynamic analysis of nonlinear compartmental systems based on the proposed theory. The dynamic method enables tracking the evolution of all initial stocks, external inputs, and arbitrary intercompartmental flows, as well as the associated storages derived from these stocks, inputs, and flows individually and separately within the system. The transient and the dynamic direct, indirect, acyclic, cycling, and transfer (\texttt{diact}) flows and associated storages transmitted along a particular flow path or from one compartment\textemdash directly or indirectly\textemdash to any other are then analytically characterized, systematically classified, and mathematically formulated. Thus, the dynamic influence of one compartment, in terms of flow and storage transfer, directly or indirectly on any other compartment is ascertained. Consequently, new mathematical system analysis tools are formulated as quantitative system indicators. The proposed mathematical method is then applied to various models from literature to demonstrate its efficiency and wide applicability.
\end{abstract} 

\begin{keywords}
nonlinear decomposition principle, fundamental matrix solutions, dynamic system and subsystem decomposition, complex systems theory, nonlinear dynamic compartmental systems, \texttt{diact} flows and storages, dynamic input-output analysis, dynamic input-output economics, socioeconomic systems, epidemiology, infectious diseases, toxicology, pharmacokinetics, neural networks, chemical and biological systems, control theory, information theory, information diffusion, social networks, computer networks, malware propagation, traffic flow
\end{keywords}

\begin{AMS}
34A34, 35A24, 37C60, 37N25, 37N40, 70G60, 91B74, 92B20, 92C42, 92D30, 92D40, 93C15, 94A15
\end{AMS}

\section{Introduction}
\label{sec:intro}

Compartmental systems are mathematical abstractions of networks composed of discrete, homogeneous, interconnected components that approximate the behavior of continuous physical systems. The system compartments are interrelated through the flow of a conserved quantity, such as energy, matter, information, disease, or currency between them and their environment based on conservation principles. Therefore, for an accurate quantification of the compartmental system function and behavior, analytical and explicit formulation of system flows and the associated storages generated by these flows are of paramount importance.

Today's major natural problems involve change, and this makes the need for methods of dynamic nonlinear system analysis not only appropriate, but also urgent. Such methods for compartmental system analysis, however, have remained a long-standing open problem. Sound rationales are offered in the literature for compartmental system analysis, but they are for special cases, such as linear models and static systems \cite{Leontief1936,Leontief1966,Hannon1973,Patten1978,Szyrmer1987}. Various mathematical aspects of compartmental systems are also studied in the literature \cite{Jacquez1993,Anderson2013}.

This is the first manuscript that potentially addresses the mismatch between the needs for dynamic nonlinear compartmental system analysis and current static and computational simulation methods. The present manuscript is structured in three levels: theory, methods, and applications. The underlying mathematical theory will be called the {\em nonlinear decomposition principle}. The theory is based on the novel {\em dynamic system} and {\em subsystem decomposition methodologies}. A deterministic mathematical method is then developed for the dynamic analysis of nonlinear compartmental systems. We consider that the proposed mathematical theory and methodology brings a novel complex systems theory to the service of urgent and challenging natural problems of the day.

The {\em system} decomposition methodology explicitly generates mutually exclusive and exhaustive subsystems, each driven by a single initial stock or external input, that are running within the original system and have the same structure and dynamics as the system itself. Therefore, the composite compartmental flows and storages are dynamically decomposed into subcompartmental flow and storage segments based on their constituent sources from the initial stocks and external inputs. Consequently, the system decomposition methodology yields the {\em subthroughflow} and {\em substorage matrix functions} that respectively represent the flows and storages derived from the initial stocks and external inputs. Equipped with these matrix measures, the system partitioning ascertain the dynamic distribution of initial stocks and external inputs, as well as the organization of the associated storages derived from these stocks and inputs individually and separately within the system. In other words, the system decomposition enables dynamically tracking the evolution of the initial stocks (initial conditions) and external inputs (source terms), as well as associated storages (state variables) individually and separately within the system.

The subsystems are then dynamically decomposed along a set of mutually exclusive and exhaustive directed subflow paths. The {\em subsystem} decomposition methodology yields the {\em transient subflows} transmitted from one subcompartment directly or indirectly to any other along a particular flow path and the associated {\em substorages} generated by these subflows within subsystems. Therefore, arbitrary composite intercompartmental flows and storages can dynamically be decomposed into transient subflow and substorage segments along a given set of flow paths. In other words, the transient subflows and substorages represent the dynamic distribution of arbitrary intercompartmental flows and the organization of associated storages along the given set of pathways. Consequently, the subsystem decomposition enables dynamically tracking the fate of arbitrary intercompartmental flows and storages within the system. The spread of an arbitrary flow or storage segment from one compartment to the entire system can then be determined and monitored. Moreover, a history of compartments visited by particular system flows and storages can also be compiled. In summary, the proposed mathematical method, as a whole, decomposes the system flows and storages to the utmost level. 

The dynamic {\em direct}, {\em indirect}, {\em acyclic}, {\em cycling}, and {\em transfer} ($\texttt{diact}$) flows and storages transmitted from one compartment, directly or indirectly, to any other in the system are also analytically characterized, systematically classified, and mathematically formulated for the quantification of intercompartmental flow and storage dynamics. The direct influence of one compartment on another can be determined through the state-of-the-art techniques. The proposed method, however, makes the dynamic analysis of both direct and indirect influence of one compartment, in terms of flow and storage transfer, on any other possible in a nonlinear system. 

The proposed methodology, therefore, develops new mathematical system analysis tools as quantitative system indicators. All these dynamic measures\textemdash that is, the subthroughflows, substorages, as well as the transient and \texttt{diact} flows and storages\textemdash are introduced for the first time in literature. They are used in the analysis of illustrative models in Section~\ref{sec:results} and Appendix~\ref{apxsec:ex} to demonstrate the wide applicability and efficiency of the proposed methodology. The results indicate that the proposed method provides significant advancements in the theory, methodology, and practicality of the current compartmental system analysis.

The applicability of the proposed method extends to various realms regardless of their naturogenic and anthropogenic nature, such as ecology, economics, pharmacokinetics, chemical reaction kinetics, epidemiology, chemical and biomedical systems, neural networks, social networks, and information science. An input-output analysis in economics was developed several decades ago, but only for static systems \cite{Leontief1936,Leontief1966}. The proposed methodology in the context of economics, in particular, can be considered as the mathematical foundation of the {\em dynamic input-output economics}. Essentially, the methodology is applicable to any real-world phenomenon where compartmental models of conserved quantities can be constructed. Considering hypothetical complex networks with multiple interacting compartments, the compartments can, for example, model species in an ecosystem, sectors in an economy, organs in an organism, molecules in a chemical reaction, neurons in a neural network, or communities in a social network. The conserved quantities that need to be investigated within these systems, then, would be a nutrient, amount or value of goods and services, a certain drug, a specific type of atom, particular ions, or a piece of information, respectively.

An illustrative SIRS model from epidemiology is analyzed in detail in Section~\ref{sec:results}, and more case studies are presented from ecosystem ecology in the Appendices. The SIRS model consists of three compartments that represent the populations of three groups: the susceptible or uninfected, $S$, infectious, $I$, and recovered or immune, $R$. The model determines the number of individuals infected with a contagious illness over time \cite{Kermack1927}. It is shown that, the proposed dynamic {\em system} decomposition methodology decomposes each of the composite SIR populations into subpopulations based on their constituent sources from the initial and newborn populations. Consequently, the method enables tracking the evolution of the health states of the newborn or initial SIR populations individually and separately within the total population. The proposed dynamic {\em subsystem} decomposition methodology then enables tracking the evolution of the health states of an arbitrary population in any SIR group along a given infection path. Therefore, the effect of an arbitrary population on any other group, through not only direct but also indirect interactions, can be ascertained. Consequently, the spread of the disease from an arbitrary population to the entire system can be determined and monitored. It is worth noting for a comparison that the solution to the SIRS model through the state-of-the-art techniques can only provide the composite SIR populations without distinguishing subpopulations based on their constituent sources.

The paper is organized as follows: the mathematical method is introduced in Sections~\ref{sec:csystems},~\ref{sec:systemd}, and~\ref{sec:dsd}, the fundamental matrix solutions and decomposition principle are introduced in Sections~\ref{sec:nfm} and~\ref{sec:ndp}, system analysis is discussed in Section~\ref{sec:sa}, and results, examples, discussions, and conclusions follow at the end of the manuscript.

\section{Methods}
\label{sec:method}

The {\em nonlinear decomposition principle} for dynamic compartmental systems is introduced in this section based on the mutually exclusive and exhaustive, analytical and dynamic, novel {\em system} and {\em subsystem decomposition methodologies}. A deterministic mathematical method is then developed for the dynamic analysis of nonlinear compartmental systems. The proposed theory and method construct a base for the formulation of new mathematical system analysis tools, such as the {\em subthroughflows, substorages}, and the \texttt{diact} {\em flows} and {\em storages}, as quantitative system indicators. These novel concepts and quantities are defined and formulated in this section.

\subsection{Compartmental systems}
\label{sec:csystems}

We assume that the components of a physical system are modeled as compartments that are interconnected through flow of energy, matter, information, disease, or currency. In such a compartmental model, the state variable $x_i(t)$ represents the amount of storage in compartment $i$, and $f_{ij}(t,\bm{x})$ represents the non-negative flow rate from compartment $j$ to $i$ at time $t$ (see Fig.~\ref{fig:sc}).

Let the governing equations for this nonlinear dynamic compartmental system, formulated based on conservation principles, be given as
\begin{equation}
\label{eq:modelV}
\dot{\bm x}(t) = {\bm \tau}(t,\bm{x})
\end{equation}
with the initial conditions of ${\bm x}(t_0) = {\bm x}_0$. The {\em state vector} $\bm{x}(t)=[x_1(t),\ldots,x_n(t)]^T$ is a differentiable function of time, $t$. The function ${\bm \tau}(t,\bm{x})=[ \tau_1(t,\bm{x}),\ldots,\tau_n(t,\bm{x}) ]^T$ will be called the {\em net throughflow rate} vector and expressed as
\begin{equation}
\label{eq:modelTau}
{\bm \tau}(t,\bm{x}) \coloneqq {\check{\bm \tau}}(t,\bm{x}) - {\hat{\bm \tau}}(t,\bm{x})
\end{equation}
where the respective {\em inward} and {\em outward throughflow rate} vector functions are
\[\check{\bm \tau}(t,\bm{x}) = [\check{\tau}_1(t,\bm{x}),\ldots,\check{\tau}_n(t,\bm{x})]^T
\, \, \, \mbox{and}  \, \, \,
{\hat{\bm \tau}}(t,\bm{x}) = [ \hat{\tau}_1(t,\bm{x}),\ldots,\hat{\tau}_n(t,\bm{x}) ]^T . \]
The components of the throughflow vectors can further be expanded as
\begin{equation}
\label{eq:in_out_flows}
\check{\tau}_i(t,\bm{x}) \coloneqq \sum_{j=0}^n f_{ij}(t,\bm{x}) \quad \mbox{and} \quad \hat{\tau}_i(t,\bm{x}) \coloneqq \sum_{j=0}^n f_{ji}(t,\bm{x})
\end{equation}
for $i=1,\ldots,n$. Index $j=0$ represents the system exterior.

Let $\Omega \subset \mathbb{R}^n$ be a domain (connected, open set) and $\mathcal{I} \subset \mathbb{R}$ be an open interval. We assume that $f_{ij} (t,\bm{x})$ is a continuous, nonlinear function of ${\bm x}$ and $t$, and continuously differentiable function of $\bm{x}$ on $\mathcal{I} \times \Omega$. Because of being linear combinations of $f_{ij} (t,\bm{x})$, $\check{\tau}_i (t,\bm{x})$, $\hat{\tau}_i (t,\bm{x})$, and ${\tau}_i (t,\bm{x})$ have also the same properties. These conditions imply the existence and uniqueness of the solutions to the system of governing equations, Eq~\ref{eq:modelTau}.

We assume the following conditions on the {\em flow rate functions}:
\begin{equation}
\label{eq:sf}
f_{ij}(t,\bm{x}) \coloneqq q^x_{ij}(t,\bm{x}) \, {x}_j (t), \quad f_{ij}(t,\bm{x}) \geq 0,
\quad \forall i,j
\end{equation}
where $q^x_{ij}(t,\bm{x})$ is a nonlinear function of ${\bm x}$ and $t$, and has the same properties as $f_{ij}(t,\bm{x})$. In general, it is assumed for compartmental systems that $f_{ii}(t,\bm{x}) = 0$, but the following analysis is also valid for nonnegative flow from a compartment to itself. The first condition of Eq.~\ref{eq:sf} guaranties nonnegativity of the state variables, that is, $x_j(t) \geq 0$ for all $j$. The {\em external input} and {\em output flow rates} into and from compartment $i$, $f_{i0}(t,\bm{x})$ and $f_{0i}(t,\bm{x})$, are respectively denoted by
\[z_i(t,\bm{x}) \coloneqq f_{i0}(t,\bm{x}) \quad \mbox{and} \quad y_i(t,\bm{x}) \coloneqq f_{0i}(t,\bm{x}) . \]

The governing system can then be rewritten componentwise as
\begin{equation}
\label{eq:model1}
\dot{x}_i(t) = \check{\tau}_i(t,\bm{x}) - \hat{\tau}_i(t,\bm{x})
\end{equation}
for $i=1,\ldots, n $. When the external input and output are separated, the system Eq~\ref{eq:model1} takes the following standard form:
\begin{equation}
\label{eq:model2}
\dot{x}_i(t) = \left( z_i(t,\bm{x}) + \sum_{\substack{j=1}}^n f_{ij}(t,\bm{x})  \right) - \left( y_i(t,\bm{x}) + \sum_{\substack{j=1}}^n f_{ji}(t,\bm{x}) \right)
\end{equation}
with the initial conditions of $x_i(t_0) = x_{i,0}$. The positive input or initial condition ensures that the corresponding state variable is always strictly positive. That is, if $z_i(t,\bm{x}) > 0$ or $x_{i,0} > 0$, then $x_i(t) >0$.

The proposed methodology is designed for {\em conservative compartmental systems}, as defined below.
\begin{definition}
\label{def:conservative}
A dynamical system will be called {\em compartmental} if it can be expressed in the form of Eq.~\ref{eq:model2} with the conditions given in Eq.~\ref{eq:sf}. The compartmental system will be called {\em conservative} if all internal flow rates add up to zero when the system is closed, that is, when there is neither external input nor output. Formally,
\begin{equation}
\label{eq:consv}
\sum_{i=1}^{n} \dot{x}_ i(t) = 0 \quad \mbox{when} \quad \bm{z}(t,\bm{x}) = \bm{y}(t,\bm{x}) = \bm{0} \quad \mbox{on } \, \mathcal{I}
\end{equation}
where $\bm 0$ is the zero vector of size $n$.
\end{definition}

We define the {\em state}, {\em input}, and {\em output matrix} functions as
\[ \mathcal{X}(t) \coloneqq \diag\left ( \bm{x}(t) \right ), \quad \mathcal{Z}(t,\bm{x}) \coloneqq \diag\left ( \bm{z}(t,\bm{x}) \right ), \quad \mbox{and} \quad \mathcal{Y}(t,\bm{x}) \coloneqq \diag\left ( \bm{y}(t,\bm{x}) \right ),
\]
respectively. The notation $\diag{(x(t))}$ represents the diagonal matrix whose diagonal elements are the elements of vector $x(t)$, and $\diag{(X(t))}$ represents the diagonal matrix whose diagonal elements are the same as the diagonal elements of matrix $X(t)$. The {\em external input} and {\em output vectors} are
\[\bm{z}(t,\bm{x}) =[z_1(t,\bm{x}),\ldots,z_n(t,\bm{x})]^T \quad \mbox{and} \quad \bm{y}(t,\bm{x}) = [y_1(t,\bm{x}),\ldots,y_n(t,\bm{x})]^T ,\]
respectively. Clearly,
\[ \bm{x}(t) = \mathcal{X}(t) \, \bm{1} , \quad \bm{z}(t,\bm{x}) = \mathcal{Z}(t,\bm{x}) \, \bm{1}, \quad \mbox{and} \quad \bm{y}(t,\bm{x}) = \mathcal{Y}(t,\bm{x}) \, \bm{1} \]
where $\mathbf{1}$ is the vector of size $n$ whose entries are all equal to $1$. Excluding the external input and output, we define the {\em flow rate matrix} function as the matrix of intercompartmental direct flows:
\begin{equation}
\label{eq:matrix_rates}
F(t,\bm{x}) \coloneqq \left ( f_{ij}(t,\bm{x}) \right ).
\end{equation}

Using these notations, $\check{\bm{\tau}}(t,\bm{x})$ and $\hat{\bm{\tau}}(t,\bm{x})$, defined in Eq.~\ref{eq:in_out_flows}, can be expressed in compact form as
\begin{equation}
\label{eq:vector_rates}
\begin{aligned}
\check{\bm{\tau}}(t,\bm{x}) &= \mathcal{Z}(t,\bm{x})  \, \bm{1} + F(t,\bm{x}) \, \bm{1} = \bm{z}(t,\bm{x}) + F(t,\bm{x}) \, \bm{1} , \\
\hat{\bm{\tau}}(t,\bm{x}) &= \mathcal{Y}(t,\bm{x})  \, \bm{1} + F^T(t,\bm{x}) \, \bm{1} = \bm{y}(t,\bm{x})  + F^T(t,\bm{x}) \, \bm{1} .
\end{aligned}
\end{equation}
The governing system, Eq.~\ref{eq:model2}, then becomes
\begin{equation}
\label{eq:model_orgM1}
\begin{aligned}
\dot{\bm{x}}(t) = \left ( \bm{z}(t,\bm{x}) + F(t,\bm{x}) \, \bm{1} \right ) - \left ( \bm{y}(t,\bm{x})  + F^T(t,\bm{x}) \, \bm{1} \right )
\end{aligned}
\end{equation}
with the initial conditions of ${\bm x}(t_0) = {\bm x}_0$. Separating external inputs from the intercompartmental flows and outputs, the governing system, Eq.~\ref{eq:model_orgM1}, can be written as
\begin{equation}
\label{eq:model_orgM2}
\begin{aligned}
\dot{\bm{x}}(t) = \bm{z}(t,\bm{x}) + {\sf F} (t,\bm{x}) \, \mathbf{1}
\end{aligned}
\end{equation}
where
\begin{equation}
\label{eq:matrix_lambda}
\begin{aligned}
{\sf F}(t,\bm{x}) & \coloneqq F (t,\bm{x}) - \mathcal{Y} (t,\bm{x}) - \diag {\left ( F^T (t,\bm{x}) \, \mathbf{1} \right ) } = F (t,\bm{x}) - \mathcal{T}(t,\bm{x}) ,
\end{aligned}
\end{equation}
and $\mathcal{T}(t,\bm{x}) \coloneqq \diag \left ( \hat{\bm{\tau}}(t,\bm{x}) \right ) = \mathcal{Y} (t,\bm{x}) + \diag {\left ( F^T (t,\bm{x}) \, \mathbf{1} \right ) } $.

\subsection{System decomposition methodology}
\label{sec:systemd}

In this section, we introduce the {\em dynamic system decomposition methodology} for analytically partitioning the governing system into mutually exclusive and exhaustive {\em subsystems}. The system decomposition methodology dynamically decomposes composite compartmental flows and storages into subcompartmental segments based on their constituent sources from the initial stocks and external inputs. The system decomposition, consequently, yields subthroughflows and substorages for the distribution of the initial stocks and external inputs, as well as the organization of the associated storages derived from these stocks and inputs individually and separately within the system. In other words, this methodology enables tracking the evolution of the initial stocks and external inputs, as well as associated storages individually and separately within the system. The dynamic {\em system decomposition methodology} has two components: the {\em state} and {\em flow decompositions}, as introduced in this section (see Figs.~\ref{fig:sc} and~\ref{fig:fd}).

The system is partitioned explicitly and analytically into mutually exclusive and exhaustive {\em subsystems} as follows: Each compartment is partitioned into $n+1$ subcompartments; $n$ initially empty subcompartments for $n$ environmental inputs and $1$ subcompartment for the initial stock of the compartment. The notation $i_k$ is used to represent the $k^{th}$ subcompartment of the $i^{th}$ compartment for $i=1,\ldots,n$ and $k=0,\ldots,n$. The subscript index $k=0$ represents the initial subcompartment of compartment $i$ (see Fig.~\ref{fig:sc}).

The initial subsystem is driven by the initial stocks. Except the initial subsystem, each subsystem is generated by a single external input. Therefore, the number of subcompartments in each compartment is equal to the number of inputs (or compartments), plus one for the initial stocks. If an input or all initial conditions are zero, the corresponding subsystem becomes null. Consequently, in a system with $n$ compartments, each compartment has $n+1$ subcompartments, and, therefore, the system has $n+1$ subsystems, indexed by $k=0,\ldots,n$. The initial subsystem ($k=0$) represents the evolution of the initial stocks, receives no external input, and has the same initial conditions as the original system. The initial conditions for all the other subcompartments are zero.

The system decomposition introduced in this section is input-oriented. The governing system can be partitioned based on external outputs instead of inputs, by conceptually reversing all system flows. The following condition on the flow rates, instead of Eq.~\ref{eq:sf}, ensures the possibility of the system decomposition and analysis in both the input- and output-orientations:
\begin{equation}
\label{eq:sfO}
f_{ij}(t,\bm{x}) \coloneqq q^x_{ij}(t,\bm{x}) \, {x}_i (t) \, {x}_j (t) , \quad i,j = 1. \ldots, n .
\end{equation}
This form of flow rates makes both the original and reversed decomposed systems well-defined.

The initial subsystem will be further decomposed into {\em initial subsystems} to track the fate of the initial stocks individually and separately within the system. This dynamic {\em initial system decomposition methodology} is introduced in Appendix~\ref{sec:initsystemd} (see Fig.~\ref{fig:sc} and~\ref{fig:fd}).

\subsubsection{State decomposition}
\label{sec:sd}

In this section, we will introduce the {\em state decomposition} ({\em subcompartmentalization}) methodology.
\begin{figure}[t]
\begin{center}
\begin{tikzpicture}
\centering
   \draw[very thick, draw=black] (-.05,-.05) rectangle node(R1) [pos=.5] { } (2.1,2.1) ;
   \draw[very thick, fill=red!3, draw=red, text=red] (0,0) rectangle node(R1) [pos=.5] {$x_{1_1}$} (1,0.8) ;
   \draw[very thick, fill=gray!10, draw=black, text=black] (1.05,0) rectangle node(R2) [pos=.5] {${x}_{1_0}$} (2.05,1) ;
   \draw[very thick, fill=brown!3, draw=brown, text=brown] (1.05,1.05) rectangle node(R3) [pos=.5] {$x_{1_3}$} (2.05,2.05) ;
   \draw[very thick, fill=blue!3, draw=blue, text=blue] (0,0.85) rectangle node(R4) [pos=.5] {$x_{1_2}$} (1,2.05) ;
    \draw[very thick,-stealth,draw=red]  (-1,.5) -- (-.1,.5) ;
    \node (z) [text=red] at (-.7,0.7) {$z_1$};
    \draw[very thick,-stealth,draw=black]  (-.3,1.8) -- (-1.3,1.8) ;
    \node (z) [text=black] at (-.7,2.05) {$y_1$};
    \node (x) at (1,-.4) {${x}_{1}$};
   \draw[very thick, draw=black] (5.95,-.05) rectangle node(R1) [pos=.5] { } (8.1,2.1) ;
   \draw[very thick, fill=red!3, draw=red, text=red] (6,0) rectangle node(R1) [pos=.5] {$x_{2_1}$} (7,0.8) ;
   \draw[very thick, fill=gray!10, draw=black, text=black] (7.05,0) rectangle node(R2) [pos=.5] {${x}_{2_0}$} (8.05,.6) ;
   \draw[very thick, fill=brown!3, draw=brown, text=brown] (7.05,2.05) rectangle node(R3) [pos=.5] {$x_{2_3}$} (8.05,0.65) ;
   \draw[very thick, fill=blue!3, draw=blue, text=blue] (6,0.85) rectangle node(R4) [pos=.5] {$x_{2_2}$} (7,2.05) ;
    \draw[very thick,-stealth,draw=blue]  (5,1.5) -- (5.9,1.5) ;
    \node (z) [text=blue] at (5.3,1.7) {$z_2$};
    \draw[very thick,-stealth,draw=black]  (8.4,1.8) -- (9.4,1.8) ;
    \node (z) [text=black] at (8.8,2.05) {$y_2$};
    \node (x) at (7,-.4) {${x}_{2}$};
   \draw[very thick, draw=black] (2.95,2.95) rectangle node(R1) [pos=.5] { } (5.1,5.1) ;
   \draw[very thick, fill=red!3, draw=red, text=red] (3,3) rectangle node(R1) [pos=.5] {$x_{3_1}$} (4,4.2) ;
   \draw[very thick, fill=gray!10, draw=black, text=black] (4.05,3) rectangle node(R2) [pos=.5] {${x}_{3_0}$} (5.05,3.7) ;
   \draw[very thick, fill=brown!3, draw=brown, text=brown] (4.05,3.75) rectangle node(R3) [pos=.5] {$x_{3_3}$} (5.05,5.05) ;
   \draw[very thick, fill=blue!3, draw=blue, text=blue] (3,4.25) rectangle node(R4) [pos=.5] {$x_{3_2}$} (4,5.05) ;
    \draw[very thick,-stealth,draw=brown]  (6,4.5) -- (5.1,4.5) ;
    \node (z) [text=brown] at (5.9,4.7) {$z_3$};
    \draw[very thick,-stealth,draw=black]  (2.7,4.5) -- (1.7,4.5) ;
    \node (z) [text=black] at (2.3,4.75) {$y_3$};
    \node (x) at (4,5.4) {${x}_{3}$};
    \draw[very thick,-stealth]  (0.8,2.5) -- (2.6,4.2) ;
    \draw[very thick,-stealth]  (2.6,3.9)  -- (1.15,2.5) ;
    \node (x) at (2.1,2.8) {${f}_{13}$};
    \node (x) at (1.2,3.5) {${f}_{31}$};
    \draw[very thick,-stealth]  (7.3,2.5) -- (5.4,4.2) ;
    \draw[very thick,-stealth]  (5.4,3.9) -- (7,2.5) ;
    \node (x) at (5.9,2.9) {${f}_{23}$};
    \node (x) at (6.9,3.5) {${f}_{32}$};
    \draw[very thick,-stealth]  (2.5,.77) -- (5.5,0.77) ;
    \draw[very thick,-stealth]  (5.5,1) -- (2.5,1) ;
    \node (x) at (4,1.35) {${f}_{12}$};
    \node (x) at (4,.45) {${f}_{21}$};
\end{tikzpicture}
\end{center}
\caption{Schematic representation of the dynamic subcompartmentalization in a three-compartment model system. Each subsystem is colored differently; the second subsystem ($k=2$) is blue, for example. Only the subcompartments in the same subsystem ($x_{1_2}(t)$, $x_{2_2}(t)$, and $x_{3_2}(t)$ in the second subsystem, for example) interact with each other. Subsystem $k$ receives external input only at subcompartment ${k_k}$. The initial subsystem (gray) receives no external input. The dynamic flow decomposition is not represented in this figure. Compare this figure with Fig.~\ref{fig:fd}, in which the subcompartmentalization and corresponding flow decomposition are illustrated for $x_1(t)$ only.}
\label{fig:sc}
\end{figure}
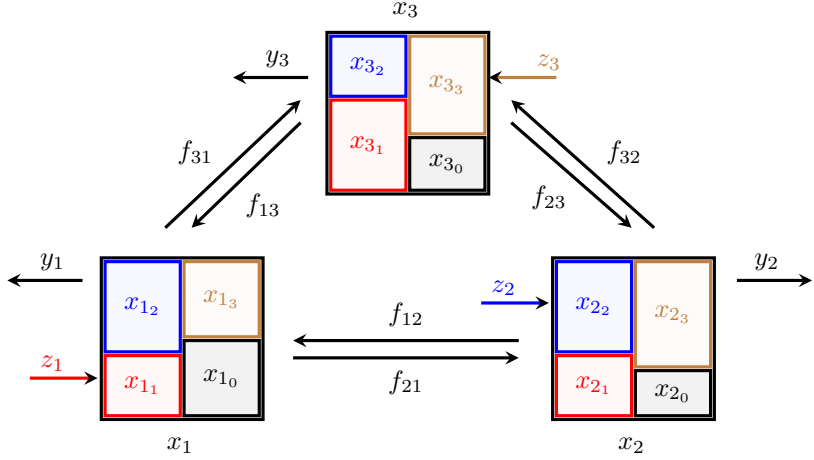

The $k^{th}$ {\em substate} of the $i^{th}$ state variable, $x_i(t)$, is denoted by $x_{i_k}(t)$. This substate variable, $x_{i_k}(t)$, represents the storage in {\em compartment} $i$ at time $t$ that is generated by external input into compartment $k \neq 0$, $z_k(t,\bm{x})$, during $[t_0,t]$ (see Fig.~\ref{fig:sc}). Therefore, $x_{i_k}(t)$ will also be called {\em substorage function}. The $0^{th}$ substate of the $i^{th}$ state, $x_{i_0}(t)$, will be called the {\em initial substorage} or {\em substate} of $x_i(t)$, and it is initially equal to the initial condition $x_{i,0}$. The initial substates represent the evolution of the initial stocks for $t>t_0$. Consequently, all substate variables are assumed to be zero initially, except the initial substate. Due to the mutual exclusiveness and exhaustiveness of the system decomposition, we then have
\begin{equation}
\label{eq:decomposition}
x_i(t) = \sum \limits_{k=0}^n x_{i_k}(t)
\end{equation}
for $i=1,\ldots,n$, and the initial conditions are
\begin{equation}
\label{eq:inits}
x_{i_k}(t_0) = \left \{
\begin{aligned}
& x_i(t_0) = x_{i,0}, \quad k = 0 \\
& 0.  \quad \quad \quad \quad \quad \quad k \neq 0
\end{aligned}
\right .
\end{equation}

Similar to the decomposition of a compartment, each initial subcompartment is further decomposed into $n$ subcompartments, as explained in Appendix~\ref{sec:isd} (see Fig.~\ref{fig:fd}). We will use the notation $x_{i_{k,0}} (t)$ for the $k^{th}$ substate of the $i^{th}$ initial substate function, $x_{i_0}(t)$, or ${\ubar x}_{i_k} (t) \coloneqq x_{i_{k,0}} (t)$ for notational convenience. Based on this further decomposition of the initial substates, we also have
\begin{equation}
\label{eq:i_decomposition}
{\ubar x}_{i}(t) \coloneqq x_{i_0}(t) = \sum \limits_{k=1}^n {\ubar x}_{i_k}(t),
\end{equation}
for $i=1,\ldots,n$, and the corresponding initial conditions become
\begin{equation}
\label{eq:i_inits}
{\ubar x}_{i_k}(t_0) = \delta_{ik} \, x_{i_0}(t_0)  = \left \{
\begin{aligned}
& x_{i_0}(t_0) = x_{i,0}, \quad i = k \\
& 0. \quad \quad \quad \quad \quad \quad \, \, \, i \neq k
\end{aligned}
\right .
\end{equation}

Let the {\em $k^{th}$ substate} and {\em initial substate vector} functions be defined as
\begin{equation}
\label{eq:s_vects}
\begin{aligned}
\mathbf{x}_k(t) & \coloneqq [x_{1_k}(t), \ldots, x_{n_k}(t) ]^T, \quad k=0,\ldots,n, \quad \mbox{and} \\  \ubar{\mathbf{x}}_k(t) & \coloneqq [\ubar{x}_{1_k}(t), \ldots, \ubar{x}_{n_k}(t) ]^T, \quad k=1,\ldots,n.
\end{aligned}
\end{equation}
The vector function ${\mathbf x}(t)$ of all initial substate and substate variables of the decomposed system will be denoted by
\begin{equation}
\label{eq:barkappa}
\begin{aligned}
{\mathbf x}(t) \coloneqq & \left[ \ubar{\bf{x}}^T_1(t), \ldots, \ubar{\bf{x}}^T_n(t), {\bf{x}}^T_1(t), \ldots, {\bf{x}}^T_n(t) \right ]^T \\
 = & \left[ \ubar{x}_{1_1}(t), \ldots,\ubar{x}_{n_1}(t), \ldots, \ubar{x}_{1_n}(t), \ldots,\ubar{x}_{n_n}(t) , \right . \\ & \left . x_{1_1}(t), \ldots,x_{n_1}(t), \ldots, x_{1_n}(t), \ldots,x_{n_n}(t) \right ]^T  \in \mathbb{R}^{2n^2} . \nonumber
\end{aligned}
\end{equation}
The state decompositions formulated in Eqs.~\ref{eq:decomposition} and~\ref{eq:i_decomposition} and the corresponding initial values in Eqs.~\ref{eq:inits} and~\ref{eq:i_inits} can then be expressed in vector form as
\begin{equation}
\label{eq:v_inits}
\begin{aligned}
\bm{x}(t) = \ubar{\bm{x}}(t) + \bar{\bm{x}}(t), \quad & \bm{x}(t_0) = \bm{x}_0 \quad \mbox{where}  \\
\ubar{\bm{x}}(t) \coloneqq \mathbf{x}_0(t) = \ubar{\mathbf{x}}_1(t) + \ldots + \ubar{\mathbf{x}}_n(t), \quad & \ubar{\mathbf{x}}_k(t_0) = {x}_{k,0} \, \bm{e}_k, \quad \ubar{\bm{x}}(t_0) = \bm{x}_0 , \quad \mbox{and} \\
\bar{\bm{x}}(t) \coloneqq \mathbf{x}_1(t) + \ldots + \mathbf{x}_n(t), \quad & \mathbf{x}_k(t_0) = \bm{0} , \quad \bar{\bm{x}}(t_0) = \bm{0} ,
\end{aligned}
\end{equation}
for $k=1,\ldots,n $, where $\mathbf{e}_k$ is the standard elementary unit vector of size $n$. The vector functions $\ubar{\bm{x}}(t)$ and $\bar{\bm{x}}(t)$ are the partitions of the state variable $\bm{x}(t)$, which respectively represent the storages derived from the initial stocks and external inputs. Equations~\ref{eq:decomposition} and~\ref{eq:i_decomposition} also imply that
\begin{equation}
\label{eq:xdervs}
\begin{aligned}
x_i(t) = \sum_{k=1}^n \ubar{x}_{i_k}(t) + x_{i_k}(t)
\quad \Rightarrow \quad
{\dot x}_i(t) = \sum_{k=1}^{n}  \dot {\ubar x}_{i_k}(t) + {\dot x}_{i_k}(t) .
\end{aligned}
\end{equation}
In vector notation, that is,
\begin{equation}
\label{eq:xdervsV}
\begin{aligned}
\dot{\bm x}(t) = \dot{\ubar{\bm x}}(t) + \dot{\bar{\bm{x}}}(t) = \left( \dot{\ubar{\mathbf{x}}}_1(t) + \ldots + \dot{\ubar{\mathbf{x}}}_n(t) \right) +  \left( \dot{\mathbf{x}}_1(t) + \ldots + \dot{\mathbf{x}}_n(t) \right) .
\end{aligned}
\end{equation}

We define the {\em substate} and $k^{th}$ {\em substate matrix} functions, $X(t)$ and $\mathcal{X}_k(t)$, as
\begin{equation}
\label{eq:X}
X(t) \coloneqq \left ( x_{i_k}(t) \right ) = \left [ \mathbf{x}_1(t) \, \ldots \, \mathbf{x}_n(t) \right ] \quad \mbox{and} \quad \mathcal{X}_k(t) \coloneqq \diag\left ( \mathbf{x}_k(t) \right )
\end{equation}
for $k=0,\ldots,n$, together with the initial conditions given in Eq.~\ref{eq:inits},
\begin{equation}
\label{eq:Xinit}
X(t_0) = \mathbf{0}, \quad \mathcal{X}_k(t_0) = \mathbf{0}  \, \, \, \mbox{for} \, \, \, k \neq 0 , \quad \mbox{and} \quad \mathcal{X}_0(t_0) = \diag\left ( \bm{x}_0 \right ) .
\end{equation}
These matrices will, alternatively, be called the {\em substorage} and $k^{th}$ {\em substorage matrix} functions, respectively. Note that we use the notation $\bm 0$ for both the $n \times 1$ zero vector and $n \times n$ zero matrix, which should be distinguished from the context. We then have
\begin{equation}
\label{eq:decom_sum}
\begin{aligned}
\bm{x}(t) = \ubar{\bm{x}}(t) + \bar{\bm{x}}(t) = \mathbf{x}_0(t) + {X}(t) \, \mathbf{1} \quad \mbox{and} \quad  \mathbf{x}_k(t) = \mathcal{X}_k(t) \, \mathbf{1} .
\end{aligned}
\end{equation}
The dynamic state decomposition methodology can be schematized as follows:
\begin{center}
\begin{tikzpicture}
  \matrix (m) [matrix of math nodes,row sep=0em,column sep=13em,minimum width=1em]
  {
  \bm{x}(t) &
 \hspace{-4.4cm} \mathbf{x}_0(t) + X(t) \, \bm{1}\\
 \veq &  \hspace{-4.4cm}  \veq \\
     {
 \begin{bmatrix}
  x_{1} \\
  x_{2} \\
  \vdots  \\
  x_{n}
 \end{bmatrix}
 }
 &
 {
  \begin{bmatrix}
  x_{1_0} \\
  x_{2_0} \\
  \vdots  \\
  x_{n_0}
 \end{bmatrix}
+
 \begin{bmatrix}
  x_{1_1} & x_{1_2} & \cdots & x_{1_n} \\
  x_{2_1} & x_{2_2} & \cdots & x_{2_n} \\
  \vdots  & \vdots  & \ddots & \vdots  \\
  x_{n_1} & x_{n_2} & \cdots & x_{n_n}
 \end{bmatrix}
  \begin{bmatrix}
  {1} \\
  {1} \\
  \vdots  \\
  {1}
 \end{bmatrix}
 }
 \\
 };
  \path[-stealth, decorate]
    (m-3-1.east|-m-3-2) edge node [below] {state decomposition}
            node [above] {$x_i(t) = \sum \limits_{k=0}^n x_{i_k}(t)$} (m-3-2) ;
\end{tikzpicture}
\end{center}

\subsubsection{Flow decomposition}
\label{sec:rd}

For a coherent system partitioning, flow rates are also decomposed into flow segments that will be called the {\em subflow rate} functions. These subflow rates represent the rate of flow segments between the subcompartments in the same subsystem (see Fig.~\ref{fig:fd}).
\begin{figure}[t]
\begin{center}
\begin{tikzpicture}[xscale=.7, yscale=.6]
\centering
    \draw[fill=gray!3]    (5,0) -- ++(3,0) -- ++(2,3) -- ++(-2,3) -- ++(-3,0);  %
    \draw[draw=none,fill=red!5]  (5,2.8) -- ++ (4.75,0) -- ++ (.15,.15) -- ++ (-.6,.9) -- ++ (-4.3,0);   %
    \draw[draw=none,fill=gray!30]  (5,.05) -- ++ (2.95,0) -- ++ (1.8,2.7) -- ++ (-4.75,0);
    \draw[draw=none,fill=blue!5]  (5,3.9) -- ++ (4.25,0) -- ++ (-.6,.9) -- ++ (-3.65,0);
    \draw[draw=none,fill=brown!5]  (5,5.9) -- ++ (2.95,0) -- ++ (.7,-1.05) -- ++ (-3.7,0);
   \draw[very thick, draw=black] (-.05,-.05) rectangle node(R1) [pos=.5] { } (4.1,6.1) ;
   \draw[very thick, fill=red!5, draw=red, text=red] (0,0) rectangle node(R1) [pos=.5] {$x_{1_1}$} (2,2) ;
   \draw[very thick, fill=gray!30, draw=black, text=black] (2.05,0) rectangle node(R2) [pos=.5] {} (4.05,3.5) ;
   \draw[very thick, fill=brown!5, draw=brown, text=brown] (2.05,3.55) rectangle node(R3) [pos=.5] {$x_{1_3}$} (4.05,6.05) ;
   \draw[very thick, fill=blue!5, draw=blue, text=blue] (0,2.05) rectangle node(R4) [pos=.5] {$x_{1_2}$} (2,6.05) ;
   \draw[very thick, draw=red, text=red] (2.1,0.05) rectangle node(R21) [pos=.5] {\footnotesize $\ubar{x}_{1_1}$} (4,1.15) ;
   \draw[very thick, draw=blue, text=blue] (2.1,1.2) rectangle node(R22) [pos=.5] {\footnotesize $\ubar{x}_{1_2}$} (4,2.55) ;
   \draw[very thick, draw=brown, text=brown] (2.1,2.6) rectangle node(R23) [pos=.5] {\footnotesize $\ubar{x}_{1_3}$} (4,3.45) ;
    \draw[very thick]  (5,0) -- (8,0) ;
    \draw[very thick]  (8,0) -- (10,3) ;
    \draw[very thick]  (10,3) -- (8,6) ;
    \draw[very thick]  (8,6) -- (5,6) ;
    \draw[very thick,-stealth,draw=red]  (-1.7,.5) -- (-.1,.5) ;
    \node (z) [text=red] at (-1.1,1) {$z_1$};
    \draw[very thick,-stealth,draw=red]  (5,.4) -- (7.9,.4) ;
    \node (fb1) [text=red] at (6.2,0.8) {\footnotesize $\ubar{f}_{j_1 1_1}$};
    \draw[very thick,-stealth,draw=blue]  (5,1.2) -- (8.4,1.2) ;
    \node (fb2)  [text=blue] at (6.2,1.6) {\footnotesize $\ubar{f}_{j_2 1_2}$};
    \draw[very thick,-stealth,draw=brown]  (5,2) -- (8.8,2) ;
    \node (fb3)  [text=brown] at (6.2,2.4) {\footnotesize $\ubar{f}_{j_3 1_3}$};
    \node (f0) at (9.7,1) {${f}_{j_0 1_0}$};
    \node (x0) at (3,-.5) {${x}_{1_0}$};
    \draw[very thick,-stealth,draw=red]  (5,3) -- (9.5,3) ;
    \node (f1) [text=red] at (6.2,3.4) {${f}_{j_1 1_1}$};
    \draw[very thick,-stealth,draw=blue]  (5,4.05) -- (8.8,4.05) ;
    \node (f2) [text=blue] at (6.2,4.45) {${f}_{j_2 1_2}$};
    \draw[very thick,-stealth,draw=brown]  (5,5.1) -- (8.2,5.1) ;
    \node (f3) [text=brown] at (6.2,5.5) {${f}_{j_3 1_3}$};
    \node (x) at (2,6.6) {${x}_{1}$};
    \node (x) at (6.5,6.6) {${f}_{j1}$};
\end{tikzpicture}
\end{center}
\caption{Schematic representation of the dynamic flow decomposition in a three-compartment model system. The figure illustrates the subcompartmentalization of compartment $1$ and the corresponding flow decomposition of $f_{j1}(t,{\bm x})$. The figure also illustrates further decomposition of initial subcompartment $1_0$ and the corresponding initial subflow function, $f_{j_0 1_0}(t,\bf{x})$ (both dark gray).}
\label{fig:fd}
\end{figure}
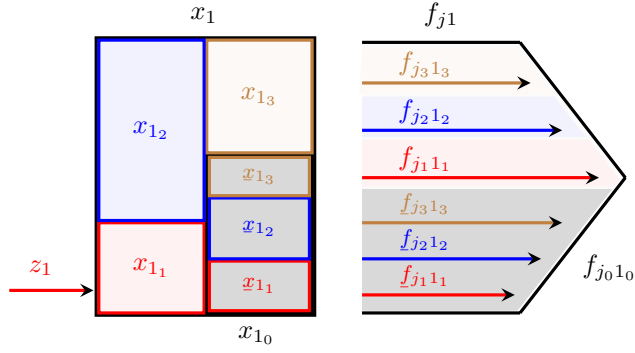

We assume that external input $z_k(t,\bm{x})$ enters into the system at subcompartment $k_k$ (see Fig.~\ref{fig:sc}). Moreover, no other $k^{th}$ subcompartment of any other compartment $i$, that is, subcompartment $i_k$, receives external input. Therefore, the {\em input decomposition} can be expressed as
\begin{equation}
\label{eq:Z}
z_{i_k}(t,{\bf x}) \coloneqq \delta_{ik} \, z_{k}(t,\bm{x}) = \left \{
\begin{aligned}
& z_{k_k}(t,{\bf x}) = z_{k}(t,\bm{x}), \quad i = k \\
& 0. \quad \quad \quad \quad \quad \quad \quad \quad \, \, \, \, \, i \neq k
\end{aligned}
\right .
\end{equation}

The flow rates, $f_{ij}(t,\bm{x})$, and output functions, $y_j(t,\bm{x})$, are also decomposed into the subflow rate functions. First, we define the {\em flow intensity} per unit storage directed from compartment $j$ to $i$ at time $t$ as
\begin{equation}
\label{eq:rintensities}
q^x_{ij}(t,\bm{x}) = \frac{f_{ij} (t,\bm{x}) }{x_j (t)}
\end{equation}
for $i=0,\ldots,n$, $j=1,\ldots,n$, as formulated in Eq.~\ref{eq:sf}. The subflow rates are then defined to be the flow segments proportional to the corresponding substorages, $x_{j_k}(t)$, with the proportionality factor of the flow intensity in the flow direction, $q^{x}_{ij}(t,{\bm x})$. That is,
\begin{equation}
\label{eq:cons}
f_{{i_k}{j_k}}(t,{\mathbf x}) \coloneqq  {x_{j_k}(t)} \, \frac{f_{ij}(t,\bm{x})} {x_{j}(t)} = \frac{x_{j_k}(t)} {x_{j}(t)} \, {f_{ij}(t,\bm{x})}
\end{equation}
for $i,k=0,\ldots,n$ and $j=1,\ldots,n$. The index $0_k$ is equivalent to the index $0$, and both represent the system exterior. We will use index $0$ in both cases for notational convenience. Similar to Eq.~\ref{eq:sf}, the functions $f_{i_k j_k}(t,\mathbf{x}) \geq 0$ represent nonlinear, nonnegative {\em subflow rates} from subcompartment ${j_k}$ to ${i_k}$. Due to the mutual exclusiveness and exhaustiveness of the system decomposition and Eq.~\ref{eq:decomposition}, we have
\begin{equation}
\label{eq:rate_decomposition}
{f_{ij}(t,\bm{x})} = \sum \limits_{k=0}^n  f_{{i_k}{j_k}}(t,{\mathbf x})
\end{equation}
for $i,j=1,\ldots,n$. It can be seen from Eq.~\ref{eq:cons} that the flow and subflow intensities between the same compartments in the same flow direction are the same, that is,
\begin{equation}
\label{eq:csc_dense}
\frac{ f_{{i_k}{j_k}}(t,{\mathbf x}) }{ x_{j_k}(t) } =  \frac{ f_{ij}(t,\bm{x}) } {x_{j}(t)}
\end{equation}
(see Fig.~\ref{fig:fd}).

In Eq.~\ref{eq:cons} above,
\begin{equation}
\label{eq:cons2}
d_{j_k} ({\mathbf x}) \coloneqq \frac{x_{j_k}(t)} {x_{j}(t)}
\end{equation}
will be called the {\em decomposition factor}. Due to the state decomposition, Eq.~\ref{eq:decomposition}, the decomposition factors form a continuous partition of unity:
\begin{equation}
\label{eq:cfactors}
0 \leq d_{j_k}({\mathbf x}) \leq 1 \quad \mbox{and} \quad \sum_{k=0}^{n} d_{j_k}({\mathbf x}) = 1.
\end{equation}
The {\em decomposition} and $k^{th}$ {\em decomposition} matrices, $D({\mathbf x}) \coloneqq \left ( {d}_{j_k}({\mathbf x}) \right )$ and $\mathcal{D}_k({\mathbf x}) \coloneqq \diag{([ {d}_{1_k}({\mathbf x}), \ldots, {d}_{n_k}({\mathbf x})])}$, can be formulated in matrix form as
\begin{equation}
\label{eq:cfactorsM2}
\begin{aligned}
D({\mathbf x}) &= \mathcal{X}^{-1}(t) \, X(t) \quad \mbox{and} \quad
\mathcal{D}_k({\mathbf x}) = \mathcal{X}^{-1}(t) \, \mathcal{X}_k(t)
\end{aligned}
\end{equation}
for $k=0,\ldots,n$.
Equations~\ref{eq:decom_sum},~\ref{eq:cons2} and~\ref{eq:cfactors} imply that
\begin{equation}
\label{eq:cfactorsM}
\begin{aligned}
\bm{1} = \mathcal{X}^{-1}(t) \, \bm{x}(t) & = \mathcal{X}^{-1}(t) \, \mathbf{x}_0(t) + \mathcal{X}^{-1}(t) \, X(t) \, \bm{1} = \mathcal{D}_0({\mathbf x}) \, \bm{1} + D({\mathbf x}) \, \bm{1} .
\end{aligned}
\end{equation}

We define the {\em $k^{th}$ subflow rate matrix} function as
\begin{equation}
\label{eq:matrix_subratesK}
F_k(t,{\mathbf x}) \coloneqq \left ( f_{{i_k}{j_k}}(t,{\mathbf x}) \right )
\end{equation}
for $k=0,\ldots,n$. Using Eq.~\ref{eq:cons}, $F_k(t,{\mathbf x})$ can be expressed in matrix form as
\begin{equation}
\label{eq:matrix_subratesM}
F_k(t,{\mathbf x}) = F(t,\bm{x}) \, \mathcal{D}_k({\mathbf x}) = F(t,\bm{x}) \, \mathcal{X}^{-1}(t) \, \mathcal{X}_k(t) .
\end{equation}
That is, the $k^{th}$ decomposition matrix, $\mathcal{D}_k({\mathbf x})$, decomposes the compartmental direct flow matrix, $F(t,\bm{x})$, into the subcompartmental subflow matrices, ${F}_k(t,{\mathbf x})$. Similarly, the $k^{th}$ {\em output matrix} function,
\begin{equation}
\label{eq:kthom}
\mathcal{Y}_k(t,{\mathbf x}) \coloneqq  \diag{\left ( \left[ f_{0 1_k}(t,{\mathbf x}) , \ldots, f_{0 n_k}(t,{\mathbf x}) \right] \right )} ,
\end{equation}
can be expressed in matrix form as
\begin{equation}
\label{eq:output_subratesM}
\mathcal{Y}_k(t,{\mathbf x}) = \mathcal{Y}(t,\bm{x}) \, \mathcal{D}_k({\mathbf x}) = \mathcal{Y}(t,\bm{x}) \, \mathcal{X}^{-1}(t) \, \mathcal{X}_k(t) ,
\end{equation}
and the {\em $k^{th}$ input matrix} function can be written as
\begin{equation}
\label{eq:kinputm}
\mathcal{Z}_k(t,{\mathbf x}) \coloneqq
\diag{\left ( {z}_k (t,\bm{x}) \, \bm{e}_k \right )} ,
\end{equation}
for $k=0,\ldots,n$, where we set $\bm{e}_0 = \bm{0}$. The respective $k^{th}$ {\em output} and {\em input vector} functions, ${\mathbf{y}}_k (t,{\mathbf x})$ and ${\mathbf{z}}_k (t,{\mathbf x})$, can then be defined as
\begin{equation}
\label{eq:in_out_k}
\begin{aligned}
{\mathbf{y}}_k (t,{\mathbf x}) \coloneqq {\mathcal{Y}}_k (t,{\mathbf x}) \, \mathbf{1} \quad \mbox{and} \quad {\mathbf{z}}_k (t,{\mathbf x}) \coloneqq {\mathcal{Z}}_k (t,{\mathbf x}) \, \mathbf{1} .
\end{aligned}
\end{equation}

Using these notations, the flow decomposition given in Eq.~\ref{eq:rate_decomposition} and input decomposition formulated in Eq.~\ref{eq:Z} can be written in matrix form as follows:
\begin{equation}
\label{eq:flow_decomposition}
F(t,\bm{x}) = \sum \limits_{k=0}^n  F_k(t,{\mathbf x}) , \quad \mathcal{Y}(t,\bm{x}) = \sum \limits_{k=0}^n \mathcal{Y}_k(t,{\mathbf x}), \quad \mathcal{Z}(t,\bm{x}) = \sum \limits_{k=0}^n  \mathcal{Z}_k(t,{\mathbf x})  .
\end{equation}
The equivalence of the flow and subflow rate intensities given in Eq.~\ref{eq:csc_dense} can also be expressed in matrix form as
\begin{equation}
\label{eq:intensity_flow}
F(t,{\bm x}) \, \mathcal{X}^{-1}(t) = F_k(t,{\mathbf x})  \, \mathcal{X}_k^{-1}(t)
\end{equation}
for $k=0,\ldots,n$.

The flow decomposition given in Eq.~\ref{eq:cons} can then be schematized as follows:
\begin{center}
\begin{tikzpicture}
  \matrix (m) [matrix of math nodes,row sep=0em,column sep=13em,minimum width=1em]
  {
   F (t,\bm{x})
   &
 F_k (t,{\mathbf x}) , \, \, \, k=0,\ldots,n
   \\
 \veq &    \hspace{-2cm} \veq \\
    {
      \begin{bmatrix}
  f_{11} & \cdots & f_{1n} \\
  f_{21} & \cdots & f_{2n} \\
  \vdots  & \ddots & \vdots  \\
  f_{n1} & \cdots & f_{nn}
 \end{bmatrix}
       }
 &
     {
     \begin{bmatrix}
  f_{1_k 1_k} & \cdots & f_{1_k n_k} \\
  f_{2_k 1_k} & \cdots & f_{2_k n_k} \\
  \vdots  & \ddots & \vdots  \\
  f_{n_k 1_k} & \cdots & f_{n_k n_k}
 \end{bmatrix}
      }
 \\ } ;
  \path[-stealth, decorate]
    (m-3-1.east|-m-3-2) edge node [below] {flow decomposition}
            node [above] {$f_{{i_k}{j_k}}(t,{\mathbf x}) = \frac{x_{j_k} (t)} {x_{j} (t)} \, {f_{ij}  (t,\bm{x}) } $} (m-3-2) ;
\end{tikzpicture}
\end{center}

\subsubsection{Domain decomposition}
\label{sec:dd}

The dynamic system decomposition methodology uniquely yields new substate variables in a higher dimensional domain from the original ones. We will show in this section that the properties guarantee the existence and uniqueness of the solutions to the decomposed system on the new domain are inherited from those of the original system on the original domain.

There exists a unique decomposition ${\mathbf x}$ for each $\bm x \in \Omega$ with the relationship $x_i(t) = \sum_{k=1}^n \ubar{x}_{i_k}(t) + x_{i_k}(t) $, as given in Eq.~\ref{eq:xdervs}. The new domain that includes these substate variables is denoted by $\mho \subset \mathbb{R}^{2n^2}$ and will be called the {\em decomposed domain}. This process can be represented as
\begin{equation}
\label{eq:dd}
\mathbb{R}^n \supset \Omega \ni \bm{x} \xrightarrow[\hspace{3.5cm}]{\rm domain \, \, \, decomposition}
\mathbf{x} \in \mho \subset \mathbb{R}^{2n^2} .
\end{equation}
There is a one-to-one correspondence between the original and decomposed domains. This correspondence is due to the existence and uniqueness of the governing systems in both the original and decomposed forms as shown further below in this section.

\begin{proposition}
\label{thm:exist_dcmp}
The nonlinear subflow rate functions $f_{i_k j_k} (t,{\mathbf x})$ and $\ubar f_{i_k j_k} (t,{\mathbf x})$ are continuous and continuously differentiable in ${\mathbf x}$ on domain $\mathcal{I} \times \mho \subset \mathbb{R} \times \mathbb{R}^{2n^2}$.
\end{proposition}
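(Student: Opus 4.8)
The plan is to show that each $f_{i_k j_k}$ inherits the required regularity from the hypotheses on $f_{ij}$ via the explicit formula in Eq.~\ref{eq:cons}, namely $f_{i_k j_k}(t,{\mathbf x}) = d_{j_k}({\mathbf x}) \, f_{ij}(t,\bm{x})$ with $d_{j_k}({\mathbf x}) = x_{j_k}(t)/x_j(t)$, together with the defining relation $x_j(t) = \sum_{k=1}^n \ubar{x}_{j_k}(t) + x_{j_k}(t)$ from Eq.~\ref{eq:xdervs}. First I would substitute, using the form of the flow rates in Eq.~\ref{eq:sf} (or Eq.~\ref{eq:sfO} in the output-oriented case), to cancel the denominator: from $f_{ij}(t,\bm{x}) = q^x_{ij}(t,\bm{x})\,x_j(t)$ we get
\begin{equation}
\label{eq:noblowup}
f_{i_k j_k}(t,{\mathbf x}) = \frac{x_{j_k}(t)}{x_j(t)} \, q^x_{ij}(t,\bm{x}) \, x_j(t) = q^x_{ij}(t,\bm{x}) \, x_{j_k}(t) ,
\end{equation}
so the apparent singularity at $x_j = 0$ is removable and $f_{i_k j_k}$ is in fact a genuine product of the substate $x_{j_k}$ with the coefficient $q^x_{ij}$ evaluated at $\bm{x} = \mathcal{X}(t)\bm{1}$, the latter being obtained from ${\mathbf x}$ by the linear (hence smooth) contraction $\bm{x} = \ubar{\bm x} + \bar{\bm x}$. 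The same substitution handles $\ubar f_{i_k j_k}$ once the analogous initial-subsystem formula from Appendix~\ref{sec:isd} is invoked.

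Next I would assemble the regularity claim from known closure properties. By hypothesis $q^x_{ij}(t,\bm{x})$ is continuous in $(t,\bm x)$ and continuously differentiable in $\bm x$ on $\mathcal{I}\times\Omega$; composing with the linear map ${\mathbf x}\mapsto\bm x$ preserves both properties on $\mathcal{I}\times\mho$ by the chain rule. The coordinate projection ${\mathbf x}\mapsto x_{j_k}$ is linear, hence $C^\infty$. A product of a continuous (resp.\ $C^1$ in ${\mathbf x}$) function with a $C^\infty$ function is again continuous (resp.\ $C^1$ in ${\mathbf x}$), which gives the conclusion for each $f_{i_k j_k}$, and finitely many such products and sums (as in $F_k = F\,\mathcal{D}_k$) stay in the class. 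This is the routine part.

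The main obstacle, and the step deserving care, is the domain bookkeeping: one must verify that the decomposed domain $\mho \subset \mathbb{R}^{2n^2}$ can be chosen so that the substitution $\bm x = \ubar{\bm x} + \bar{\bm x}$ actually lands in $\Omega$ (where the hypotheses on $q^x_{ij}$ live) and so that the construction is consistent with the one-to-one correspondence asserted around Eq.~\ref{eq:dd}. Concretely, I would take $\mho$ to be the preimage of $\Omega$ under the (continuous, linear) summation map restricted to the set where the substates are nonnegative and the initial-condition constraints of Eqs.~\ref{eq:inits}–\ref{eq:i_inits} hold; since $\Omega$ is open and the summation map is continuous, this preimage is relatively open, so differentiability statements make sense there. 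One should also note that the first condition in Eq.~\ref{eq:sf} forces $x_j(t)\ge 0$ and, whenever $x_j>0$, the decomposition factor $d_{j_k}\in[0,1]$ by Eq.~\ref{eq:cfactors}; the only genuinely delicate point — the behavior as $x_j\to 0$ — has already been dispatched in Eq.~\eqref{eq:noblowup}, since there $x_j=0$ forces $x_{j_k}=0$ for all $k$ and $f_{i_k j_k}$ extends by the value $0$ while remaining a product of smooth factors. With $\mho$ so defined, the stated continuity and $C^1$-in-${\mathbf x}$ properties follow immediately, completing the proof.
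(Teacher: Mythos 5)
Your proposal is correct, and it follows the same overall route as the paper's proof (inherit continuity and $C^1$-regularity of $f_{i_k j_k}$ from $f_{ij}$ through the flow decomposition formula of Eq.~\ref{eq:cons}, then extend to the subthroughflows and to $\ubar{f}_{i_k j_k}$ by the same argument). The one substantive difference lies in how the delicate point $x_j \to 0$ is treated, and here your version is the stronger one. The paper observes that $x_{j_k}(t) \le x_j(t)$, concludes that the decomposition factors $d_{j_k}(\mathbf{x}) = x_{j_k}/x_j$ are ``well-defined even if $x_j(t) \to 0$,'' and then asserts that the factors themselves are continuous and continuously differentiable on $\mathcal{I} \times \mho$; strictly speaking, a ratio of one substate to the sum of all substates of compartment $j$ has no limit as all of those substates tend to zero (the bound $d_{j_k} \in [0,1]$ controls its size, not its limit), so the paper's assertion holds only for the product $d_{j_k} f_{ij}$, not for $d_{j_k}$ alone. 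Your cancellation $f_{i_k j_k} = (x_{j_k}/x_j)\, q^x_{ij}\, x_j = q^x_{ij}(t,\bm{x})\, x_{j_k}$, which invokes the structural hypothesis $f_{ij} = q^x_{ij}\, x_j$ of Eq.~\ref{eq:sf} in an essential way the paper's proof does not, exhibits $f_{i_k j_k}$ as a product of a $C^1$ function (composed with the linear map $\mathbf{x} \mapsto \bm{x}$) with a coordinate projection, so the singularity is manifestly removable and the regularity claim is immediate. Your attention to the definition of $\mho$ as a suitable preimage of $\Omega$ also goes beyond what the paper records (it leaves the decomposed domain essentially implicit); the only caveat is that restricting to the nonnegative orthant makes $\mho$ relatively open rather than open, so differentiability at boundary points should be understood in the one-sided sense, but this affects the paper's formulation equally.
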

\begin{proof}

It is given that $f_{ij} (t,\bm{x})$ is continuous and continuously differentiable in $\bm{x}$ on $\mathcal{I} \times \Omega  \subset \mathbb{R} \times \mathbb{R}^n$. Note that, due to Eq.~\ref{eq:decomposition}, $x_{j_k}(t) \leq x_{j}(t)$. The decomposition factors $d_{j_k}({\mathbf x}) = x_{j_k}(t)/x_j(t)$ are, therefore, well-defined even if $x_j(t) \to 0$. Note also that the decomposition factors are continuous and continuously differentiable with respect to $x_{j_k}$ on $\mathcal{I} \times \mho$. Therefore,
\begin{equation}
\label{eq:proof_lips}
\begin{aligned}
f_{i_k j_k} (t,{\mathbf x}) = \frac{x_{j_k}(t)}{x_j(t)} \, f_{ij} (t,\bm{x})
\end{aligned}
\end{equation}
is also continuous and continuously differentiable in $x_{j_k}$ on $\mathcal{I} \times \mho$.

By construction, the subthroughflow functions, $\check{\tau}_{i_k} (t,{\mathbf x})$ and ${\hat{\tau}}_{i_k} (t,{\mathbf x})$, as well as the net subthroughflow function, ${\tau}_{i_k} (t,{\mathbf x})$, are linear combination of $f_{i_k j_k} (t,{\mathbf x})$, as formulated in Eq.~\ref{eq:matrix_subrates}. Therefore, they have the same properties as $f_{i_k j_k} (t,{\mathbf x})$. That is, they are continuous and continuously differentiable in ${\mathbf x}$ on domain $\mathcal{I} \times \mho \subset \mathbb{R} \times \mathbb{R}^{2n^2}$. The same arguments with the same conclusions are also valid for $\ubar{f}_{i_k j_k} (t,{\mathbf x})$, $\ubar {\check{\tau}}_{i_k} (t,{\mathbf x})$, $\ubar {\hat{\tau}}_{i_k} (t,{\mathbf x})$, and $\ubar {\tau}_{i_k} (t,{\mathbf x})$.
\end{proof}

\subsubsection{Subsystems}
\label{sec:subsys}

The governing system is analytically and explicitly decomposed into mutually exclusive and exhaustive {\em subsystems} through the dynamic system decomposition methodology composed of the state and flow decomposition components. The $k^{th}$ subcompartments of each compartment together with the corresponding $k^{th}$ substates, subflow rates, inputs, and outputs constitute the $k^{th}$ {\em subsystem}. The subsystems are running within the original system and have the same structure and dynamics as the system itself, except for their external inputs and initial conditions. These otherwise-decoupled mutually exclusive an exhaustive subsystems are coupled through the decomposition factors. By {\em mutual exclusiveness}, we mean that transactions are possible only within corresponding subcompartments of the same subsystem. By {\em exhaustiveness}, we mean that all generated subsystems sum to the entire system so partitioned. The initial subsystem is further decomposed into $n$ subsystems, as formulated in Appendix~\ref{sec:dic} (see Fig.~\ref{fig:sc} and~\ref{fig:fd}).

In Section~\ref{sec:csystems}, the system of governing equations are formulated for the original system, Eq.~\ref{eq:model_orgM1}. In what follows, we will similarly introduce the governing equations for each subsystem. The governing equations for the $k^{th}$ subsystem can be written in vector form as
\begin{equation}
\label{eq:subsystemsV}
\begin{aligned}
\dot {\mathbf x}_k(t)
& = \check{\bm{\tau}}_k(t,{\mathbf x}) - \hat{\bm{\tau}}_k(t,{\mathbf x}) \\
& = \left ( \mathbf{z}_k(t,\bm x) + F_k(t,\mathbf x) \, \bm{1} \right ) - \left ( \mathbf{y}_k(t,\mathbf x)  + F_k^T(t,\mathbf x) \, \bm{1} \right )
\end{aligned}
\end{equation}
for $k = 0,\ldots,n$. The initial conditions are ${\mathbf x}_0(t_0) = \bm{x}_0$ and ${\mathbf x}_k(t_0) = \mathbf{0}$ for $k \neq 0$.

The $k^{th}$ {\em inward} and {\em outward subthroughflow vectors}, $\check{\bm{\tau}}_k(t,{\mathbf x})$ and $\hat{\bm{\tau}}_k(t,{\mathbf x})$, for the $k^{th}$ subsystem given in Eq.~\ref{eq:subsystemsV} can be expressed as
\begin{equation}
\label{eq:matrix_subrates}
\begin{aligned}
{\check{\bm \tau}}_k (t,{\mathbf x}) & \coloneqq {\mathbf z}_k (t,\bm{x}) + F_k (t,{\mathbf x}) \, \mathbf{1} \\
& = {\mathcal Z}_k (t,\bm{x})  \, \mathbf{1} + F (t,\bm{x}) \, \mathcal{X}^{-1}(t) \, {\mathbf x}_k(t) , \\
{\hat{\bm \tau}}_k (t,{\mathbf x}) & \coloneqq {\mathbf y}_k (t,\mathbf{x}) + F^T_k (t,{\mathbf x}) \, \mathbf{1} \\
& = \mathcal{Y}(t,\bm{x}) \, \mathcal{X}^{-1}(t) \, \mathcal{X}_k(t) \, \mathbf{1} + \mathcal{X}_k(t) \, \mathcal{X}^{-1}(t) \, F^T(t,\bm{x}) \, \mathbf{1} \\
& = \left ( \mathcal{Y}(t,\bm{x}) + \diag \left ( F^T (t,\bm{x}) \, \mathbf{1} \right ) \right ) \, \mathcal{X}^{-1}(t) \, {\mathbf x}_k(t) \\
& = \mathcal{T}(t,\bm{x}) \, \mathcal{X}^{-1}(t) \, {\mathbf x}_k(t) .
\end{aligned}
\end{equation}
The $k^{th}$ {\em net subthroughflow rate} vector, ${\bm \tau}_k (t,{\mathbf x}) \coloneqq \left [ {\tau}_{1_k}(t,{\mathbf x}), \ldots, {\tau}_{n_k}(t,{\mathbf x}) \right ]^T$, then becomes
\begin{equation}
\label{eq:total_flows}
\begin{aligned}
{\bm \tau}_k (t,{\mathbf x}) = \, & {\check{\bm \tau}}_k (t,{\mathbf x}) - {\hat{\bm \tau}}_k (t,{\mathbf x})
= \, {\mathbf z}_k (t,\bm{x}) + A (t,\bm{x}) \, {\mathbf x}_k(t)
\end{aligned}
\end{equation}
where
\begin{equation}
\label{eq:matrix_A}
\begin{aligned}
A(t,\bm{x}) \coloneqq {\sf F}(t,\bm{x}) \, \mathcal{X}^{-1} (t)
& =  \left ( F (t,\bm{x}) - \mathcal{T}(t,\bm{x}) \right ) \, \mathcal{X}^{-1}(t) \\
& = Q^x (t,\bm{x}) - \mathcal{R}^{-1}(t,\bm{x}) ,
\end{aligned}
\end{equation}
$Q^x (t,\bm{x}) \coloneqq F (t,\bm{x}) \, \mathcal{X}^{-1}(t)$, and $\mathcal{R}^{-1}(t,\bm{x}) \coloneqq \mathcal{T}(t,\bm{x}) \, \mathcal{X}^{-1}(t)$, assuming $\mathcal{R}(t,\bm{x})$ is invertible. Note that, $A(t,\bm{x})$ is the difference of two matrices $Q^x (t,\bm{x})$ and $\mathcal{R}^{-1}(t,\bm{x})$ whose entries are the intercompartmental flow intensities and outward throughflow intensities per unit storage, respectively. We will, therefore, call $A(t,\bm{x})$ the {\em flow intensity matrix} per unit storage. It is sometimes called the {\em compartmental matrix} \cite{Jacquez1993}. The novel matrix measure $\mathcal{R}(t,\bm{x})$ introduced in this work will be called the {\em residence time matrix}, and the matrix measure $Q^x (t,\bm{x})$ will be called the {\em flow intensity matrix} per unit storage or the {\em flow distribution matrix} for system storages \cite{Coskun2017SCSA}.

For each fixed $j$, Eq.~\ref{eq:csc_dense} implies that
\begin{equation}
\label{eq:thr_dense}
\begin{aligned}
\frac{\hat{\tau}_{j}(t,\bm{x}) }{x_{j}(t)} = \frac{ \sum_{i=0}^{n} f_{i j}(t,\bm{x}) }{x_{j}(t)} = \frac{  \sum_{i=0}^{n} f_{i_k j_k}(t,{\mathbf x}) }{x_{j_k}(t)} = \frac{\hat{\tau}_{j_k}(t,{\mathbf x}) }{x_{j_k}(t)}
\end{aligned}
\end{equation}
for $k=0,\ldots,n$, where the denominators are nonzero. This equivalence between the outward throughflow and subthroughflow intensities given in the first and last equalities of Eq.~\ref{eq:thr_dense} can be expressed in the following matrix form:
\begin{equation}
\label{eq:thr_dense2}
\begin{aligned}
\mathcal{R}^{-1}(t,\bm{x}) = \mathcal{T}(t,\bm{x}) \, \mathcal{X}^{-1}(t)  =
\hat{\mathcal{T}}_k(t,{\bf x}) \, \mathcal{X}_k^{-1} (t)
\end{aligned}
\end{equation}
The residence time matrix, $\mathcal{R}(t,\bm{x})$, has a central role in the integration of various system components and the holistic analysis of the compartmental systems \cite{Coskun2017SCSA,Coskun2017SESM}.

Equations~\ref{eq:csc_dense} and~\ref{eq:thr_dense} also imply that
\begin{equation}
\label{eq:thr_dense3}
\begin{aligned}
\frac{{\hat{\tau}}_{j_k}(t,{\mathbf x}) }{{\hat{\tau}}_{j_\ell}(t,{\mathbf x}) } = \frac{{x}_{j_k}(t) }{{x}_{j_\ell}(t)} = \frac{{f}_{i_k j_k}(t,{\mathbf x}) }{{f}_{i_\ell j_\ell}(t,{\mathbf x})}
\end{aligned}
\end{equation}
for $k,\ell=0,\ldots,n$, where the denominators are nonzero. This relationship indicates the proportionality of the parallel subflows and the corresponding subthroughflows and substorages. By {\em parallel subflows}, we mean the intercompartmental flows that transit through different subcompartments of the same compartment along the same flow path at the same time. The flow path terminology is developed in Appendix~\ref{apxsec:subflow_paths}.

The $k^{th}$ {\em inward} and {\em outward subthroughflow matrices}, $\check{\mathcal{T}}_k(t,{\bf x}) \coloneqq \diag{ \left( {\check{\bm \tau}}_k (t,{\mathbf x}) \right)}$ and $\hat{\mathcal{T}}_k(t,{\bf x}) \coloneqq \diag{ \left( {\hat{\bm \tau}}_k (t,{\mathbf x}) \right)}$, can be formulated as
\begin{equation}
\label{eq:kth_thrmatA}
\begin{aligned}
\check{\mathcal{T}}_k(t,{\bf x})  & = {\mathcal Z}_k (t,\bm{x}) + \diag{ \left( F (t,\bm{x}) \, \mathcal{X}^{-1}(t) \, \mathcal{X}_k(t) \, \bm{1} \right) } , \\
\hat{\mathcal{T}}_k(t,{\bf x})  & = \mathcal{T}(t,\bm{x}) \, \mathcal{X}^{-1}(t)  \, \mathcal{X}_k (t) .
\end{aligned}
\end{equation}
Note that, using Eq.~\ref{eq:kth_thrmatA}, ${F}_k (t,{\mathbf x})$ defined in Eq.~\ref{eq:matrix_subratesM} can, alternatively, be written in terms of the system flows only:
\begin{equation}
\label{eq:TinoutA}
\begin{aligned}
{F}_k (t,{\bf x}) & = {F} (t,\bm{x}) \, \mathcal{T}^{-1}(t,\bm{x})  \, \hat{\mathcal{T}}_k(t,{\bf x})  .
\end{aligned}
\end{equation}
We will call $Q^\tau (t,\bm{x}) \coloneqq F (t,\bm{x}) \, \mathcal{T}^{-1}(t,\bm{x})$ the {\em flow intensity matrix} per unit throughflow or the {\em flow distribution matrix} for system flows in the context of the proposed methodology \cite{Coskun2017SCSA}. Note that the elements of $Q^\tau (t,\bm{x})$, $q^\tau_{ij}(t,\bm{x})$, are sometimes called {\em transfer coefficients}, {\em technical coefficients} in economics, or {\em stoichiometric coefficients} in chemistry.

We also define the {\em inward} and {\em outward subthroughflow matrices}, $\check{T}(t,{\mathbf x})$ and $\hat{T}(t,{\mathbf x})$, as the matrices whose $k^{th}$ columns are the $k^{th}$ inward and outward subthroughflow vectors, ${\check{\bm \tau}}_k(t,{\mathbf x})$ and ${\hat{\bm \tau}}_k(t,{\mathbf x})$, $k=1,\ldots,n$, respectively:
\begin{equation}
\label{eq:matrix_subratesM4}
\begin{aligned}
\check{T} (t,{\mathbf x}) & \coloneqq \left( \check{\bm{\tau}}_{i_k}(t,{\mathbf x}) \right) = \left [ \check{\bm \tau}_1 (t,{\mathbf x}) \, \ldots \, {\check{\bm \tau}}_n (t,{\mathbf x}) \right ] ,
\\
\hat{T} (t,{\mathbf x}) & \coloneqq  \left( \hat{\bm{\tau}}_{i_k}(t,{\mathbf x}) \right) = \left [ {\hat{\bm \tau}}_1 (t,{\mathbf x}) \, \ldots \, {\hat{\bm \tau}}_n (t,{\mathbf x}) \right ] .
\end{aligned}
\end{equation}
Using the relationships in Eq.~\ref{eq:matrix_subrates}, these subthroughflow matrices can be expressed in matrix form as
\begin{equation}
\label{eq:matrix_subrates1}
\begin{aligned}
\check{T} (t,{\mathbf x}) & = \mathcal{Z} (t,\bm{x}) + F (t,\bm{x}) \, \mathcal{X}^{-1} (t) \, X(t) , \\
\hat{T} (t,{\mathbf x}) & = \mathcal{T}(t,\bm{x}) \, \mathcal{X}^{-1}(t) \, X(t) .
\end{aligned}
\end{equation}
We also define the {\em net subthroughflow matrix}, $T(t,{\bf x})$, as
\begin{equation}
\label{eq:total_flowM}
\begin{aligned}
T(t,{\mathbf x}) & \coloneqq \check{T} (t,{\mathbf x}) - \hat{T} (t,{\mathbf x})
= \mathcal{Z} (t,\bm{x}) + A (t,\bm{x}) \, X(t) .
\end{aligned}
\end{equation}

The decomposition matrix $D({\mathbf x})$ can be expressed in terms of the subthroughflow functions, instead of the substate functions, as follows:
\begin{equation}
\label{eq:matrix_Pss0}
\begin{aligned}
D({\mathbf x}) = \mathcal{X}^{-1}(t) \, X(t) = \mathcal{T}^{-1}(t,\bm{x}) \, \hat{T} (t,{\mathbf x})
\end{aligned}
\end{equation}
due to Eq.~\ref{eq:matrix_subrates1}. Note that, the subthroughflow matrices can be written in the following various forms:
\begin{equation}
\label{eq:matrix_subrates1_2}
\begin{aligned}
\tilde{T} (t,{\mathbf x}) & = F(t,\bm{x}) \, D(\mathbf{x}) = Q^x(t,\bm{x}) \, X(t) = Q^\tau(t,\bm{x}) \, \hat{T}(t,\mathbf{x})  \\
\hat{T} (t,{\mathbf x}) & = \mathcal{R}^{-1}(t,\bm{x}) \, X(t)
\end{aligned}
\end{equation}
where $\tilde{T} (t,{\mathbf x}) \coloneqq \check{T} (t,{\mathbf x}) - \mathcal{Z} (t,\bm{x})$ will be called the {\em intercompartmental subthroughflow matrix}. Componentwise, the intercompartmental subthroughflow matrix,\break $\tilde{T} (t,{\mathbf x}) = (\tilde{\tau}_{i_k}(t,{\mathbf x}))$, can be expressed as $\tilde{\tau}_{i_k}(t,{\mathbf x}) = \check{\tau}_{i_k}(t,{\mathbf x}) - z_{i_k}(t,{\mathbf x})$. The $k^{th}$ decomposition matrix, $\mathcal{D}_k({\mathbf x})$, can also be expressed as
\begin{equation}
\label{eq:matrix_Pss0K}
\begin{aligned}
\mathcal{D}_k({\mathbf x}) = \mathcal{X}^{-1}(t) \, \mathcal{X}_k(t) = \mathcal{T}(t,\bm{x})^{-1} \, \hat{\mathcal{T}}_k (t,{\mathbf x}) ,
\end{aligned}
\end{equation}
using Eq.~\ref{eq:thr_dense2}, similar to the decomposition matrix formulated in Eq.~\ref{eq:matrix_Pss0}. The subsystem level counterparts of the system level relationships formulated in Eq.~\ref{eq:matrix_subrates1_2} then becomes
\begin{equation}
\label{eq:F_k2}
\begin{aligned}
{F}_k(t,{\bf x}) & = F(t,{\bm x}) \, \mathcal{D}_k({\bf x}) =  Q^x(t,{\bm x}) \, \mathcal{X}_k(t)  = Q^\tau(t,{\bm x}) \, \hat{\mathcal{T}}_k(t,{\bf x})  \\
\hat{\mathcal{T}}_k(t,{\bf x}) & = \mathcal{R}^{-1}(t,{\bm x}) \, \mathcal{X}_k(t) .
\end{aligned}
\end{equation}
Eq.~\ref{eq:matrix_subrates1_2} or~\ref{eq:F_k2} implies that the distribution matrices are related as
\begin{equation}
\label{eq:F_k2D}
\begin{aligned}
Q^\tau(t,{\bm x}) = Q^x(t,{\bm x}) \, \mathcal{R}(t,{\bm x}) .
\end{aligned}
\end{equation}

The flow and subthroughflow matrices can be integrated as follows:
\begin{equation}
\label{eq:tau}
\begin{aligned}
\bm{z}(t,\bm{x}) + F(t,\bm{x}) \, \mathbf{1} & = \sum_{k=0}^n {\mathbf z}_k (t,{\bf x}) + F_k (t,{\mathbf x}) \, \mathbf{1} = \sum_{k=0}^n {\check{\bm \tau}}_k (t,{\mathbf x}) = \check{\bm \tau} (t,\bm{x})
\\ &= {\check{\bm \tau}}_0 (t,{\mathbf x}) + \check{T} (t,{\mathbf x}) \, \mathbf{1} , \\
\bm{y}(t,\bm{x}) + F^T(t,\bm{x}) \, \mathbf{1} & = \sum_{k=0}^n {\mathbf y}_k (t,{\bf x}) + F^T_k (t,{\mathbf x}) \, \mathbf{1} = \sum_{k=0}^n {\hat{\bm \tau}}_k (t,{\mathbf x}) = {\hat{\bm \tau}} (t,\bm{x}) \\ & = {\hat{\bm \tau}}_0 (t,{\mathbf x}) +  \hat{T} (t,{\mathbf x}) \, \mathbf{1} . \nonumber
\end{aligned}
\end{equation}

The governing equations for the subsystems can then be written in vector form as
\begin{equation}
\label{eq:subsystemsV2}
\begin{aligned}
\dot {\mathbf x}_k(t) = \mathbf{z}_k(t,{\bf x}) + A(t,\bm{x}) \, \mathbf{x}_k(t)
\end{aligned}
\end{equation}
with the initial conditions ${\mathbf x}_0(t_0) = \bm{x}_0$ and ${\mathbf x}_k(t_0) = \mathbf{0}$ for $k = 1,\ldots,n$. The governing equations for the decomposed system can similarly be expressed in matrix form using the matrix functions introduced above as follows:
\begin{equation}
\label{eq:model_mat}
\begin{aligned}
{\dot X}(t) & = T(t,{\mathbf x}) = \check{T}(t,{\mathbf x}) - {\hat{T}}(t,{\mathbf x}) ,  \quad \, \, \, \, X (t_0) = \mathbf{0} , \\
\dot {\mathbf x}_0(t) & = \tau_0(t,{\mathbf x}) =  {\check{\tau}}_{0}(t,{\mathbf x}) - {\hat{\tau}}_{0}(t,{\mathbf x}) , \quad {\mathbf x}_0(t_0) = \bm{x}_0 .
\end{aligned}
\end{equation}
This system can also be expressed in terms of the flow intensity matrix, $A(t,\bm{x})$:\begin{equation}
\label{eq:model_M1}
\begin{aligned}
\dot{X}(t) &
= \mathcal{Z} (t,\bm{x}) + A(t,\bm{x}) \, X(t), \quad X(t_0) = \mathbf{0} , \\
\dot {\mathbf x}_0(t) & =  A(t,\bm{x}) \, \mathbf{x}_0(t) , \,  \quad \quad \quad \quad \quad {\mathbf x}_0(t_0) = \bm{x}_0 .
\end{aligned}
\end{equation}

The system decomposition methodology that yields the governing equations for each subsystem in vector form, Eq.~\ref{eq:subsystemsV2}, or for the entire system in matrix form, Eq.~\ref{eq:model_mat}, can be schematized as follows:
\begin{center}
\begin{tikzpicture}
  \matrix (m) [matrix of math nodes,row sep=0em,column sep=0em,minimum width=1em]
  {
   {} & \hspace{-3cm} { \dot{\bm x}(t) } = {{\bm \tau}(t,\bm{x})} & {} &  & \hspace{-1.6cm} \dot{\bf x}_k(t) = {\bm \tau}_k(t,{\mathbf x}), \, \, \,  k=0,\ldots,n
   \\
 {} &  \hspace{-3cm} \veq & {} & {} &  \hspace{-1cm} \veq
   \\
   {}
   &
 \hspace{-3cm}
 {
 \begin{bmatrix}
  \dot{x}_{1} \\
  \dot{x}_{2} \\
  \vdots  \\
  \dot{x}_{n}
 \end{bmatrix}
 }
 =
 { \begin{bmatrix}
  {\tau}_{1} \\
  {\tau}_{2} \\
  \vdots  \\
  {\tau}_{n}
 \end{bmatrix}
 }
 &
{}
&
{}
 &
  \hspace{-1cm}
 {
 \begin{bmatrix}
  \dot{x}_{1_k} \\
  \dot{x}_{2_k} \\
  \vdots  \\
  \dot{x}_{n_k}
 \end{bmatrix}
 }
=
{ \begin{bmatrix}
  {\tau}_{1_k} \\
  {\tau}_{2_k} \\
  \vdots  \\
  {\tau}_{n_k}
 \end{bmatrix}
 }

\\
   {} & {} & {} & {} & {}
\\
   {}
   &
 {
 \begin{bmatrix}
  \dot{x}_{1_1} & \cdots & \dot{x}_{1_n} \\
  \dot{x}_{2_1} & \cdots & \dot{x}_{2_n} \\
  \vdots  & \ddots & \vdots  \\
  \dot{x}_{n_1} & \cdots & \dot{x}_{n_n}
 \end{bmatrix}
 }
=
{
 \begin{bmatrix}
  {\tau}_{1_1} & \cdots & {\tau}_{1_n} \\
  {\tau}_{2_1} & \cdots & {\tau}_{2_n} \\
  \vdots & \ddots & \vdots  \\
  {\tau}_{n_1} & \cdots & {\tau}_{n_n}
 \end{bmatrix}
 }
&
{}
&
\hspace{-.5cm} \mbox{and} \hspace{-.4cm}
&
{
 \begin{bmatrix}
  \dot{x}_{1_0} \\
  \dot{x}_{2_0} \\
  \vdots  \\
  \dot{x}_{n_0}
 \end{bmatrix}
 }
 =
{ \begin{bmatrix}
  {\tau}_{1_0} \\
  {\tau}_{2_0} \\
  \vdots  \\
  {\tau}_{n_0}
 \end{bmatrix}
}
 \\
    { } &  {\veq} & {}  & {} & \veq  \\
   {} &  {  \dot{X}(t) = T(t,{\mathbf x}) } & {}& {} & \dot{\bf x}_0(t) = {\bm \tau}_0(t,{\mathbf x}) \\
};
  \path[-stealth, decorate]
    (m-3-2.east|-m-3-4) edge node [below] {system decomposition}
            node [above] {vector form} (m-3-4) ;
\hspace{-.4cm}
  \path[-stealth, decorate]
    (m-3-1) edge node [above,rotate=270] {matrix form}
            node [below,rotate=270] {system decomposition} (m-5-1) ;
\end{tikzpicture}
\end{center}

\subsubsection{Decomposed system}
\label{sec:ds}

The proposed methodology decomposes the original system into mutually exclusive and exhaustive subsystems by formulating the governing equations for each subsystem separately, as detailed in the previous sections. Consequently, for each compartment $i=1,\ldots,n$, there are $n$ governing equations for $n$ substates, $x_{i_k}(t)$, and $n$ equations for $n$ initial substates, ${\ubar x}_{i_k}(t)$. Therefore, there are $2n^2$ equations in total; $n^2$ of them are for the initial substates and the other $n^2$ equations are for the substates.

The governing equations for $x_{i_k}(t)$ and ${\ubar x}_{i_k}(t)$ can be written componentwise as
\begin{equation}
\label{eq:model_d1}
\begin{aligned}
\dot{x}_{i_k}(t) & = \left( z_{i_k}(t,{\mathbf x}) + \sum_{\substack{j=1}}^n f_{{i_k}{j_k}}(t,{\mathbf x})  \right) - \left( y_{i_k}(t,{\mathbf x}) + \sum_{\substack{j=1}}^n f_{{j_k}{i_k}}(t,{\mathbf x}) \right) \\
\end{aligned}
\end{equation}
with the initial conditions of $x_{i_k}(t_0) = 0$, and
\begin{equation}
\label{eq:model_d2}
\begin{aligned}
\dot{\ubar x}_{i_k}(t) & = \left( \sum_{\substack{j=1}}^n {\ubar f}_{i_k j_k}(t,{\mathbf x})  \right) - \left( {\ubar y}_{i_k}(t,{\mathbf x}) + \sum_{\substack{j=1}}^n {\ubar f}_{j_k i_k}(t,{\mathbf x}) \right)
\end{aligned}
\end{equation}
with the initial conditions of ${\ubar x}_{i_k} (t_0) = \delta_{ik} \, x_{i,0}$, for $i,k=1,\ldots,n$.

The governing equations for the decomposed system, Eqs.~\ref{eq:model_d1} and~\ref{eq:model_d2}, are already expressed in vector forms in Eqs.~\ref{eq:subsystemsV2} and~\ref{eq:initial_subsystemsV} as follows:
\begin{equation}
\label{eq:model_V}
\begin{aligned}
\dot {\mathbf x}_k(t) & = \mathbf{z}_k(t,{\bf x}) + A(t,\bm{x}) \, \mathbf{x}_k(t) ,
\quad {\mathbf x}_k(t_0) = \mathbf{0} \\
\dot{\ubar{\mathbf x}}_k(t) & = A(t,\bm{x}) \, \ubar{\mathbf{x}}_k(t) ,
\quad  \quad \quad \quad \quad \, \, \, \ubar{\mathbf x}_k(t_0) = x_{k,0} \, \mathbf{e}_k
\end{aligned}
\end{equation}
for $k = 1,\ldots,n$. Summing up the governing equations over $k$ separately for both the subsystems and initial subsystems formulated in Eq.~\ref{eq:model_V} yields
\begin{equation}
\label{eq:model_Vsum}
\begin{aligned}
\dot{\bar{\bm{x}}}(t) & = \bm{z}(t,\bm{x}) + A(t,\bm{x}) \, \bar{\bm{x}}(t) ,
\quad \, \, \bar{\bm{x}}(t_0) = \mathbf{0} , \\
\dot{\ubar{\bm x}}(t) & = A(t,\bm{x}) \, \ubar{\bm{x}}(t) ,
\quad  \quad \quad \quad \quad \, \, \, \ubar{\bm x}(t_0) = \bm{x}_{0} .
\end{aligned}
\end{equation}
This system enables the analysis of the evolution of external inputs and initial conditions within the system separately. Adding these two equations side by side gives back the original system, Eq.~\ref{eq:model_orgM2}, in the following form:
\begin{equation}
\label{eq:model_orgV}
\begin{aligned}
\dot {\bm x}(t) & = \bm{z}(t,\bm{x}) + A(t,\bm{x}) \, \bm{x}(t) , \quad {\bm x}(t_0) = \bm{x}_{0},
\end{aligned}
\end{equation}
as ${\sf F}(t,\bm{x}) \, \bm{1} = A(t,\bm{x}) \, \bm{x}(t) $. The partitioning in Eq.~\ref{eq:model_Vsum} could directly be obtained from the original system by defining a decomposition with two subsystems\textemdash one for external inputs and the other for the initial conditions.

The governing equations, Eqs.~\ref{eq:model_d1} and~\ref{eq:model_d2}, for the decomposed system are already expressed in matrix form in Eqs.~\ref{eq:model_M1} and~\ref{eq:model_M2} as follows:
\begin{equation}
\label{eq:model_M}
\begin{aligned}
\dot{X}(t) & = \mathcal{Z} (t,\bm{x}) + A(t,\bm{x}) \, X(t) ,
\quad X(t_0) = \mathbf{0} , \\
\dot{\ubar X}(t) & = A(t,\bm{x}) \, \ubar{X}(t) ,
\quad \quad \quad \quad \quad \, \, {\ubar X}(t_0) = \mathcal{X}_0  .
\end{aligned}
\end{equation}
Let a new matrix ${\sf X}(t)$ be defined, componentwise, as ${\sf X}_{ik} (t) \coloneqq \ubar{x}_{i_k}(t) + x_{i_k}(t)$. The governing equation for ${\sf X}(t) = \ubar{X}(t) + X(t)$ then becomes
\begin{equation}
\label{eq:model_MC}
\begin{aligned}
\dot{\sf X}(t) & = \mathcal{Z} (t,\bm{x}) + A(t,\bm{x}) \, {\sf X} (t) ,
\quad {\sf X}(t_0) = \mathcal{X}_0 .
\end{aligned}
\end{equation}
Note that ${\sf X}_{ik} (t)$ represents composite storage in compartment $i$ that is derived from the initial stock in compartment $k$, $x_{k,0}$, and the external input into compartment $k$, $z_k(t)$, during $[t_0,t]$.

\subsection{Fundamental matrix solutions}
\label{sec:nfm}

In this section, the {\em fundamental matrix solutions} will be introduced for nonlinear dynamic compartmental systems. They will be called the fundamental matrix solutions due to the common properties outlined in Thm.~\ref{thm:fund_M} with the fundamental matrix solutions to systems of linear ordinary differential equations. We will first show the existence and uniqueness of the decomposed system in vector form.
\begin{theorem}
\label{thm:eu_dcmp}
Let $(t_0,{\mathbf x}_0) \in \mathcal{I} \times \mho $. There exists a positive integer $r>0$ and an interval $\mathcal{I}' = (t_0 -r,t_0+r) \subset \mathcal{I}$ such that the governing equations for the decomposed system, Eq.~\ref{eq:model_V}, has a unique solution passing through $(t_0,{\mathbf x}_0)$ on $\mathcal{I}'$.
\end{theorem}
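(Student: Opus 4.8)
The plan is to read Eq.~\ref{eq:model_V} as a single system of $2n^2$ first‑order ODEs on the decomposed domain and apply the classical Picard–Lindel\"of theorem. First I would stack the substate and initial‑substate variables into the single vector $\mathbf{x}(t)\in\mathbb{R}^{2n^2}$ of Eq.~\ref{eq:barkappa} and collect the right‑hand sides of Eqs.~\ref{eq:model_d1} and~\ref{eq:model_d2} into one vector field $\mathbf{g}(t,\mathbf{x})\coloneqq[\ubar{\bm\tau}_1^T,\ldots,\ubar{\bm\tau}_n^T,\bm\tau_1^T,\ldots,\bm\tau_n^T]^T$, so that the decomposed system becomes $\dot{\mathbf{x}}(t)=\mathbf{g}(t,\mathbf{x})$ with $\mathbf{x}(t_0)=\mathbf{x}_0$. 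By construction each component of $\mathbf{g}$ is a finite linear combination of the subflow rate functions $f_{i_k j_k}$, $\ubar f_{i_k j_k}$ and of the input functions $z_{i_k}=\delta_{ik}\,z_k$.

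Next I would invoke Proposition~\ref{thm:exist_dcmp}: the functions $f_{i_k j_k}(t,\mathbf{x})$ and $\ubar f_{i_k j_k}(t,\mathbf{x})$, hence also the subthroughflows $\check{\tau}_{i_k},\hat{\tau}_{i_k},\tau_{i_k}$ and their underbarred counterparts, are continuous in $(t,\mathbf{x})$ and continuously differentiable in $\mathbf{x}$ on $\mathcal{I}\times\mho$, and the $z_{i_k}$ inherit these properties from $z_k$. Therefore $\mathbf{g}$ is continuous on $\mathcal{I}\times\mho$ and $C^1$ in $\mathbf{x}$ there. Since $\mho$ is open and $\mathcal{I}$ is an open interval, I can choose $a,b>0$ with the closed cylinder $C=\{(t,\mathbf{x}):|t-t_0|\le a,\ \|\mathbf{x}-\mathbf{x}_0\|\le b\}\subset\mathcal{I}\times\mho$. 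On the compact set $C$ the field $\mathbf{g}$ is bounded, $\|\mathbf{g}\|\le M$, and, having a continuous Jacobian in $\mathbf{x}$, it is Lipschitz in $\mathbf{x}$ uniformly in $t$ on $C$ with some constant $L$.

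The conclusion is then the standard fixed‑point argument: with $r\coloneqq\min\{a,b/M\}$ (shrunk further if one wants a contraction constant below $1$, or using the weighted sup‑norm on $C([t_0-r,t_0+r],\mathbb{R}^{2n^2})$), the Picard iteration $\mathbf{x}^{(m+1)}(t)=\mathbf{x}_0+\int_{t_0}^{t}\mathbf{g}(s,\mathbf{x}^{(m)}(s))\,ds$ converges uniformly to the unique solution of the integral equation equivalent to the initial value problem, which therefore is the unique solution of Eq.~\ref{eq:model_V} through $(t_0,\mathbf{x}_0)$ on $\mathcal{I}'=(t_0-r,t_0+r)$.

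The only delicate point — and the place where the real work sits — is the well‑definedness and regularity of $\mathbf{g}$ near points where a composite storage $x_j(t)$ vanishes, since the decomposition factors $d_{j_k}=x_{j_k}/x_j$ and the matrix $A(t,\bm x)=(F-\mathcal{T})\,\mathcal{X}^{-1}$ both formally involve division by $x_j$. This, however, is exactly what Proposition~\ref{thm:exist_dcmp} settles, using $x_{j_k}\le x_j$ and the cancellations in Eqs.~\ref{eq:cons} and~\ref{eq:thr_dense} that recast each such quotient as a bounded flow intensity, so here I would simply cite it. I would also note, as a consequence of uniqueness applied to the summed system Eqs.~\ref{eq:model_Vsum} and~\ref{eq:model_orgV}, that the solution components automatically satisfy $\bm x(t)=\ubar{\bm x}(t)+\bar{\bm x}(t)$, so the trajectory stays consistent with the original one and remains in $\mho$ on $\mathcal{I}'$.
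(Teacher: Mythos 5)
Your proposal is correct and follows essentially the same route as the paper: the paper's proof likewise reduces the claim to Proposition~\ref{thm:exist_dcmp} (continuity and continuous differentiability of the net subthroughflow functions on $\mathcal{I}\times\mho$) and then invokes Picard's local existence and uniqueness theorem. You have simply unpacked the standard Picard--Lindel\"of machinery (compact cylinder, Lipschitz bound, successive approximations) that the paper leaves implicit, and correctly identified the regularity of the decomposition factors as the one nontrivial ingredient.
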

\begin{proof}
The net subthroughflow functions ${\tau}_{i_k} (t,{\mathbf x})$ and $\ubar{\tau}_{i_k} (t,{\mathbf x})$ on the right hand side of the decomposed system, Eq.~\ref{eq:model_V}, are continuous and continuously differentiable in ${\mathbf x}$ on $\mathcal{I} \times \mho$, as shown in Proposition~\ref{thm:exist_dcmp}. The existence and uniqueness of the solutions to Eq.~\ref{eq:model_V} on $\mathcal{I}'$ is an immediate consequence of Picard's local existence and uniqueness theorem.
\end{proof}

The definitions of fundamental matrix solutions and their main properties are outlined in the following theorem.
\begin{theorem}
\label{thm:fund_M}
Let $X(t)$ and $\ubar X(t)$ be the matrix functions defined in Eqs.~\ref{eq:X} and~\ref{eq:X0}.
\begin{enumerate}
\item $X(t)$ and $\ubar X(t)$ are the unique matrix solutions to the decomposed, nonlinear system, Eq.~\ref{eq:model_M}. They will be called the fundamental matrix solutions of the system.
\item For any given $(t_0,\bm{x}_0) \in \mathcal{I} \times \Omega$, the unique solution to the original system, Eq.~\ref{eq:model_orgV}, is given by
\begin{equation}
\label{eq:gs}
{\bm x}(t) = \ubar{X}(t) \, \bm{1} + X(t) \, \bm{1} .
\nonumber
\end{equation}
That is, $\bm{x}(t)$ is the linear combination of the columns of $\ubar X(t)$ and $X(t)$, where all combination coefficients are $1$.
\item Let $x_{i,0}>0$ and $z_{i}(t,\bm{x})>0$, $t \in \mathcal{I}$, for all $i$. The column vectors of $\ubar X(t)$ and $X(t)$ are linearly independent vectors in $\mathbb{R}^n$. Therefore, both $\ubar X(t)$ and $X(t)$ are invertible matrices at any time $t \in \mathcal{I}$ under given conditions.
\end{enumerate}
\end{theorem}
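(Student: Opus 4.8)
The plan is to prove the three numbered statements in the order given, each feeding the next. \emph{Statement~(1):} by the construction in Section~\ref{sec:systemd}, the substate matrix $X(t)$ of Eq.~\ref{eq:X} and the initial-substate matrix $\ubar X(t)$ of Eq.~\ref{eq:X0} satisfy the decomposed matrix system Eq.~\ref{eq:model_M} with $X(t_0)=\bm 0$ and $\ubar X(t_0)=\mathcal X_0$. Read columnwise, Eq.~\ref{eq:model_M} is the decomposed vector system whose right-hand sides are continuous and continuously differentiable in the substate variables by Proposition~\ref{thm:exist_dcmp}, so Theorem~\ref{thm:eu_dcmp} furnishes a unique solution; it must therefore be the pair $(X(t),\ubar X(t))$, which is the precise sense in which these are \emph{the} fundamental matrix solutions.

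\emph{Statement~(2):} put $\bm y(t):=\ubar X(t)\,\bm 1+X(t)\,\bm 1$. Adding the two equations in Eq.~\ref{eq:model_M}, multiplying on the right by $\bm 1$, and using $\mathcal Z(t,\bm x)\,\bm 1=\bm z(t,\bm x)$, gives $\dot{\bm y}(t)=\bm z(t,\bm x)+A(t,\bm x)\,\bm y(t)$ with $\bm y(t_0)=\mathcal X_0\,\bm 1=\bm x_0$. Since the state reconstructed inside the decomposed system is exactly $\bm x=\ubar X\,\bm 1+X\,\bm 1=\bm y$, this $\bm y$ solves the original problem Eq.~\ref{eq:model_orgV}, and uniqueness for that problem (from the hypotheses of Section~\ref{sec:csystems}) forces $\bm y(t)=\bm x(t)$. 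From here on I treat $A(t):=A(t,\bm x(t))$ and $\mathcal Z(t):=\diag(\bm z(t))$ as \emph{known} continuous matrix functions on $\mathcal I$, well defined because $x_{i,0}>0$ makes $x_i(t)>0$ for every $i$.

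\emph{Statement~(3)} is the substantive one. Let $\Phi(t,s)$ be the state-transition matrix of $\dot{\bm u}=A(t)\,\bm u$; it is invertible for all $t,s$, with $\det\Phi(t,s)=\exp\!\big(\int_s^t\mathrm{tr}\,A(r)\,dr\big)$. The $\ubar X$ part of Eq.~\ref{eq:model_M} is the linear problem $\dot{\ubar X}=A(t)\,\ubar X$, $\ubar X(t_0)=\diag(\bm x_0)$, so $\ubar X(t)=\Phi(t,t_0)\,\diag(\bm x_0)$ is a product of invertible matrices ($\Phi(t,t_0)$ always, $\diag(\bm x_0)$ because each $x_{i,0}>0$), hence has linearly independent columns for every $t$. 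The $X$ part is $\dot X=\mathcal Z(t)+A(t)\,X$, $X(t_0)=\bm 0$, giving $X(t)=\int_{t_0}^t\Phi(t,s)\,\mathcal Z(s)\,ds$. Jacobi's formula, together with $X\,\mathrm{adj}(X)=\det(X)\,I$ and the cyclic property of the trace, yields $\frac{d}{dt}\det X(t)=\mathrm{tr}\,A(t)\,\det X(t)+\mathrm{tr}\!\big(\mathrm{adj}(X(t))\,\mathcal Z(t)\big)$, and since $\mathcal Z(t)$ is diagonal the inhomogeneous term equals $h(t):=\sum_{i=1}^{n}z_i(t)\,M_{ii}(t)$, where $M_{ii}(t)$ is the $i$-th principal minor of $X(t)$. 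Hence $\det X(t)=\int_{t_0}^t\exp\!\big(\int_s^t\mathrm{tr}\,A(r)\,dr\big)\,h(s)\,ds$; because $\Phi(t,s)\to I$ as $s\to t$, one has $X(t)=(t-t_0)\,\diag(\bm z(t_0))+o(t-t_0)$, so each $M_{ii}$, hence $h$, is strictly positive just to the right of $t_0$. Granting $h\ge 0$ on all of $\mathcal I$, it follows that $\det X(t)>0$ for every $t>t_0$, so $X(t)$ is invertible with linearly independent columns there --- the statement being read for $t>t_0$, since $X(t_0)=\bm 0$.

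The main obstacle is exactly the nonnegativity of $h$, that is, of the principal minors $M_{ii}(t)$, throughout $\mathcal I$. The compartmental structure makes $A(t)$ a Metzler matrix with nonpositive column sums, so $\Phi(t,s)\ge 0$ entrywise with strictly positive diagonal and $X(t)$ is entrywise nonnegative with strictly positive diagonal; this alone does not force nonnegative principal minors, so the proof must also use that the $k$-th column of $X(t)$ is the state of a subsystem fed only at compartment $k$. I would prove $M_{ii}(t)\ge 0$ by induction on $n$, identifying a suitable modification of the $(i,i)$ principal submatrix of $X(t)$ with the fundamental matrix of a lower-dimensional conservative compartmental system; an alternative is to show directly that $X(t)^{-1}$ is a nonsingular $M$-matrix, with nonpositive off-diagonal entries as in the steady-state identity $X^{-1}=\mathcal Z^{-1}\big(\mathcal R^{-1}-Q^x\big)$, by a monotonicity argument on the decomposed subflows. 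That positivity lemma is where essentially all of the effort goes.
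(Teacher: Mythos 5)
Parts (1) and (2) of your proposal follow the paper's own route, and your treatment of $\ubar X(t)$ in part (3) is sound: once you freeze $A(t):=A(t,\bm x(t))$ along the (already unique) trajectory, $\ubar X(t)=\Phi(t,t_0)\,\diag(\bm x_0)$ is a product of invertible matrices. This is an acceptable, slightly more explicit variant of the paper's argument, which instead assumes $\ubar X(t_1)\ubar{\bm c}=\bm 0$, propagates $\ubar{\bm\alpha}=\ubar X\ubar{\bm c}$ back to $t_0$ by uniqueness of $\dot{\ubar{\bm\alpha}}=A\ubar{\bm\alpha}$, and concludes $\ubar{\bm c}=\bm 0$ from the invertibility of $\mathcal X_0$.

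The gap is in the invertibility of $X(t)$, and you have flagged it yourself. Your Jacobi-formula computation reduces $\det X(t)>0$ to the nonnegativity of $h(t)=\sum_i z_i(t)M_{ii}(t)$, i.e.\ of the principal $(n-1)\times(n-1)$ minors of $X(t)$ throughout $\mathcal I$; entrywise nonnegativity of $X$ with positive diagonal does not give this, and the inductive repair you sketch is itself problematic because deleting row and column $i$ of $X(t)$ does not produce the fundamental matrix of a lower-dimensional compartmental system (the surviving compartments still exchange flow \emph{through} compartment $i$), so the ``positivity lemma'' on which everything hinges is left unproven. The paper avoids determinants entirely: assuming $X(t_1)\bm c=\bm 0$ with $\bm c\neq\bm 0$, it sets $\bm\alpha(t)=X(t)\bm c$, which satisfies $\dot{\bm\alpha}=\mathcal Z\bm c+A\bm\alpha$ with $\bm\alpha(t_0)=\bm\alpha(t_1)=\bm 0$; choosing $i$ with $c_i>0$ gives $\dot\alpha_i(t_0)=c_iz_i(t_0)>0$ and $\dot\alpha_i(t_1)=c_iz_i(t_1)>0$ (the $A\bm\alpha$ term vanishes at both endpoints), so $\alpha_i$ leaves $0$ increasing and returns to $0$ increasing, forcing an interior downward zero crossing that contradicts $z_i>0$. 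To complete your write-up you would either need to supply the minor-positivity lemma (essentially that $X(t)^{-1}$ is an $M$-matrix, which is a substantial result in its own right) or replace the determinant argument with a null-vector contradiction of the paper's type.
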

\begin{proof}
The proof for each item of the theorem is given below.
\begin{enumerate}
\item The existence and uniqueness of the solution to the decomposed system in vector form, Eq.~\ref{eq:model_V}, is shown in Thm.~\ref{thm:eu_dcmp}. The existence and uniqueness of the system in matrix form, Eq.~\ref{eq:model_M}, follows those of the system Eq.~\ref{eq:model_V}, by a column-wise comparison of both sides of the matrix equation, Eq.~\ref{eq:model_M}.

\item By the principle of nonlinear decomposition stated in Thm.~\ref{thm:ndp},
\begin{equation}
\label{eq:fb}
{\bm x}(t) = {\sf X} (t) \, \bm{1} = \ubar{X}(t) \, \bm{1} + X(t) \, \bm{1} = \sum_{k=1}^n  \ubar{\mathbf{x}}_{k}(t) + \sum_{k=1}^n \mathbf{x}_{k}(t)
\end{equation}
is a solution of the original system Eq.~\ref{eq:model_orgV}. The uniqueness of this solution follows the uniqueness of the decomposed system Eq.~\ref{eq:model_V} as shown in part (1) of this theorem.

\item Let $\ubar{\mathbf{x}}_i(t)$ and $\mathbf{x}_i(t)$ be the solutions of decomposed system Eq.~\ref{eq:model_V}. We would like to show that, for each fixed $t$, the set of vectors $\{ \ubar{\mathbf{x}}_1(t), \ldots, \ubar{\mathbf{x}}_n(t) \}$ and $\{\mathbf{x}_1(t), \ldots, \mathbf{x}_n(t) \}$ in ${\mathbb R^n }$ are linearly independent.

\quad We will first show that $\{\mathbf{x}_1(t), \ldots, \mathbf{x}_n(t) \}$ is linearly independent set in ${\mathbb R^n }$ for each fixed $t \in \mathcal{I}$. Suppose that, there exists a $t_1 \in \mathcal{I}$ such that the column vectors in  $\{ {\mathbf{x}}_1(t_1), \ldots, {\mathbf{x}}_n(t_1) \}$ are linearly dependent. There exists then a combination constants ${c}_1,\ldots,{c}_n$, not all zero, such that
\begin{equation}
\label{eq:proof1}
\begin{aligned}
\bm{0} & = {c}_1 \, {\mathbf{x}}_1(t_1) + \ldots + {c}_n \, {\mathbf{x}}_n(t_1) = X(t_1) \, \bm{c}
\end{aligned}
\end{equation}
where $\bm{c} = (c_1,\ldots,c_n)^T \in \mathbb{R}^n$.
Let
\begin{equation}
\label{eq:proof3}
\begin{aligned}
{\bm{\alpha}}(t) & \coloneqq {c}_1 \, {\mathbf{x}}_1(t) + \ldots + {c}_n \, {\mathbf{x}}_n(t) = {X}(t) \, {\bm{c}}, \quad t \in \mathcal{I}.
\end{aligned}
\end{equation}
Therefore, $\dot {\bm{\alpha}}(t) = \dot{X}(t) \, \bm{c}$. Due to the governing equation for $X(t)$, Eq.~\ref{eq:model_M}, and Eq.~\ref{eq:proof3}, we have
\begin{equation}
\label{eq:proof5}
\begin{aligned}
\dot {\bm{\alpha}}(t)  =  \mathcal{Z} (t,\bm{x}) \, \bm{c} + A(t,\bm{x}) \, \bm{\alpha}(t), \quad \bm{\alpha}(t_0) = \bm{0}.
\end{aligned}
\end{equation}
Equation~\ref{eq:proof1} implies that $\bm{\alpha}(t_1) = \bm{0}$. Without loss of generality, assume that $c_i >0$ for some $i$. We then have
\begin{equation}
\label{eq:proof_1}
\begin{aligned}
{\alpha}_i(t_0) & = 0, \quad \dot{\alpha}_i(t_0) = c_i \, {z_i}(t_0,\bm{x}) >0, \quad \mbox{and} \\ {\alpha}_i(t_1) & = {0}, \quad \dot{\alpha}_i(t_1) = c_i \, {z_i}(t_1,\bm{x}) >0,
\end{aligned}
\end{equation}
as $z_i(t,\bm{x}) >0$, $\forall i$. Since $\bm{\alpha}(t)$ is a differentiable and, therefore, is a continuous function, Eq.~\ref{eq:proof_1} implies that there exists at least one $t^* \in (t_0,t_1)$ such that ${\alpha}_i(t^*) = 0$ and $\dot{\alpha}_i(t^*) <0$. Due to Eq.~\ref{eq:proof5}, this result implies that $z_i(t^*,\bm{x})<0$. This contradiction completes the first part of the proof.

\quad Now, we would like to show that $\{ \ubar{\mathbf{x}}_1(t), \ldots, \ubar{\mathbf{x}}_n(t) \}$ is a linearly independent set in ${\mathbb R^n }$ for each fixed $t \in \mathcal{I}$. The condition for this case is that $x_{i,0} > 0$, $\forall i$. Suppose now that at the same $t_1 \in \mathcal{I}$, the column vectors in $\{ \ubar{\mathbf{x}}_1(t_1), \ldots, \ubar{\mathbf{x}}_n(t_1) \}$ are linearly dependent in $\Omega$. There then exists constants $\ubar{c}_1,\ldots,\ubar{c}_n$, not all zero, such that
\begin{equation}
\label{eq:proof4}
\begin{aligned}
\bm{0} & = \ubar{c}_1 \, \ubar{\mathbf{x}}_1(t_1) + \ldots + \ubar{c}_n \, \ubar{\mathbf{x}}_n(t_1) = \ubar{X}(t_1) \, \ubar{\bm{c}}
\end{aligned}
\end{equation}
where $\ubar{\bm{c}} = (\ubar{c}_1,\ldots,\ubar{c}_n)^T \in \mathbb{R}^n$. Let
\begin{equation}
\label{eq:proof2}
\begin{aligned}
\ubar{\bm{\alpha}}(t) & \coloneqq \ubar{c}_1 \, \ubar{\mathbf{x}}_1(t) + \ldots + \ubar{c}_n \, \ubar{\mathbf{x}}_n(t) = \ubar{X}(t) \, \ubar{\bm{c}}, \quad t \in \mathcal{I}.
\nonumber
\end{aligned}
\end{equation}
This implies that $\dot {\ubar{\bm{\alpha}}}(t) = \dot{\ubar{X}}(t) \, \ubar{\bm{c}}$. Due to the governing equation for $\ubar{X}$, Eq.~\ref{eq:model_M}, and Eq.~\ref{eq:proof4}, we have
\[ \dot {\ubar{\bm{\alpha}}}(t)  = A(t,\bm{x}) \, \ubar{\bm{\alpha}}(t) , \quad \ubar{\bm{\alpha}}(t_1) = \bm{0} .\]
Due to the uniqueness theorem, Thm.~\ref{thm:eu_dcmp}, $\ubar{\bm{\alpha}}(t) = \bm{0}$, $t \in \mathcal{I}$. In particular,
\[ \ubar{\bm{\alpha}}(t_0) = {\ubar{X}}(t_0) \, \ubar{\bm{c}} = \mathcal{X}_0 \, \ubar{\bm{c}} = \bm{0} \quad \Rightarrow \quad \ubar{\bm{c}} = \bm{0} , \]
as $x_{i_0} > 0$, $\forall i$, which is a contradiction.

\quad These two contradictions for each part of the decomposed system, Eq.~\ref{eq:model_M}, indicate that neither the set of the column vectors of $\ubar{X}(t)$ nor that of ${X}(t)$ can be linearly dependent at any time $t_1 \in \mathcal{I}$. Therefore, the column vectors of $\ubar{X}(t)$ and ${X}(t)$ form linearly independent sets, and, consequently, the matrices are invertible for all $t \in \mathcal{I}$.
\end{enumerate}
\end{proof}

\subsection{Nonlinear decomposition principle}
\label{sec:ndp}

We will state the {\em nonlinear decomposition principle} for dynamic compartmental systems in the following theorem. It essentially asserts that the solution for each subsystem also solves the original system, so is any arbitrary combination of these solutions as specified in the theorem.

\begin{theorem}
\label{thm:ndp}
Let $\mathbf{x}_k(t)$ and $\ubar{\mathbf{x}}_k(t)$ be the respective solutions on $\mho$ to the $k^{th}$ subsystem and initial subsystem of the decomposed system with the corresponding external inputs and initial conditions specified in Eq.~\ref{eq:model_V}. The following combination of the vector functions
\begin{equation}
\label{eq:gs2}
{\bm x}(t) = \sum_{k=1}^n \alpha_k \, {\mathbf{x}}_{k}(t) + \beta_k \, \ubar{\mathbf{x}}_{k}(t), \quad \alpha_k,\beta_k \in \{0,1\} ,
\end{equation}
then is a solution to the original system on $\Omega$ with the following external inputs and initial conditions:
\begin{equation}
\label{eq:gsic}
\begin{aligned}
{\bm z}(t,\bm{x}) &= \sum_{k=1}^n \alpha_k \, \mathbf{z}_k(t,{\bf x}) = \sum_{k=1}^n \alpha_k \, z_k(t,\bm{x}) \, \mathbf{e}_k  \quad \mbox{and} \\
 \bm{x}_0 &= {\bm x}(t_0) = \sum_{k=1}^n \beta_k \, \ubar{\mathbf{x}}_{k}(t_0) =  \sum_{k=1}^n \beta_k \, x_{k,0} \, \mathbf{e}_k .
\end{aligned}
\end{equation}
\end{theorem}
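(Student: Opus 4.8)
The plan is to reduce the statement to ordinary linear superposition by observing that, once the full state $\bm{x}(t)=\sum_{k=0}^n\mathbf{x}_k(t)$ of the original system is fixed, the governing equations for the decomposed system, Eq.~\ref{eq:model_V}, are \emph{linear} in the substate and initial‑substate variables: both lines share the common time‑dependent coefficient matrix $A(t,\bm{x})$, the forcing enters only through the compartment‑wise term $\mathbf{z}_k(t,\bm{x})$ of Eq.~\ref{eq:Z}, and the two families of equations differ only in whether the forcing or the initial datum is nonzero. The decomposition principle is then the assertion that any $\{0,1\}$‑combination of solutions of a family of linear systems that all carry the same coefficient matrix is again a solution of a system of the same form, with forcing and initial datum equal to the corresponding combinations.

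Concretely, I would proceed as follows. First, the hypothesis (together with Thm.~\ref{thm:eu_dcmp}) supplies the substate solutions $\mathbf{x}_k(t)$ and initial‑substate solutions $\ubar{\mathbf{x}}_k(t)$ on a common interval, so the combination $\bm{x}(t)=\sum_{k=1}^n\alpha_k\mathbf{x}_k(t)+\beta_k\ubar{\mathbf{x}}_k(t)$ of Eq.~\ref{eq:gs2} is well defined; since its entries are nonnegative and dominated by those of the full state (cf. Eq.~\ref{eq:decomposition}), it remains in $\Omega$. Next, differentiate this combination and substitute the two lines of Eq.~\ref{eq:model_V} term by term: the $\alpha_k$‑terms contribute $\alpha_k\bigl(\mathbf{z}_k(t,\bm{x})+A(t,\bm{x})\mathbf{x}_k(t)\bigr)$ and the $\beta_k$‑terms contribute $\beta_k\,A(t,\bm{x})\ubar{\mathbf{x}}_k(t)$, so that $A(t,\bm{x})$ factors out of the whole sum to give $\dot{\bm{x}}(t)=\sum_{k=1}^n\alpha_k\mathbf{z}_k(t,\bm{x})+A(t,\bm{x})\,\bm{x}(t)$. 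Finally, identify $\sum_{k=1}^n\alpha_k\mathbf{z}_k(t,\bm{x})$ with $\bm{z}(t,\bm{x})$ of Eq.~\ref{eq:gsic} via the input decomposition Eq.~\ref{eq:Z}, and evaluate the combination at $t_0$ using the initial conditions in Eq.~\ref{eq:model_V} (equivalently Eqs.~\ref{eq:inits} and~\ref{eq:i_inits}): because $\mathbf{x}_k(t_0)=\mathbf{0}$ and $\ubar{\mathbf{x}}_k(t_0)=x_{k,0}\,\mathbf{e}_k$, only the $\beta_k$‑terms survive and $\bm{x}(t_0)=\sum_{k=1}^n\beta_k\,x_{k,0}\,\mathbf{e}_k$, matching Eq.~\ref{eq:gsic}. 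This exhibits $\bm{x}(t)$ as a solution of the original system in the form Eq.~\ref{eq:model_orgV} with the stated data; if desired, its uniqueness for those data follows from the regularity of the right‑hand side recorded in Section~\ref{sec:csystems}, which is what Thm.~\ref{thm:fund_M}(2) will use.

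The point that needs the most care — and the only real obstacle — is the role of the argument $\bm{x}$ in $A(t,\bm{x})$ and in $\mathbf{z}_k(t,\bm{x})$. Since the system is nonlinear, the equations Eq.~\ref{eq:model_V} are linear in the substates only \emph{relative to} the background trajectory at which these coefficients are frozen, namely the full state $\bm{x}(t)=\sum_{k=0}^n\mathbf{x}_k(t)$ carrying all inputs and all initial stocks (cf. Eqs.~\ref{eq:decom_sum} and~\ref{eq:model_Vsum}). When some $\alpha_k$ or $\beta_k$ vanish, the resulting combination no longer reproduces that background state, so the proof must state explicitly that ``the original system'' in the conclusion is Eq.~\ref{eq:model_orgV} with $A$ and the nonlinear part of the input still evaluated along this fixed full trajectory; the choice $\alpha_k=\beta_k=1$ then recovers the genuine self‑consistent solution and reproduces the passage from Eq.~\ref{eq:model_Vsum} to Eq.~\ref{eq:model_orgV}, while any other choice gives a solution of the correspondingly reduced input/initial‑condition problem against the same background. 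Spelling out this dependence — and checking at each step that the combination stays in $\Omega$ — is essentially all the work; the superposition identity itself is a one‑line manipulation of Eq.~\ref{eq:model_V}.
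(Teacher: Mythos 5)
Your superposition computation---differentiate the combination, use that both lines of Eq.~\ref{eq:model_V} share the common coefficient matrix $A(t,\bm{x})$, and collect the forcing terms and initial data---is exactly the mechanism the paper uses (there it appears as ``multiply Eq.~\ref{eq:model_MC} by $\mathbf{1}$,'' which is the same column-sum identity). The divergence, and the genuine gap, lies in how you resolve the self-referential argument of $A(t,\bm{x})$ when some $\alpha_k$ or $\beta_k$ vanish. You resolve it by freezing $A$ and $\mathbf{z}_k$ along the full background trajectory carrying all inputs and all initial stocks, and you concede that for coefficient choices other than all ones the combination then solves only a frozen-coefficient linear system ``against the same background.'' That is a different---and strictly weaker---statement than the theorem's conclusion, which asserts that the combination solves the original \emph{nonlinear} system, Eq.~\ref{eq:model_orgV}, with the reduced data of Eq.~\ref{eq:gsic}; in that system $A$ must be evaluated at the combination's own total state, not at the full one, and these two trajectories differ whenever any coefficient is zero.

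The paper closes this gap differently, and you would need to as well: read the hypothesis as taking $\mathbf{x}_k(t)$ and $\ubar{\mathbf{x}}_k(t)$ to be the subsystem solutions of the decomposition of the system that already carries the data of Eq.~\ref{eq:gsic}. Then for every $k$ with $\alpha_k=0$ the $k$th subsystem has zero input and zero initial condition, and for every $k$ with $\beta_k=0$ the $k$th initial subsystem has zero initial condition and no input; by the uniqueness in Thm.~\ref{thm:eu_dcmp} those subsystem solutions are identically $\bm{0}$---the subsystems are null. Consequently the $\{0,1\}$-combination in Eq.~\ref{eq:gs2} coincides with the \emph{full} sum $\sum_{k}\mathbf{x}_k(t)+\ubar{\mathbf{x}}_k(t)$ of that decomposition, which is the total state of the reduced system, so $A(t,\bm{x})$ is evaluated self-consistently at $\bm{x}(t)$ itself, and the passage from Eq.~\ref{eq:model_Vsum} to Eq.~\ref{eq:model_orgV} delivers the nonlinear original system with the stated inputs and initial conditions. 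With that reinterpretation your one-line superposition identity is the whole remaining proof; without it, the statement you prove is not the one in the theorem. (Your side claim that the combination stays in $\Omega$ because its entries are dominated by those of the full state is also not automatic, since $\Omega$ is merely a domain, but this is minor next to the coefficient issue.)
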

\begin{proof}
\label{prf:ndp}
Note that, if $\alpha_k=0$ or $\beta_k=0$ for some $k$, the corresponding solutions to Eq.~\ref{eq:model_V}, with the conditions given in Eq.~\ref{eq:gsic}, become ${\mathbf{x}}_{k}(t)=\bm{0}$ or $\ubar{\mathbf{x}}_{k}(t)=\bm{0}$, respectively. This is because of the fact that the subsystems are driven either by external inputs or initial stocks. Therefore, if there is no external input (source term) for a subsystem or initial stock (initial condition) for an initial subsystem, the corresponding subsystem becomes null\textemdash the substate variables and the corresponding subflows rates for that subsystem or initial subsystem become zero.

Multiplying both sides of governing system Eq.~\ref{eq:model_MC} by $\mathbf{1}$ yields the original system in the form of Eq.~\ref{eq:model_orgV}, because of the relationship in Eq.~\ref{eq:fb}. Consequently, if $\mathbf{x}_k(t)$ and $\ubar{\mathbf{x}}_k(t)$ are the respective solutions to the $k^{th}$ subsystem and initial subsystem of the decomposed system, Eq.~\ref{eq:model_V}, on $\mho$, $\bm{x}(t)$ is the solution to the original system, Eq.~\ref{eq:model_orgM1}, on $\Omega$.
\end{proof}

This nonlinear decomposition principle corresponds to the superposition principle for linear ordinary differential equations. It is in the sense that, the solution to a nonlinear compartmental system can be decomposed into {\em subsolutions}, each of which, as well as any arbitrary combination of them as specified in Thm.~\ref{thm:ndp}, solves the original system.

\subsection{Subsystem decomposition methodology}
\label{sec:dsd}

We will introduce the {\em dynamic subsystem decomposition} methodology in this section for further partitioning or segmentation of subsystems along a given set of mutually exclusive and exhaustive subflow paths. The subsystem decomposition methodology dynamically decomposes arbitrary composite intercompartmental flows and storages into the constituent transient subflow and substorage segments along given subflow paths. Therefore, the subsystem decomposition determines the distribution of arbitrary intercompartmental flows and the organization of the associated storages generated by these flows within the subsystems. In other words, the subsystem partitioning enables tracking the evolution of arbitrary intercompartmental flows and storages within and monitoring their spread throughout the system.

The dynamic subsystem decomposition methodology will be formulated below using the {\em directed subflow path} terminology introduced in Appendix~\ref{apxsec:dsd}. The {\em natural subsystem decomposition} of each subsystem then yields a mutually exclusive and exhaustive decomposition of the entire system. We will first introduce the transient flows and storages. They will be used in the formulation of the \texttt{diact} flows and storages in the next section.

\subsubsection{Transient flows and storages}
\label{apxsec:transient_flows}

Along a given subflow path $p^w_{n_k j_k}= i_k \mapsto j_k \to \ell_k \to n_k$, the {\em transient inflow} at subcompartment ${\ell_k}$, $f^{w}_{\ell_k j_k i_k}(t)$, generated by the local input from $i_k$ to ${j_k}$ during $[t_1,t]$, $t_1 \geq t_0$, is the input segment that is transmitted from $j_k$ to ${\ell_k}$ at time $t$. Similarly, the {\em transient outflow} generated by the transient inflow at ${\ell_k}$ during $[t_1,t]$, $f^{w}_{n_k \ell_k j_k}(t)$, is the inflow segment that is transmitted from ${\ell_k}$ to the next subcompartment, ${n_k}$, along the path at time $t$. The associated {\em transient substorage} in subcompartment ${\ell_k}$ at time $t$, $x^{w}_{n_k \ell_k j_k}(t)$, is then the substorage segment governed by the transient inflow and outflow balance during $[t_1,t]$  (see Fig.~\ref{fig:subsystemp}).
\begin{figure}[t]
\begin{center}
\begin{tikzpicture}
   \draw[very thick, fill=gray!5, draw=black] (-.05,-.05) rectangle node(R1) [pos=.5] { } (2.1,2.1) ;
   \draw [very thick, fill=blue!10, draw=blue] plot [smooth] coordinates {(-0.05,1) (.5,.5) (1,.8) (1.5,1) (2.1,1.5)};
   \draw [fill=blue!10, draw=none]  (-0.05,1) -- (2.1,1.5) -- (2.1,2.1) -- (-.05,2.1) ;
   \draw[very thick, draw=black, text=blue] (-.05,-.05) rectangle node(R1) [pos=.5, yshift=.6cm] { $x^w_{n_k \ell_k j_k}$ } (2.1,2.1) ;
   \draw [thick, fill=gray!5, draw=black]  (-2,1.2) -- (-0.35,1.2) -- (-0.2,1.4) -- (-.35,1.6) -- (-2,1.6) ;
   \draw [thick, fill=gray!5, draw=black]  (2.3,.9) -- (4,.9) -- (4.3,1.7) -- (4,2.1) -- (2.3,2.1) ;
   \draw [thick, fill=gray!5, draw=black]  (2.3,1.6) -- (4,1.6) -- (4.15,1.8) -- (4,2) -- (2.3,2) ;
    \node (x) at (.9,2.5) {${x}_{\ell_k}$};
    \node (x) at (-1.1,.8) {$f_{\ell_k j_k}$};
   \draw [very thick, -stealth, draw=blue]  (-2,1.4) -- (-.25,1.4) ;
    \node [text=blue] (x) at (-2.7,1.4) {$f^w_{\ell_k j_k i_k}$};
   \draw [very thick, -stealth, draw=blue]  (2.3,1.8) -- (4.1,1.8) ;
    \node [text=blue] (x) at (5,1.9) {$f^w_{n_k \ell_k j_k}$};
    \node (x) at (3.2,1.3) {$f_{n_k \ell_k}$};
    \node (x) at (3.2,.5) {${\hat{\tau}}_{\ell_k}$};
\end{tikzpicture}
\end{center}
\caption{Schematic representation of the dynamic subsystem decomposition. The transient inflow and outflow rate functions, $f^w_{\ell_k j_k i_k}(t)$ and $f^w_{n_k \ell_k j_k}(t)$, at and associated transient substorage, $x^w_{n_k \ell_k j_k}(t)$, in subcompartment ${\ell_k}$ along subflow path $p^w_{n_k j_k}= i_k \mapsto j_k \to \ell_k \to n_k$. }
\label{fig:subsystemp}
\end{figure}
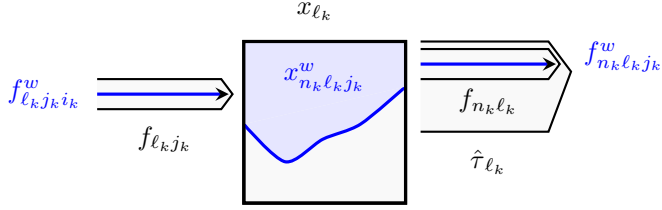

The transient outflow at subcompartment ${\ell_k}$ at time $t$ along subflow path $p^w_{n_k j_k}$ from ${j_k}$ to ${n_k}$, $f^{w}_{n_k \ell_k j_k}(t)$, can be formulated as follows:
\begin{equation}
\label{eq:out_in_fs}
\begin{aligned}
f^{w}_{n_k \ell_k j_k}(t) = \frac{ f_{n_k \ell_k}(t,\bf{x}) }{x_{\ell_k}(t) } \, x^{w}_{n_k \ell_k j_k}(t) ,
\end{aligned}
\end{equation}
similar to Eq.~\ref{eq:cons}, due to the equivalence of flow and subflow intensities, where the transient substorage, $x^{w}_{n_k \ell_k j_k}(t)$, is determined by the governing mass balance equation
\begin{equation}
\label{eq:out_in_fs2}
\begin{aligned}
\dot{x}^{w}_{n_k \ell_k j_k}(t) & = f^{w}_{\ell_k j_k i_k}(t) - \frac{ {\hat{\tau}}_{\ell_k}(t,\bf{x}) }{ x_{\ell_k}(t) } \, {x}^{w}_{n_k \ell_k j_k}(t) , \quad {x}^{w}_{n_k \ell_k j_k}(t_1) = 0 .
\end{aligned}
\end{equation}
The equivalence of the throughflow and subthroughflow intensities, as well as the flow and subflow intensities in the same direction, that is
\[ q^x_{n \ell}(t,{\bm x}) = \frac{  f_{n \ell}(t,{\bm x}) }{x_{\ell}(t) } = \frac{ f_{n_k \ell_k}(t,{\bf x})  }{x_{\ell_k}(t) }
 \quad \mbox{and} \quad r^{-1}_\ell (t,{\bm x}) = \frac{ {\hat{\tau}}_{\ell}(t,\bm{x}) }{ x_{\ell}(t) } = \frac{ {\hat{\tau}}_{\ell_k}(t,\bf{x}) }{ x_{\ell_k}(t) }   \]
are given by Eqs.~\ref{eq:csc_dense} and~\ref{eq:thr_dense}, for $\ell,n=1,\ldots,n$, and $k=0,\ldots,n$, where the denominators are nonzero. Therefore, since the intensities in Eqs.~\ref{eq:out_in_fs} and~\ref{eq:out_in_fs2} can be expressed at both the compartmental and subcompartmental levels, the subsystem decomposition is actually independent from the system decomposition. That is, the same analysis can be done along flow paths within the system, instead of subflow paths within the subsystems. This allows the flexibility of tracking arbitrary intercompartmental flows and storages generated by all or individual environmental inputs within the system. The governing equations, Eqs.~\ref{eq:out_in_fs} and~\ref{eq:out_in_fs2}, establish the foundation of the {\em dynamic subsystem decomposition}. These equations for each subcompartment along a given flow path of interest will then be coupled with the decomposed system, Eqs.~\ref{eq:model_d1} and~\ref{eq:model_d2}, or the original system, Eq.~\ref{eq:model2}, and be solved simultaneously. The equations can alternatively be solved individually and separately once the original or decomposed system is solved.

Additional relationships for the transient subflows and substorages along closed subflow paths are formulated in Appendix~\ref{apxsec:transient}. The transient subflows and substorages within the initial subsystems can be defined and formulated similarly.

\subsubsection{The \texttt{diact} flows and storages}
\label{apxsec:flows}

Five main transaction types are introduced in this section based on the subsystem decomposition methodology: the $\texttt{diact}$ flows and storages. The {\em transfer flows} (denoted by \texttt{t}) and storages will be formulated in detail below, and parallel derivations for {\em direct} (\texttt{d}), {\em indirect} (\texttt{i}), {\em cycling} (\texttt{c}), and {\em acyclic} (\texttt{a}) {\em flows} and {\em storages} can be found in Appendix~\ref{appsec:flows}.

The {\em composite transfer flow} will be defined as the total intercompartmental transient flow that is generated by all external inputs from one compartment, {\em directly} or {\em indirectly} through other compartments, to another. The {\em composite direct}, {\em indirect}, {\em acyclic}, and {\em cycling flows} from the initial compartment to the terminal compartment are then defined as the direct, indirect, non-cycling, and cycling segments at the terminal compartment of the composite transfer flow (see Fig.~\ref{fig:utilityfigs}).

The {\em simple transfer flow} will be defined as the total intercompartmental transient subflow that is generated by the single external input from an input-receiving subcompartment, {\em directly} or {\em indirectly} through other compartments, to another subcompartment. The {\em simple direct}, {\em indirect}, {\em acyclic}, and {\em cycling flows} from the initial input-receiving subcompartment to the terminal subcompartment are then defined as the direct, indirect, non-cycling, and cycling segments at the terminal subcompartment of the simple transfer flow (see Fig.~\ref{fig:utilityfigs}).

The associated simple and composite \texttt{diact} storages are defined as the storages generated by the corresponding \texttt{diact} flows. The simple and composite \texttt{diact} {\em flows} and {\em storages} at both the subcompartmental and compartmental levels are formulated below and in Appendix~\ref{appsec:flows}.
\begin{figure}[t]
\begin{center}
\begin{tikzpicture}
   \draw[very thick, fill=gray!10, draw=black] (-.05,-.05) rectangle node(R1) [pos=.5] { } (2.1,2.4) ;
   \draw[very thick, fill=blue!3, draw=blue, text=blue] (0.3,0.1) rectangle node(R1) [pos=.5] {$x_{i_i}$} (1.2,1.3) ;
    \node (x) at (1,-.5) {${x}_{i}$};
   \draw[very thick, fill=gray!10, draw=black] (8.95,-.05) rectangle node(R2) [pos=.5] { } (11.1,2.4) ;
   \draw[very thick, fill=gray!10, draw=black] (8.95,1.69) rectangle node(R2) [pos=.5] { } (10.5,2.3) ;
   \draw[very thick, fill=blue!3, draw=blue, text=blue] (9.05,0.05) rectangle node(R1) [pos=.5] {$x_{j_i}$} (10.05,1.1) ;
    \node (x) at (10,-.5) {${x}_{j}$};
    \node (x) at (9.7,2) {${x}_{j_0}$};
   \draw[fill=gray!10] (3.05,1.3) -- ++ (-0.6,0) -- ++ (-0.3,.3) -- ++ (.3,.3) -- ++ (.6,0);
   \draw[fill=gray!10] (8.1,1.3) -- ++ (0.84,0) -- ++ (0,.4) -- ++ (-.84,0) ;
   \draw[fill=gray!10] (8.1,1.7) -- ++ (0.84,0) -- ++ (0,.43) -- ++ (-.84,0) ;
    \node[] (t2) at (2.8,1) {${\tau}^\texttt{t}_{i j}$};
    \node[] (t3) at (8.5,2.4) {$\hat{\tau}_{j}$};
    \node[] (t3) at (8.5,1.92) {$\hat{\tau}_{j_0}$};
    \node[] (t2) at (4,2.2) {${\tau}^\texttt{c}_{i j}$};
    \node[] (x) at (4.5,1.1) {${\tau}^\texttt{d}_{i j} $};
    \draw[very thick,-stealth]  (8.8,1.4) -- (2.4,1.4);
   \draw[fill=blue!10] (2.95,0.3) -- ++ (-0.84,0) -- ++ (0,.4) -- ++ (.84,0) ;
    \draw[very thick,-stealth,draw=blue]  (-.8,.5) -- (.25,.5) ;
    \node (z) [text=blue] at (-.45,.8) {$z_i$};
    \node[blue] (t3) at (2.7,0.05) {$\hat{\tau}_{i_i}$};
   \draw[fill=blue!10] (8,0.1) -- ++ (0.6,0) -- ++ (0.3,.3) -- ++ (-.3,.3) -- ++ (-.6,0);
    \node[blue] (t1) at (8.4,1) {$\tilde{\tau}_{j_i}$};
    \draw[very thick,draw=blue, -stealth]  (1.3,0.6) -- (8.7,0.6) ;
    \node[blue] (x) at (6.7,0.9) {${\tau}^\texttt{d}_{j_i} $};
   \node[blue,anchor=west] at (3.3,0.1) (a) {${\tau}^\texttt{i}_{j_i}$};
   \node[blue,anchor=east] at (7,0.4) (b) {};
   \node[blue,anchor=west] at (.9,0.4) (e) {};
   \node[blue,anchor=east] at (3.3,0.1) (f) {};
\draw [very thick,draw=blue, dashed, -stealth] plot [smooth, tension=1] coordinates { (1.3,0.4) (4,0.4) (5,-0.1) (4.5,-0.6) (4,-0.1) (5,0.4) (8.7,0.4) };
\draw [very thick,draw=blue, dashed, -stealth] plot [smooth, tension=1] coordinates { (9,0.4) (9.6,-0.1) (9,-0.5) (7,-0.5) (6.5,-0.1) (7,0.2) (8.7,0.2) };
    \node[blue] (t2) at (7,-0.2) {${\tau}^\texttt{c}_{j_i}$};
   \node[anchor=east] at (7.2,2.1) (c) {$\tau^\texttt{i}_{i j}$};
   \node[blue,anchor=west] at (2.1,1.7) (d) {};
\draw [very thick,dashed, -stealth] plot [smooth, tension=1] coordinates { (8.8,1.6) (6.5,1.6)  (5.5,2.1)  (6,2.6) (6.5,2.1) (5.5,1.6) (2.4,1.6) };
\draw [very thick, dashed, -stealth] plot [smooth, tension=1] coordinates { (2,1.6) (1.5,2.1) (2,2.6) (4,2.6) (4.5,2.1) (4,1.8) (2.4,1.8) };
   \node[blue,anchor=east] at (5.9,2.1) (h) {};
   \node[blue,anchor=west] at (8,1.7) (g) {};
\end{tikzpicture}
\end{center}
\caption{Schematic representation of the simple and composite \texttt{diact} flows. Solid arrows represent direct flows, and dashed arrows represent indirect flows through other compartments (not shown).
The composite \texttt{diact} flows (black) generated by outward throughflow $\hat{\tau}_{j}(t,\bm{x}) - \hat{\tau}_{j_0}(t,{\bf x})$ (i.e. derived from all external inputs): direct flow, $\tau^\texttt{d}_{i j}(t)$, indirect flow, $\tau^\texttt{i}_{ij}(t)$, acyclic flow, $\tau^\texttt{a}_{ij}(t) = \tau^\texttt{t}_{ij}(t) - \tau^\texttt{c}_{ij}(t)$, cycling flow, $\tau^\texttt{c}_{ij}(t)$, and transfer flow, $\tau^\texttt{t}_{ij}(t)$.
The simple \texttt{diact} flows (blue) generated by outward subthroughflow $\hat{\tau}_{i_i}(t,{\bf x})$ (i.e.  derived from single external input $z_{i}(t)$): direct flow, ${\tau}^\texttt{d}_{j_i}(t) = {\tau}^\texttt{d}_{j_i i_i}(t) $, indirect flow, ${\tau}^\texttt{i}_{j_i}(t) = \tau^\texttt{i}_{j_i i_i}(t)$, acyclic flow, ${\tau}^\texttt{a}_{j_i}(t) = \tau^\texttt{a}_{j_i i_i}(t) = {\tau}^\texttt{t}_{j_i}(t) - {\tau}^\texttt{c}_{j_i}(t)$, cycling flow, ${\tau}^\texttt{c}_{j_i}(t) = \tau^\texttt{c}_{j_i i_i}(t)$, and transfer flow, $\tau^{\texttt{t}}_{j_i}(t) = \tilde{\tau}_{j_i}(t,{\bf x}) = \check{\tau}_{j_i}(t,{\bf x}) - z_{j_i}(t)$. Note that the cycling flows at the terminal (sub)compartment may include the segments of the direct and/or indirect flows at that (sub)compartment, if the cycling flows indirectly pass through the corresponding initial (sub)compartment (see Fig.~\ref{fig:cycindices}). Therefore, the acyclic flows are composed of the segments of the direct and/or indirect flows.
}
\label{fig:utilityfigs}
\end{figure}

The {\em composite transfer subflow} will be defined as the total intercompartmental transient subflow from one subcompartment, directly or indirectly through other subcompartments, to another in the same subsystem. Let $P^\texttt{t}_{i_k j_k}$ be the set of mutually exclusive subflow paths $p^w_{i_k j_k}$ from subcompartment $j_k$ directly or indirectly to $i_k$ in subsystem $k$. The {\em composite transfer subflow} from subcompartment $j_k$ to $i_k$, $\tau^\texttt{t}_{i_k j_k}(t)$, can be expressed as the sum of the cumulative transient subflows, $\check{\tau}_{i_k}^{w}(t)$, generated by the outward subthroughflow at subcompartment $j_k$, $\hat{\tau}_{j_k}(t,{\bf x})$, during $[t_1,t]$, $t_1 \geq t_0$, and transmitted into $i_k$ at time $t$ along all subflow paths $p^w_{i_k j_k} \in P^\texttt{t}_{i_k j_k}$. The associated {\em composite transfer substorage}, $x^\texttt{t}_{i_k j_k}(t)$, at subcompartment $i_k$ at time $t$ is the sum of the cumulative transient substorages, $x^{w}_{i_k}(t)$, generated by the cumulative transient inflows, $\check{\tau}_{i_k}^{w}(t)$, during $[t_1,t]$. Alternatively, $x^\texttt{t}_{i_k j_k}(t)$ can be defined as the storage segment generated by the composite transfer subflow $\tau^\texttt{t}_{i_k j_k}(t)$ in subcompartment $i_k$ during $[t_1,t]$.

The composite transfer subflow and substorage will then be formulated as follows:
\begin{equation}
\label{eq:out_in_fsDT}
\begin{aligned}
{\tau}^\texttt{t}_{i_k j_k}(t) \coloneqq
\sum_{w=1}^{w_k}  \check{\tau}_{i_k}^{w}(t)
\quad \mbox{and} \quad
x^\texttt{t}_{i_k j_k}(t) \coloneqq \sum_{w=1}^{w_k} x^{w}_{i_k}(t)
\end{aligned}
\end{equation}
where $w_k$ is the number of subflow paths $p^w_{i_k j_k} \in P^\texttt{t}_{i_k j_k}$. The sum of all composite transfer subflows and associated substorages from subcompartment $j_k$ to $i_k$ within each subsystem $k \neq 0$ will be called the {\em composite transfer flow} and {\em storage} from compartment $j$ to $i$ at time $t$, ${\tau}^\texttt{t}_{i j}(t)$ and $x^\texttt{t}_{i j}(t)$, generated by all environmental inputs during $[t_1,t]$. They can be formulated as 
\begin{equation}
\label{eq:out_in_fsTk}
\begin{aligned}
\tau^\texttt{t}_{i j}(t) \coloneqq \sum_{k=1}^{n} \tau^\texttt{t}_{i_k j_k}(t) \quad & \mbox{and} \quad x^\texttt{t}_{i j}(t) \coloneqq  \sum_{k=1}^{n} x^\texttt{t}_{i_k j_k}(t) .
\end{aligned}
\end{equation}

For notational convenience, we define $n \times n$ matrix functions ${T}^{\texttt{t}}_k(t)$ and ${X}^{\texttt{t}}_k(t)$ whose $(i,j)-$elements are $\tau^\texttt{t}_{i_kj_k}(t)$ and $x^\texttt{t}_{i_kj_k}(t)$, respectively. That is,
\begin{equation}
\label{eq:indirectfmT}
\begin{aligned}
{T}^{\texttt{t}}_k(t) \coloneqq \left( \tau^\texttt{t}_{i_k j_k}(t) \right) \quad \mbox{and} \quad {X}^{\texttt{t}}_k(t) \coloneqq \left( x^\texttt{t}_{i_k j_k}(t) \right) ,
\end{aligned}
\end{equation}
for $k=0,\ldots,n$. These matrix measures ${T}^{\texttt{t}}_k(t)$ and ${X}^{\texttt{t}}_k(t)$ will be called the $k^{th}$ {\em composite transfer subflow} and associated {\em substorage matrix} functions. The corresponding {\em composite transfer flow} and associated {\em storage matrix} functions are then defined as ${T}^{\texttt{t}}(t) \coloneqq \left( \tau^\texttt{t}_{i j}(t) \right)$ and ${X}^{\texttt{t}}(t) \coloneqq \left( x^\texttt{t}_{i j}(t) \right)$, respectively.

The {\em simple transfer flows} and {\em storages} can be formulated in terms of their composite counterparts as follows:
\begin{equation}
\label{eq:simple_diact}
\begin{aligned}
{\tau}^\texttt{t}_{i_k}(t) = {\tau}^\texttt{t}_{i_k k_k}(t) \quad \mbox{and} \quad {x}^\texttt{t}_{i_k}(t) = {x}^\texttt{t}_{i_k k_k}(t) .
\end{aligned}
\end{equation}
To distinguish the composite and simple transfer flow and storage matrices, we use a tilde notation over the simple versions. That is, the simple transfer flow and storage matrices, for example, will be denoted by $\tilde{T}^{\texttt{t}}(t) \coloneqq \left( \tau^\texttt{t}_{i_k}(t) \right)$ and $\tilde{X}^{\texttt{t}}(t) \coloneqq \left( x^\texttt{t}_{i_k}(t) \right) $.

The difference between the composite and simple transfer flows, $\tau^\texttt{t}_{ik}(t)$ and $\tau^\texttt{t}_{i_k}(t)$, and associated storages, $x^\texttt{t}_{ik}(t)$ and $x^\texttt{t}_{i_k}(t)$, is that the composite transfer flow and storage from compartment $k$ to $i$ are generated by outward throughflow $\hat{\tau}_k(t,{\bm x}) - \hat{\tau}_{k_0}(t,{\bf x})$ derived from all external inputs, and their simple counterparts from input-receiving subcompartment $k_k$ to $i_k$ are generated by outward subthroughflow $\hat{\tau}_{k_k}(t,{\bf x})$ derived from single external input $z_k(t,\bm{x})$ (see Fig.~\ref{fig:utilityfigs}). In that sense, the composite and simple transfer flows and storages measure the influence of one compartment on another induced by all and a single external input, respectively.

All the other {\em simple} and {\em composite} $\texttt{diact}$ {\em flows} and {\em storages} can be formulated through the {\em path-based approach}, introduced above for the transfer flows and storages, based on the subsystem partitioning methodology, These derivations are presented in Appendix~\ref{appsec:flows}.

The simple and composite $\texttt{diact}$ subflow and substorage definitions can be extended in parallel to the initial subsystems as well. We will use bar notation below the \texttt{diact} flow and storage symbols for the initial subsystems. The difference between the composite and simple transfer flows, $\ubar{\tau}^\texttt{t}_{ik}(t)$ and $\ubar{\tau}^\texttt{t}_{i_k}(t)$, and associated storages, $\ubar{x}^\texttt{t}_{ik}(t)$ and $\ubar{x}^\texttt{t}_{i_k}(t)$, within the initial subsystems is that the composite transfer flow and storage from compartment $k$ to $i$ are generated by outward throughflow $\hat{\tau}_{k_0}(t,{\bf x}) = \ubar{\hat{\tau}}_{k}(t,{\bf x})$ derived from all initial stocks, and their simple counterparts from initial subcompartment $k_k$ to $i_k$ are generated by outward subthroughflow $\ubar{\hat{\tau}}_{k_k}(t,{\bf x})$ derived from single initial stock $x_{k,0}$. In that sense, the composite and simple transfer flows and storages within the initial subsystems measure the influence of one compartment on another induced by all and a single initial stock, respectively.

The simple and composite \texttt{diact} flows have been explicitly formulated for both dynamic and static systems in the context of ecosystem analysis in the recent studies by \cite{Coskun2017DCSAM,Coskun2017SCSA}. In addition to the path-based approach introduced above through the subsystem decomposition, these alternative approaches for explicit formulation of the \texttt{diact} flows and storages through the system decomposition methodology are called the {\em dynamic} and {\em static approaches}.

The composite \texttt{diact} subflows from subcompartment $k_\ell$ to $i_\ell$ at time $t$ are formulated componentwise through the dynamic approach as follows:
\begin{equation}
\label{eq:comp_diact_subs}
\begin{aligned}
 \tau^{\texttt{d}}_{i_\ell k_\ell}(t) & =  \frac{ f_{i_k k_k}(t,{\bf x}) } {{\hat{\tau}}_{k_k}(t,{\bf x}) } \, {{\hat{\tau}}_{k_\ell}(t,{\bf x}) } = \frac{ f_{i k}(t,{\bm x}) } {{\hat{\tau}}_{k}(t,{\bm x}) } \, {{\hat{\tau}}_{k_\ell}(t,{\bf x}) } \\
 \tau^{\texttt{i}}_{i_\ell k_\ell}(t) & = \frac { {\check{\tau}}_{i_k} (t,{\bf x}) - z_{i_k}(t,{\bf x}) - f_{i_k k_k}(t,{\bf x}) } { {\hat{\tau}}_{k_k}(t,{\bf x}) } \, { {\hat{\tau}}_{k_\ell}(t,{\bf x}) } \\
 \tau^{\texttt{a}}_{i_\ell k_\ell}(t) & = \left[ \frac{ {\check{\tau}}_{i_k} (t,{\bf x}) - z_{i_k}(t,{\bf x}) } {{\hat{\tau}}_{k_k}(t,{\bf x}) } - \frac{ {\check{\tau}}_{i_i}(t,{\bf x}) - z_{i_i}(t,{\bf x}) } {{\hat{\tau}}_{i_i}(t,{\bf x}) } \, \frac{ {\hat{\tau}}_{i_k}(t,{\bf x}) } {{\hat{\tau}}_{k_k}(t,{\bf x}) }  \right] {{\hat{\tau}}_{k_\ell}(t,{\bf x}) } \\
 \tau^{\texttt{c}}_{i_\ell k_\ell}(t) & = \frac{ {\check{\tau}}_{i_i}(t,{\bf x}) - z_{i_i}(t,{\bf x}) } {{\hat{\tau}}_{i_i}(t,{\bf x}) } \, \frac{ {\hat{\tau}}_{i_k}(t,{\bf x}) } {{\hat{\tau}}_{k_k}(t,{\bf x}) }  \, {{\hat{\tau}}_{k_\ell}(t,{\bf x}) } \\
 \tau^{\texttt{t}}_{i_\ell k_\ell}(t) & = \frac{ {\check{\tau}}_{i_k} (t,{\bf x}) - z_{i_k}(t,{\bf x}) }  {{\hat{\tau}}_{k_k}(t,{\bf x}) } \, { {\hat{\tau}}_{k_\ell}(t,{\bf x}) }
\end{aligned}
\end{equation}
for $t > t_0$, $i,k=1,\ldots,n$, and $\ell=0,\ldots,n$, using the proportionality of parallel subflows given in Eq.~\ref{eq:thr_dense3} \cite{Coskun2017DCSAM}. Note that $\hat{\tau}_{k_k}(t_0)=0$ and we assume that $\hat{\tau}_{k_k}(t)$ is nonzero for all $t > t_0$. The second equality of the first equation in Eq.~\ref{eq:comp_diact_subs} for the composite direct subflow is due to the equivalence of flow and subflow intensity in the same direction \cite{Coskun2017DCSAM,Coskun2017SCSA}. The dynamic and static \texttt{diact} flows and storages derived through the dynamic and static approaches are listed in matrix form in Table~\ref{tab:flow_stor1} and~\ref{tab:flow_stor} \cite{Coskun2017DCSAM,Coskun2017SCSA}.

The diagonal matrices $\check{\mathsf{T}}(t,{\bf x})$, $\hat{\mathsf{T}}(t,{\bf x})$, and $\tilde{\mathsf{T}}(t,{\bf x})$ used in Table~\ref{tab:flow_stor1} are defined as
\begin{equation}
\label{eq:two_matx}
\begin{aligned}
\check{\mathsf{T}}(t,{\bf x}) \coloneqq \diag{( \check{T}(t,{\bf x}) )} , \, \, \,
\hat{\mathsf{T}}(t,{\bf x}) \coloneqq \diag{( \hat{T}(t,{\bf x}) )} , \, \, \,
\tilde{\mathsf{T}}(t,{\bf x}) \coloneqq \diag{( \tilde{T}(t,{\bf x}) )} .
\end{aligned}
\end{equation}
The inverted matrices in the table are assumed to be invertible. The \texttt{diact} flow distribution matrices listed in this table for subsystems are the same with the ones for the initial subsystems, due to the proportionality given in Eqs.~\ref{eq:thr_dense3} and~\ref{eq:thr_denseI2}. The simple and composite \texttt{diact} flow matrices for the initial subsystems can be formulated similar to their counterparts in Table~\ref{tab:flow_stor1} as follows:
\begin{equation}
\label{eq:diact_init}
\begin{aligned}
\ubar{T}^\texttt{*}(t,{\bf x}) & \coloneqq {N}^\texttt{*}(t) \, \ubar{\mathcal{T}}(t,{\bf x}) = {N}^\texttt{*}(t) \, \mathcal{T}_0(t,{\bf x}) , \\
\ubar{T}^\texttt{*}_\ell(t,{\bf x}) & \coloneqq {N}^\texttt{*}(t) \, \ubar{\mathcal{\hat T}}_\ell(t,{\bf x}) ,
\quad \mbox{and} \quad
\ubar{\tilde{T}}^\texttt{*}(t,{\bf x}) \coloneqq {N}^\texttt{*}(t) \, \ubar{\hat{\mathsf{T}}}(t,{\bf x}) ,
\end{aligned}
\end{equation}
where the superscript ($^\texttt{*}$) represents any of the \texttt{diact} symbols and $\ubar{\hat{\mathsf{T}}}(t,{\bf x}) \coloneqq \diag{( \ubar{\hat{T}}(t,{\bf x}) )} $.

The simple and composite \texttt{diact} storages can then be formulated using the corresponding \texttt{diact} flows as the transient inflows in Eq.~\ref{eq:out_in_fs2} as follows:
\begin{equation}
\label{eq:out_in_diact2}
\begin{aligned}
\dot{x}^{\texttt{*}}_{i_\ell k_\ell}(t) & = \tau^{\texttt{*}}_{i_\ell k_\ell}(t) - \frac{ {\hat{\tau}}_{i}(t,{x}) }{ x_{i}(t) } \, {x}^{\texttt{*}}_{i_\ell k_\ell}(t) , \quad {x}^{\texttt{*}}_{i_\ell k_\ell}(t_1) = 0
\end{aligned}
\end{equation}
for $t_1 \geq t_0$. The solution to this governing equation, ${x}^{\texttt{*}}_{i_\ell k_\ell}(t)$, represents the \texttt{diact} substorage at time $t$ generated by the corresponding \texttt{diact} subflow, $\tau^{\texttt{*}}_{i_\ell k_\ell}(t)$, during $[t_1,t]$. For linear systems, Eq.~\ref{eq:out_in_diact2} can also be solved analytically as formulated in Eq.~\ref{eq:sln_transient} \cite{Coskun2017DCSAM}. The governing equation for the \texttt{diact} storages within the initial subsystems can be formulated similar to Eq.~\ref{eq:out_in_diact2} using the corresponding \texttt{diact} flows as well.
\begin{table}
     \centering
     \caption{The dynamic \texttt{diact} flow distribution and the simple and composite \texttt{diact} (sub)flow matrices. The superscript ($^\texttt{*}$) in each equation represents any of the \texttt{diact} symbols. For the sake of readability, the function arguments are dropped.}
     \label{tab:flow_stor1}
     \begin{tabular}{c p{6cm} l }
     \hline
\texttt{diact} & {flow distribution matrix} & {flows} \\
     \hline
     \noalign{\vskip 2pt}
\texttt{d} & $ N^\texttt{d} =  F \, \mathcal{T}^{-1}  $  &
\multirowcell{5}{
$
\begin{aligned}
\hfill
{T}^\texttt{*} &= {N}^\texttt{*} \, (\mathcal{T} -  \mathcal{\hat T}_0) \\
{T}^\texttt{*}_\ell &= {N}^\texttt{*} \, \mathcal{\hat T}_\ell \\
\tilde{T}^\texttt{*} &= {N}^\texttt{*} \, \mathsf{\hat T}
\end{aligned}
$
}
\\
\texttt{i} & $ N^\texttt{i} = \displaystyle \tilde{T} \, \hat{\mathsf{T}}^{-1} - F \,\mathcal{T}^{-1}  $ &  \\
\texttt{a} & $ {N}^\texttt{a} = \displaystyle \tilde{T} \, \hat{\mathsf{T}}^{-1} - \tilde{\mathsf{T}} \, \hat{\mathsf{T}}^{-1} \, \hat{T} \, \hat{\mathsf{T}}^{-1} $ & \\
\texttt{c} & $ {N}^\texttt{c} = \displaystyle \tilde{\mathsf{T}} \, \hat{\mathsf{T}}^{-1} \, \hat{T} \, \hat{\mathsf{T}}^{-1} $
& \\
\texttt{t} & $ N^\texttt{t} = \displaystyle  \tilde{T} \, \hat{\mathsf{T}}^{-1}   $ & \\
\noalign{\vskip 1pt}
\hline
     \end{tabular}
\end{table}

\subsection{System analysis and measures}
\label{sec:sa}

The dynamic system decomposition\break methodology yields the subthroughflow and substorage matrices that measure the external influence on system compartments in terms of the flow and storage generation. For the quantification of intercompartmental flow and storage dynamics, the dynamic subsystem decomposition methodology then formulates the transient and dynamic $\texttt{diact}$ flows and storages. These mathematical system analysis tools and their interpretation as quantitative system indicators will be discussed in this section.

The elements of the fundamental matrix solutions, that is, those of the initial substate (substorage) and substate matrices, $\ubar{X}(t)$ and $X(t)$, represent the organization of storages within the system derived from the initial stocks and external inputs, respectively. More specifically, $\ubar x_{i_k}(t)$ represents the storage value in compartment $i$ at time $t$, derived from the initial stock in compartment $k$ during time interval $[t_0,t ]$. Similarly, $x_{i_k}(t)$ represents the storage in compartment $i$ at time $t$ generated by the external input into compartment $k$, $z_k(t)$, during $[t_0,t]$ (see Fig.~\ref{fig:sc}). In other words, the proposed methodology can dynamically partition composite compartmental storages into subcompartmental segments based on their constituent sources from the initial stocks and external inputs. This decomposition enables tracking the evolution of the initial stocks and external inputs, in terms of storage generation, individually and separately within the system. The state variable, $x_i(t)$, which represents the composite compartmental storage, cannot be used to distinguish the portions of this storage derived from different individual initial and external sources separately. Therefore, the solution to the decomposed system brings out inferences that cannot be obtained through the analysis of the original system by the state-of-the-art techniques.

The elements of the net initial subthroughflow and subthroughflow rate matrices, $\ubar T(t,{\mathbf x})$ and $T(t,{\mathbf x})$, represent the distribution of the subthroughflows within the system derived from the initial storages and external inputs, respectively. More specifically, $\ubar \tau_{i_k}(t,{\mathbf x})$ represents the net subthroughflow rate at compartment ${i}$ at time $t$ derived from the initial stock in compartment $k$ during $[t_0,t ]$. Similarly, $\tau_{i_k}(t,{\mathbf x})$ represents the net subthroughflow rate at compartment ${i}$ at time $t$, generated by the external input into compartment $k$ during $[t_0,t ]$ (see Fig.~\ref{fig:fd}).  In other words, the proposed methodology can dynamically partition composite compartmental throughflows into subcompartmental segments based on their constituent sources from the initial stocks and external inputs. This decomposition enables tracking the evolution of the initial stocks and external inputs, in terms of flow generation, individually and separately within the system. Thus, the initial subthroughflow and subthroughflow functions of the decomposed system, $\ubar{\tau}_{i_k}(t,{\bf x})$ and $\tau_{i_k}(t,{\bf x})$, provide more detailed information than the composite throughflow function of the original system, $\tau_i(t,{\bm x})$, similar to the state and substate variables, as explained above. These interpretations can be extended to the inward and outward throughflows for both the initial subsystems and subsystems, $\ubar{\check{\tau}}(t,{\mathbf x})$, $\ubar{\hat{\tau}}(t,{\mathbf x})$, $\check{\tau}(t,{\mathbf x})$, and $\hat{\tau}(t,{\mathbf x})$, as well.

The {\em transient} flows and associated storages transmitted along given flow paths are also formulated systematically, through subsystem decomposition methodology. Arbitrary composite intercompartmental flows and storages can dynamically be decomposed into the constituent transient subflow segments at and substorage portions in each compartment along a given set of subflow paths. Therefore, the dynamic {\em subsystem} decomposition determines the distribution of arbitrary intercompartmental flows and the organization of the associated storages generated by these flows along given subflow paths within the subsystems. In other words, the subsystem decomposition enables dynamically tracking the fate of arbitrary intercompartmental flows and storages within and monitoring their spread throughout the system. Consequently, the proposed methodology determines the dynamic influence of one compartment, through direct or indirect interactions, on any other in a complex network. Moreover, a history of compartments visited by arbitrary system flows and storages can also be compiled.

The dynamic direct, indirect, acyclic, cycling, transfer (\texttt{diact}) flows and storages transmitted from one compartment, directly or indirectly, to any other\textemdash including itself\textemdash within the system are also formulated using the transient flows and storages for the quantification of intercompartmental flow and storage dynamics. More specifically, the dynamic \texttt{diact} flows (subflows) at time $t$, ${\tau}^{\texttt{*}}_{i k}(t)$ (${\tau}^{\texttt{*}}_{i_\ell k_\ell}(t)$), transmitted directly or indirectly from compartment $k$ to $i$ (subcompartment $k_\ell$ to $i_\ell$) and the associated storages (substorages) generated by these flows during $[t_1,t]$, ${x}^{\texttt{*}}_{i k}(t)$ (${x}^{\texttt{*}}_{i_\ell k_\ell}(t)$), can be determined at the compartmental (subcompartmental) level.

The proposed methodology constructs a base for the formulation of new mathematical system analysis tools of matrix, vector, and scalar types as quantitative system indicators. In addition to the system measures summarized in this section, the measures and indices for the \texttt{diact} effects, utilities, exposures, residence times, as well as the corresponding system efficiencies, stress, and resilience have recently been formulated by \cite{Coskun2017DCSAM,Coskun2017DESM} in the context of ecosystem ecology. The static versions of these system analysis tools have also been introduced in separate works \cite{Coskun2017SCSA,Coskun2017SESM}.

\section{Results}
\label{sec:results}

The proposed methodology is applied to various compartmental models from literature in this section and in Appendix~\ref{apxsec:ex}. The results and their interpretations are presented.

\subsection{Case study}
\label{ex:sir}

The SIR model is one of the simplest compartmental models in epidemiology which consists of three compartments that represent the populations of three groups: the susceptible or uninfected, $x_1 = S$, infectious, $x_2 = I$, and recovered or immune, $x_3 = R$. The model determines the number of individuals infected with a contagious illness over time. It is reasonably predictive for infectious diseases transmitted from individual to individual. The first SIR model was proposed in its simplest form by \cite{Kermack1927}.

In this section, we will analyze a modified version of SIR model through the proposed methodology: SIRS model for waning immunity with demographics. The model parameters are adopted from \cite{Anderson1979} and the modeling assumptions can be deduced from the model formulation below or can be readily found in the literature \cite{Anderson1979,Keshet2004}.

The governing system of equations of SIRS model for a laboratory population of mice infected with microbes can be formulated as follows:
\begin{equation}
\begin{aligned}
\frac{d x_1}{dt} & = \alpha + \nu \, x_3 - \beta \, x_1 \, x_2 - \mu \, x_1  \\
\frac{d x_2}{dt} & = \beta \, x_1 \, x_2 - (\gamma + \sigma + \mu ) \, x_2 \\
\frac{d x_3}{dt} & = \gamma \, x_2  - ( \nu + \mu ) \, x_3
\end{aligned}
\label{eq:sir}
\end{equation}
with the initial conditions of $\bm{x}(t_0) = [10,10,1]^T$. The total initial population is given to be $20$ by \cite{Anderson1979}, but the initial population for each group is not specified individually. They are, therefore, arbitrarily chosen in this work. The model parameters are the birth rate (or daily rate of mice introduced into susceptible population) $\alpha=0.33$, the natural mortality rate $\mu=0.006$, the mortality rate caused by the disease $\sigma=0.06$, the infection rate $\beta=0.0056$, the recovery rate $\gamma=0.04$, and the immunity loss rate $\nu=0.021$. All parameters are in units of [day$^{-1}]$ (see Fig.~\ref{fig:sir}).
\begin{figure}[t]
\begin{center}
\begin{tikzpicture}
\centering
   \draw[very thick,  fill=blue!5, draw=black] (-.05,-.05) rectangle node(R1) {$x_1(t)$} (1.5,1.5) ;
       \node (x1c) at (.1,1.8) {$S$};
   \draw[very thick,  fill=blue!5, draw=black] (3.95,-.05) rectangle node(R2) {$x_2(t)$} (5.5,1.5) ;
       \node (x1c) at (5.3,1.8) {$I$};
   \draw[very thick,  fill=blue!5, draw=black] (1.95,2.6) rectangle node(R3) {$x_3(t)$} (3.55,4.1) ;
       \node (x3c) at (1.6,3.8) {$R$};
       \draw[very thick,-stealth,draw=red, line width=5pt, opacity=.2]  (-1.2,1) -- (-0.15,1) -- (1.6,.75) -- (3.8,.75) -- (4,0.75) -- (4.8,1.4) -- (4.8,1.7) -- (3.75,3.2) -- (1.75,3.2) -- (0.7,1.7) -- (0.7,.5) -- (4.5,.5) -- (6.7,.5) ;
       \draw[very thick,-stealth,draw=red, line width=5pt, opacity=.2]  (-1.2,1) -- (-0.15,1) -- (1.6,.75) -- (3.8,.75) -- (4,0.75) -- (4.8,1.4) -- (4.8,1.7) -- (3.75,3.2) -- (1.75,3.2) -- (0.7,1.7) -- (0.7,.5) -- (4.5,.5) -- (5.5,1) -- (6.7,1) ;
       \draw[very thick,-stealth,draw=black]  (1.6,.75) -- (3.8,.75) ;  
       \node (f_21) at (2.7,1.1) {$\beta$};
       \draw[very thick,stealth-,draw=black] (0.7,1.7) -- (1.75,3.2) ;     
       \node (f_32) at (.9,2.5) {$\nu$};
       \draw[very thick,-stealth,draw=black]  (4.8,1.7) -- (3.75,3.2) ;  
       \node (f_32) at (4.6,2.5) {$\gamma$};
       \draw[very thick,-stealth,draw=black]  (5.6,.5) -- (6.6,.5) ;      
       \draw[very thick,-stealth,draw=black]  (5.6,1) -- (6.6,1) ;         
       \node (y_22) at (6.4,1.3) {$\sigma$};
       \node (y_21) at (6.4,.2) {$\mu$};
       \draw[very thick,stealth-,draw=black]  (-1.2,.5) -- (-0.15,.5) ;  
       \node (z_1) at (-1,1.3) {$\alpha$};
       \node (y_1) at (-1,.2) {$\mu$};
       \draw[very thick,-stealth,draw=black]  (-1.2,1) -- (-0.15,1) ;    
       \draw[very thick,-stealth,draw=black]  (3.7,3.5) -- (4.8,3.5) ;   
       \node (y_3) at (4.6,3.8) {$\mu$};
       \node (p) [text=red] at (7.2,1.1) {$p^2_{0_1 1_1}$};
       \node (p) [text=red] at (7.2,.5) {$p^3_{0_1 1_1}$};
\end{tikzpicture}
\end{center}
\caption{Schematic representation of the model network. Arrows are labeled by the corresponding rate constants. Subflow paths $p^2_{0_1 1_1}$ and $p^3_{0_1 1_1}$ along which the transient external outputs are computed are red (subsystems are not shown) (Case study~\ref{ex:sir}).}
\label{fig:sir}
\end{figure}
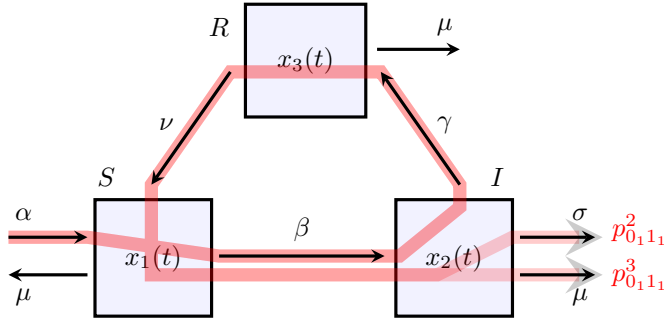

The flow regime for the system can be expressed in matrix form as
\begin{equation}
\label{eq:sir_flows}
\begin{aligned}
F(t, \bm{x}) =
\begin{bmatrix}
    0  & 0 & \nu \, x_3 \\
    \beta \, x_1 \, x_2 & 0 & 0 \\
    0 & \gamma \, x_2 & 0
\end{bmatrix},
\quad
\bm{z}(t,\bm{x}) =
\begin{bmatrix}
    \alpha \\
    0 \\
    0
\end{bmatrix},
\quad
\bm{y}(t,\bm{x}) =
\begin{bmatrix}
    \mu \, x_1 \\
    (\mu + \sigma) \, x_2 \\
    \mu \, x_3
\end{bmatrix} .
\end{aligned}
\nonumber
\end{equation}
The decomposed system can then be expressed in the following matrix form:
\begin{equation}
\label{eq:model_exM}
\begin{aligned}
\dot{X}(t) & = \mathcal{Z} (t,\bm{x}) + A(t,\bm{x}) \, X(t) ,
\quad X(t_0) = \mathbf{0} , \\
\dot{\ubar X}(t) & = A(t,\bm{x}) \, \ubar{X}(t) ,
\quad \quad \quad \quad \quad \, \, {\ubar X}(t_0) = \mathcal{X}_0  .
\end{aligned}
\end{equation}
The fundamental substate matrices, $X(t) = (x_{i_k}(t) )$ and $\ubar{X}(t) = (\ubar{x}_{i_k}(t) )$, the state, external output, and input matrices, $\mathcal{X}(t) = \diag{(\bm{x}(t))}$, $\mathcal{Y}(t,\bm{x}) = \diag{(\bm{y}(t,\bm{x}))}$, and $\mathcal{Z}(t,\bm{x}) = \diag{(\bm{z}(t,\bm{x}))}$, as well as the flow intensity matrix,
\begin{equation}
\label{eq:matrix_Aex}
\begin{aligned}
A(t,\bm{x}) = & \left ( F (t,\bm{x}) - \mathcal{T}(t,\bm{x}) \right ) \, \mathcal{X}^{-1}(t)
\nonumber
\end{aligned}
\end{equation}
where $\mathcal{T}(t,\bm{x}) = \mathcal{Y}(t,\bm{x}) + \diag{( {F}^T(t,\bm{x}) \, \bm{1} )}$ are defined above in the Methods Section.

The numerical results for the state variables ${\bm{x}}(t)$, $\ubar{\bm{x}}(t)$, and $\bar{\bm{x}}(t)$ are presented in Fig.~\ref{fig:sir_comp} and~\ref{fig:sir_ssm}. As seen from the graphs, the proposed methodology enables dynamically tracking the evolution of the initial and newborn populations individually and separately within this nonlinear system. It is worth emphasizing that such analysis cannot be done through the state-of-the-art techniques.
\begin{figure}[t]
\begin{center}
\includegraphics[width=.45\textwidth]{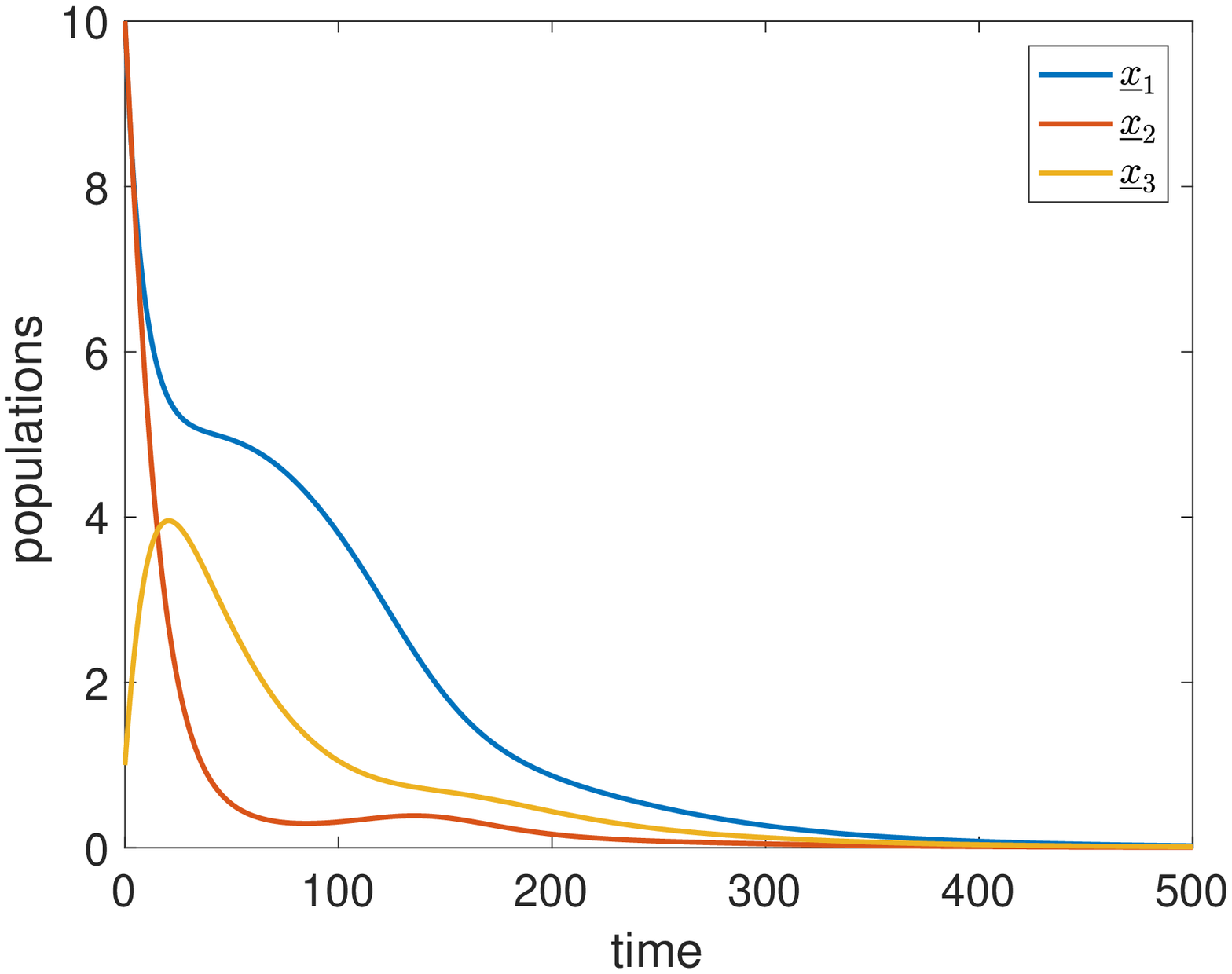}
\includegraphics[width=.45\textwidth]{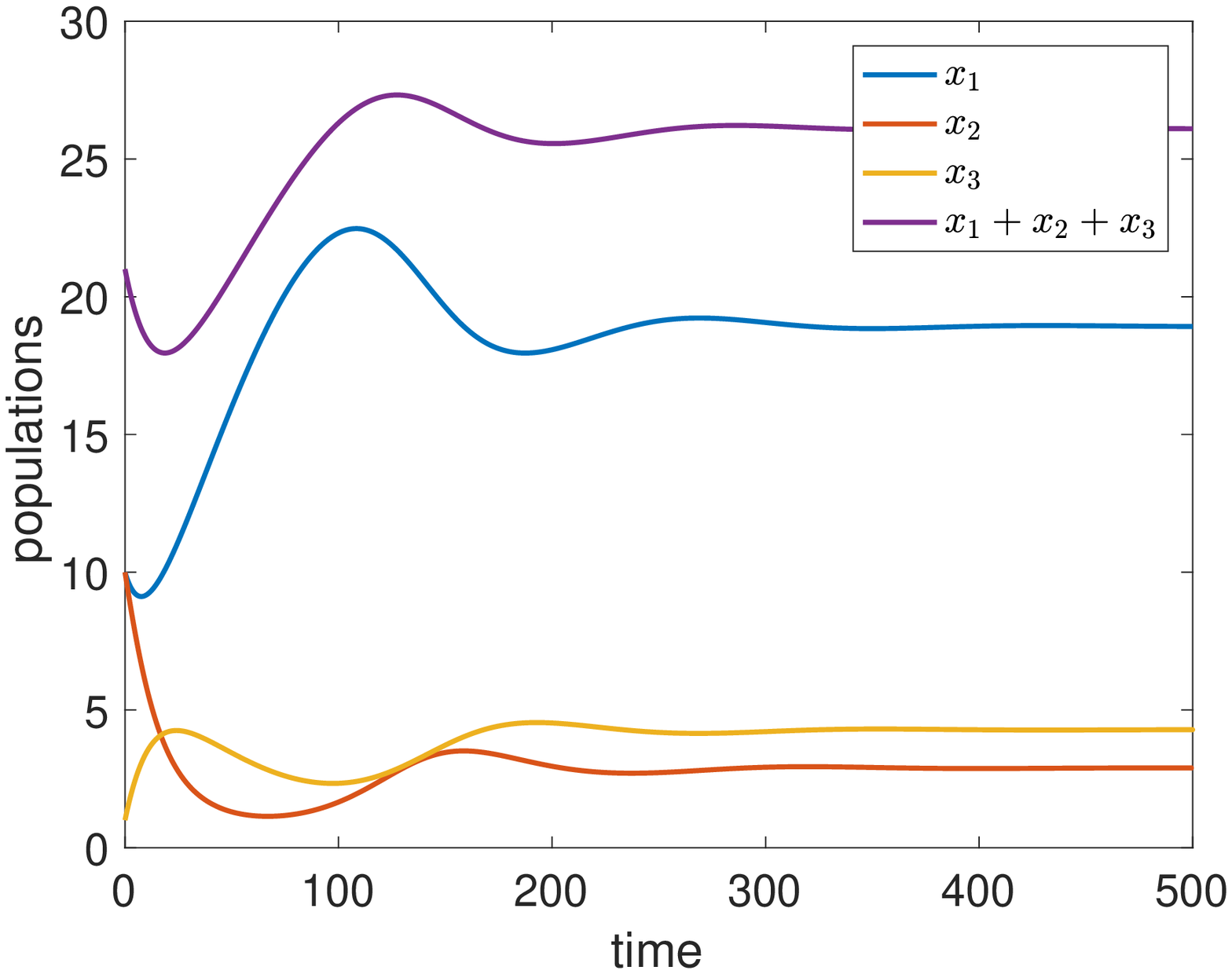}
\end{center}
\caption{Numerical results for the evolution of the initial populations, $\ucbar{\bm{x}}(t)$, and state variables, $\bm{x}(t)$. The populations generated by external inputs alone, $\bar{\bm{x}}(t)$, are presented in Fig.~\ref{fig:sir_ssm} (Case study~\ref{ex:sir}).}
\label{fig:sir_comp}
\end{figure}

The fundamental matrices, that is, the substate (substorage) and initial substate (initial substorage) matrix functions, $X(t)$ and $\ubar{X}(t)$, are also presented in Fig.~\ref{fig:sir_ssm}. Note that the substate functions for the $3^{rd}$ initial subsystem and $2^{nd}$ and $3^{rd}$ subsystems are identically zero because of the zero initial stock, $x_{3_0} =0$, and external inputs, $z_{2}(t) = z_{3}(t) = 0$. That is,
\[ \ubar{x}_{i_3}(t) =0 \quad \mbox{and} \quad {x}_{i_k}(t) =0 \quad \mbox{for} \quad k=2,3 \quad \mbox{and} \quad i=1,2,3 . \]
Since there is only one nonzero input in this model ($z_1(t) >0$), $\bar{x}_1(t) = x_{1_1}(t)$, $\bar{x}_2(t) = x_{2_1}(t)$, and $\bar{x}_3(t) = x_{3_1}(t)$, as presented in Fig.~\ref{fig:sir_ssm}. The initial substate and substate variables $\ubar{x}_{2_1}(t)$ and ${x}_{2_1}(t)$, for example, represent the population in compartment $2$ at time $t$, which are derived from the initial population in compartment $1$, $x_{1,0}$, and external input into compartment $1$, $z_1(t)$, during $[t_0,t]$, respectively. Biologically, $\ubar{x}_{2_1}(t)$ can be interpreted as the population of the infected mice at time $t$, that had initially been susceptible and then infected sometime during $[t_0,t]$. Similarly, ${x}_{2_1}(t)$ represents the population of the infected mice at time $t$, which were born (or introduced) susceptible and then infected during $[t_0,t]$.
\begin{figure}[h]
\begin{center}
\includegraphics[width=.45\textwidth]{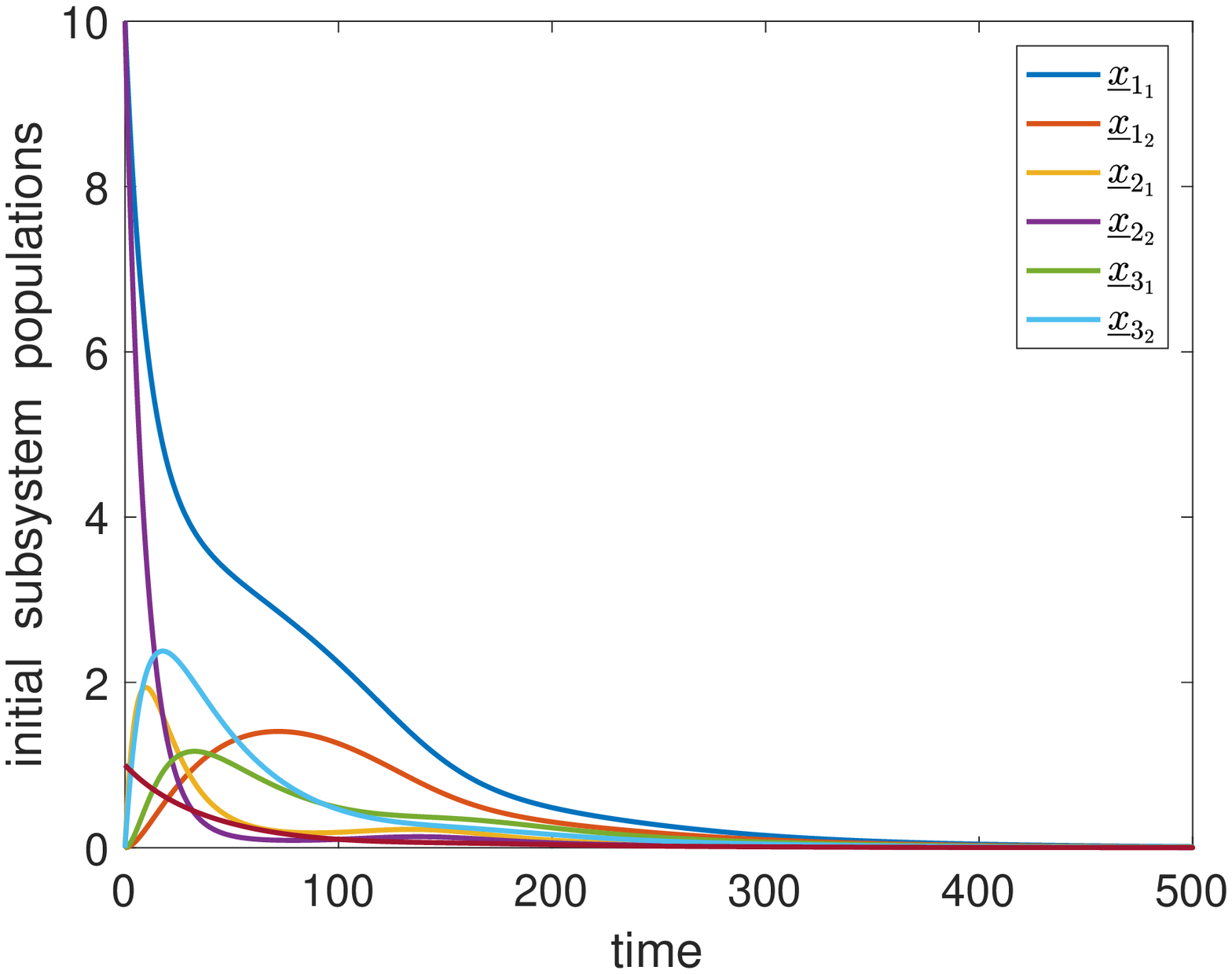}
\includegraphics[width=.45\textwidth]{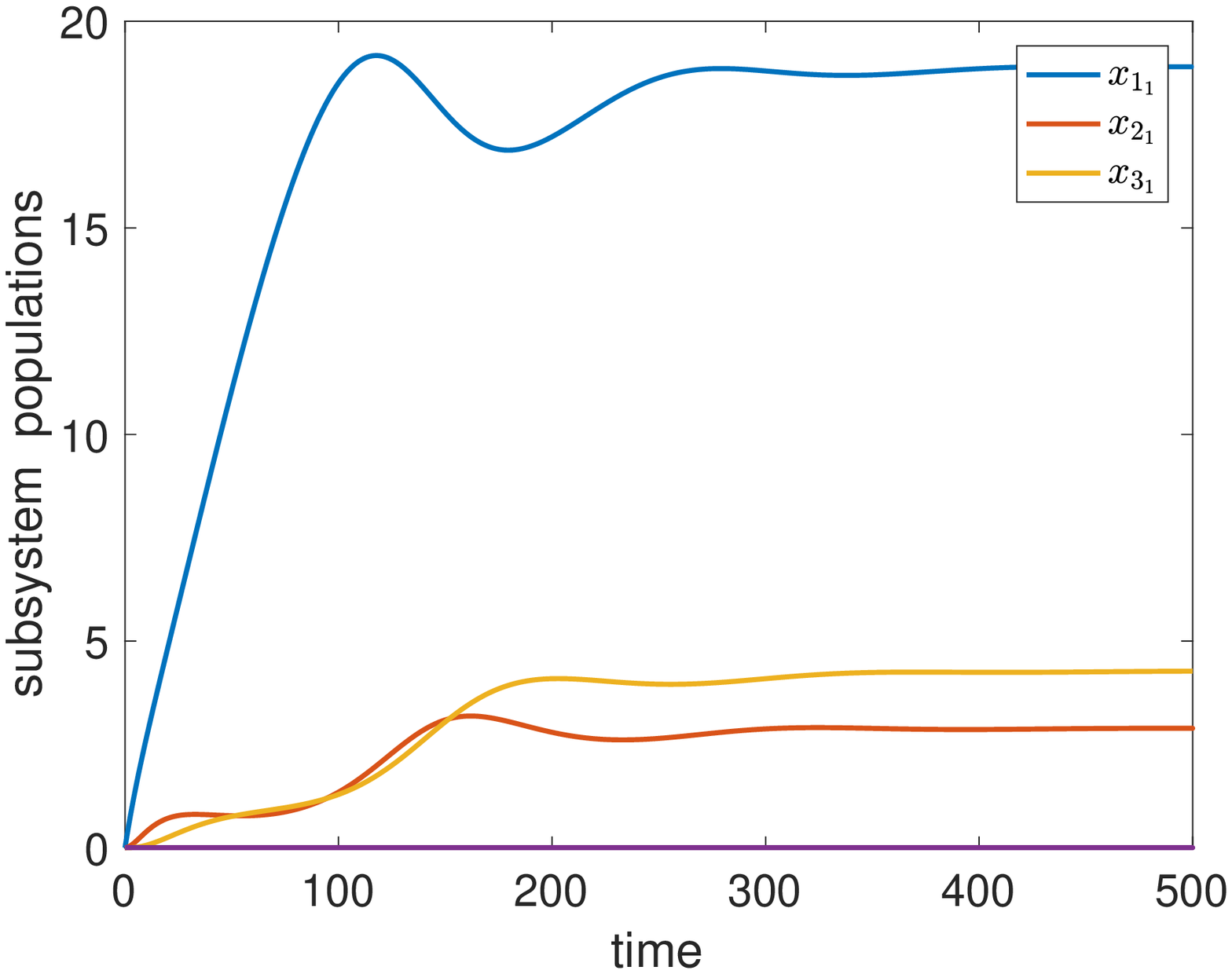}
\end{center}
\caption{The graphical representations of the initial substate (initial substorage) and substate (substorage) functions $\ucbar{x}_{i_k}(t)$ and ${x}_{i_k}(t) = \bar{x}_{i}(t)$ for all $i,k$. The substates that are equal to zero are not labeled. (Case study~\ref{ex:sir}). }
\label{fig:sir_ssm}
\end{figure}

In general terms, the state variable $x_i(t)$ of the original system, Eq.~\ref{eq:sir}, for SIR group dynamics represents the population in group $i$ at time $t$ with the initial population, $x_i(t_0)$. It cannot be used to distinguish the subpopulations derived from the newborn (the only source term in this model, $z_1(t)$) or initial SIR populations (initial conditions, ${x}_{i_0}$). On the other hand, the state variable $x_{i_1}(t)$ of the decomposed system for SIR subgroup dynamics, Eq.~\ref{eq:model_exM}, represents the subpopulation in group $i$ at time $t$, which is transferred from the newborn population during $[t_0,t]$. Similarly, the state variable of the decomposed system, $\ubar{x}_{i_k}(t)$, represents the subpopulation in group $i$ at time $t$, which is transferred from the initial population in group $k$, $x_{k,0}$, during $[t_0,t]$. Parallel interpretations are possible for the throughflow function of the original system, $\tau_i(t,{\bm x})$, and the subthroughflow functions of the decomposed system, $\check{\tau}_{i_k}(t,{\bf x})$, $\hat{\tau}_{i_k}(t,{\bf x})$, $\ubar{\check{\tau}}_{i_k}(t,{\bf x})$, and $\ubar{\hat{\tau}}_{i_k}(t,{\bf x})$, as well.

The proposed dynamic system decomposition methodology, consequently, enables tracking the evolution of the health states of the newborn or initial SIR populations individually and separately. Note that, the solution to the original system through the state-of-the-art techniques can only provide the composite SIR populations without distinguishing the original health states of their subpopulations.

The transient substorages in compartment $2$ along closed subflow path $p^1_{2_1 1_1}\coloneqq 0_1 \mapsto 1_1 \to 2_1 \rightsquigarrow 3_1 \rightsquigarrow 1_1 \to 2_1$ are also computed as an application of the proposed dynamic subsystem decomposition methodology. The links on this path that directly contribute to the cumulative transient substorage ${x}^{1}_{2_1}(t)$ are numbered with red cycle numbers, $m$, in the extended subflow path diagram below:
\begin{equation}
\begin{aligned}
p_{2_1 1_1}^1= 0_1 \mapsto 1_1 \, \xrightarrow{ {\color{red} 1} } \,   2_1 \rightsquigarrow 3_1 \rightsquigarrow   1_1  \, \xrightarrow{ {\color{red} 2} } \,  2_1 \rightsquigarrow  3_1 \rightsquigarrow 1_1  \, \xrightarrow{ {\color{red} 3} } \,  2_1 \rightsquigarrow 3_1 \rightsquigarrow  \cdots
\end{aligned}
\nonumber
\end{equation}
The cumulative transient substorage in subcompartment $2_1$ along $p^1_{2_1 1_1}$, $x^1_{2_1}(t)$, will be approximated by three terms ($m_1=3$) using Eq.~\ref{eq:apxout_in_fs10}:
\begin{equation}
\label{eq:hippe_ss1}
\begin{aligned}
{x}^{1}_{2_1}(t) & \approx \sum_{m=1}^3 {x}^{1,m}_{3_1 2_1 1_1}(t) = {x}^{1,1}_{3_1 2_1 1_1}(t) + {x}^{1,2}_{3_1 2_1 1_1}(t) + {x}^{1,3}_{3_1 2_1 1_1}(t) .
\end{aligned}
\nonumber
\end{equation}
The governing equations, Eqs.~\ref{eq:out_in_fs} and~\ref{eq:out_in_fs2}, for the transient substorage functions, $x^{1,m}_{2_1 1_1 2_1}(t) $, are solved simultaneously together with the decomposed system, Eq.~\ref{eq:model_M}. Numerical results are presented in Fig.~\ref{fig:sir_transient}.
\begin{figure}[h]
\begin{center}
\includegraphics[width=.46\textwidth]{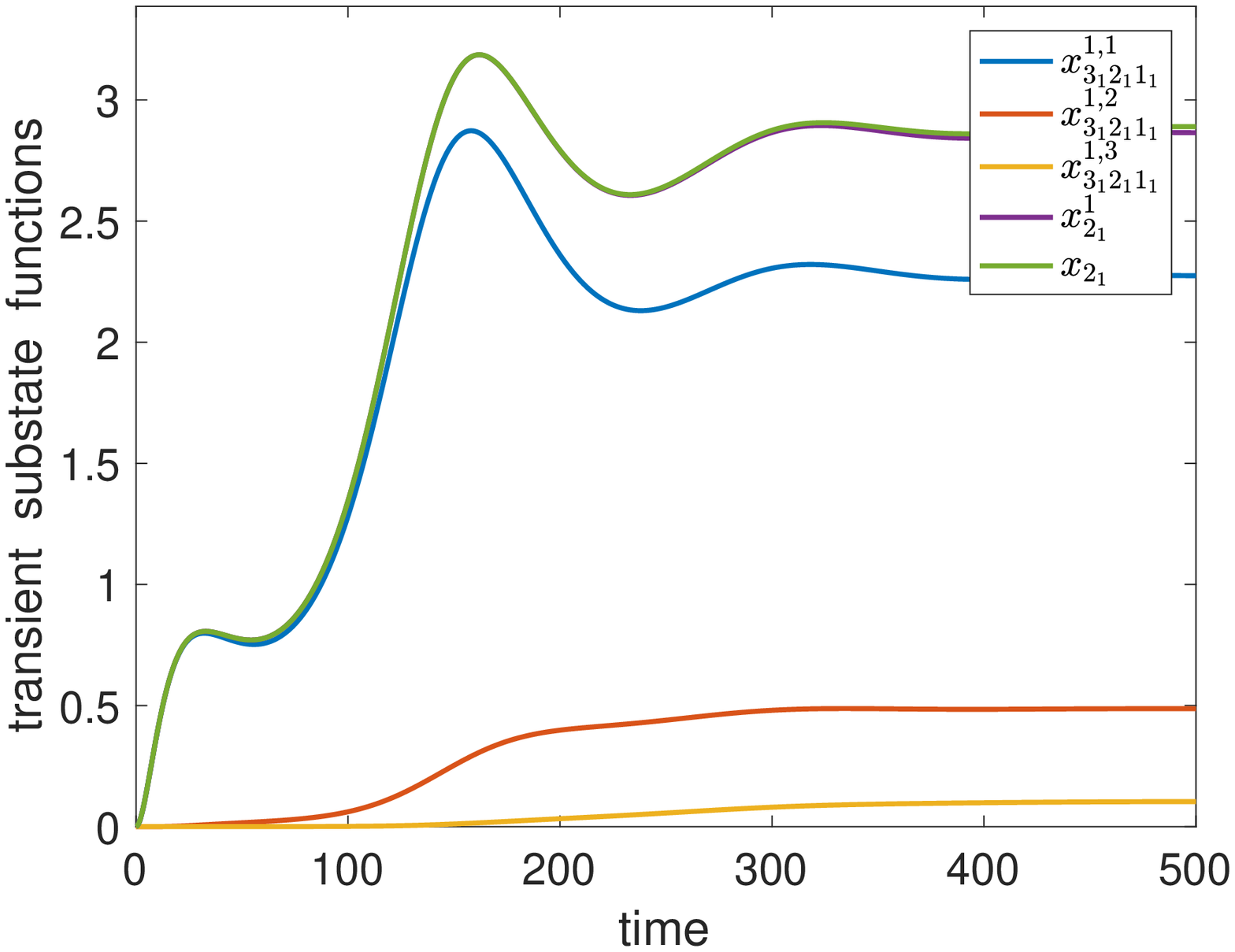}
\includegraphics[width=.45\textwidth]{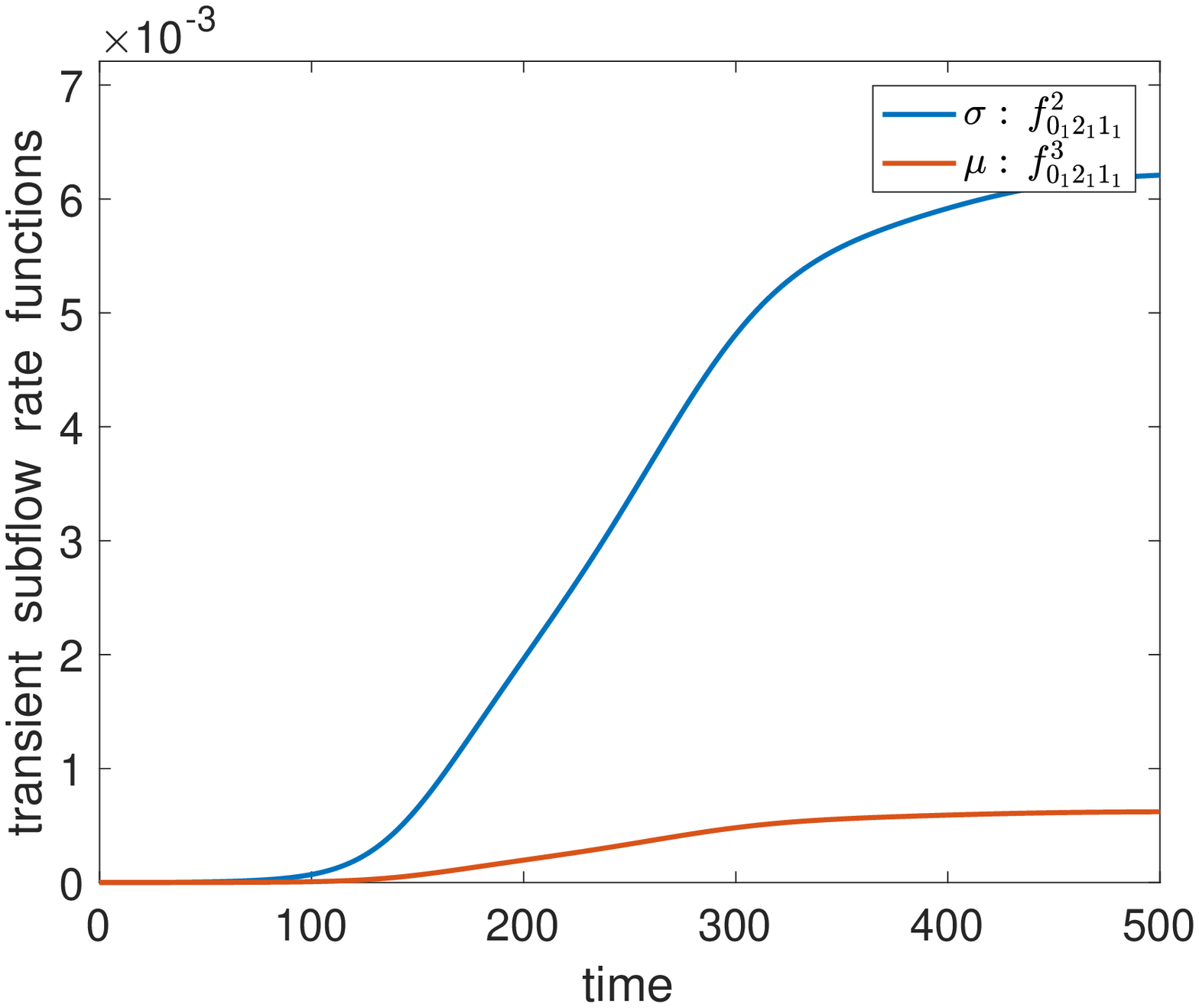}
\end{center}
\caption{The graphical representations of the transient substorage in subcompartment $2_1$, $x^{1,m}_{3_1 2_1 1_1}(t)$, along path $p^1_{2_1 1_1}$, and the transient output rates at subcompartment $2_1$, $f^{2}_{0_1 2_1 1_1}(t)$ and $f^{3}_{0_1 2_1 1_1}(t)$, along paths $p^2_{0_1 1_1}$ and $p^3_{0_1 1_1}$, respectively (Case study~\ref{ex:sir}). }
\label{fig:sir_transient}
\end{figure}
Since the subflow path $p^1_{2_1 1_1}$ covers the entire flow regime in subsystem $1$, $x_{2_1}(t) $ and $x^1_{2_1}(t)$ must be the same. However, the substorage function, $x_{2_1}(t) $, is closely approximated as presented in Fig.~\ref{fig:sir_transient}, that is, $x_{2_1}(t) \approx x^1_{2_1}(t)$. The difference is caused by the truncation errors in the computation of the cumulative transient substorage, and larger ${m_1}$ values improve the approximation. Since subflow path $p_{2_1 1_1}^1$ is closed, $x^1_{2_1}(t)$ is also the simple cycling subflow at subcompartment $2_1$, that is $x^\texttt{c}_{2_1 1_1}(t) = x^\texttt{c}_{2_1}(t) = x^1_{2_1}(t)$. Biologically, the transient substorage functions, $x^{1,m}_{3_1 2_1 1_1}(t)$, represent the population of the mice at time $t$ that are infected $m$ times after being recovered (for $m>1$) during $[t_0,t]$. These subpopulations are decreasing with increasing $m$ values, as expected. Such dynamic characterization of the subpopulations in each SIR group is not possible through the state-of-the-art techniques.

Along the following subflow paths of finite length
\begin{equation}
\begin{aligned}
p_{0_1 1_1}^2 \coloneqq 0_1 \mapsto 1_1 \rightsquigarrow 2_1 \rightsquigarrow 3_1 \rightsquigarrow  1_1 \rightsquigarrow 2_1 \rightsquigarrow 3_1 \rightsquigarrow 1_1  \rightsquigarrow  2_1 \, {\color{red} \xrightarrow{\sigma} } \,  0_1 , \\
p_{0_1 1_1}^3 \coloneqq 0_1 \mapsto 1_1 \rightsquigarrow 2_1 \rightsquigarrow 3_1 \rightsquigarrow 1_1 \rightsquigarrow 2_1 \rightsquigarrow 3_1 \rightsquigarrow 1_1  \rightsquigarrow  2_1 \, {\color{red} \xrightarrow{\mu} } \,  0_1 ,
\end{aligned}
\nonumber
\end{equation}
the transient subflows can also be computed using the governing equations, Eqs.~\ref{eq:out_in_fs} and~\ref{eq:out_in_fs2}, similar to the transient substorages, as discussed above (see Fig.~\ref{fig:sir}). Note that these paths represent the death (external output) of the newborn mice (external input) following two complete infection cycles (after three infections). The only difference between them is the last links which represent the death due to the disease ($\sigma$) and the natural death ($\mu$). The numerical results for the transient external outputs at compartment $2$, $f^{2}_{0_1 2_1 1_1}(t)$ and $f^{3}_{0_1 2_1 1_1}(t)$, which correspond to the last links of these two paths, are depicted in Fig.~\ref{fig:sir_transient}. Because of the corresponding parameter values, $\sigma$ and $\mu$, the death rate due to the disease is 10 times greater than that due to the natural death.
One of these transient output rates at $t=500$ days, $f^{2}_{0_1 2_1 1_1}(500)=0.027$, for example, indicates that $2.7$ out of $100$ mice that were born during $[0,500]$ die per day on the $500^{th}$ day of the experiment due to the disease, after being recovered and getting infected for the third time.

The proposed dynamic subsystem decomposition methodology, consequently, enables tracking the evolution of the health states of an arbitrary population in any of the SIR groups along a particular infection path. Therefore, the effect of an arbitrary population on any SIR group through not only direct but also indirect interactions can be determined. As a result, the spread of the disease from an arbitrary population to the entire population can dynamically be determined and monitored.

The dynamic \texttt{diact} flows and storages are introduced in Section~\ref{apxsec:flows}. The indirect storage transfers from compartment $1$ to $3$ through $2$ can be computed within both the subsystems and initial subsystems, $x^\texttt{i}_{31}(t)$ and $\ubar{x}^\texttt{i}_{31}(t)$, respectively, as formulated in Table~\ref{tab:flow_stor1} \cite{Coskun2017DCSAM}. The graphs of these indirect storage functions are depicted in Fig.~\ref{fig:sir_diact}, as well as the total indirect storage transfer in the same direction, $x^\texttt{i}_{31}(t) + \ubar{x}^\texttt{i}_{31}(t)$. Epidemiologically, these results indicate that the newborn population transferred from the susceptible population indirectly through the infectious population to the recovered population, ${x}^\texttt{i}_{31}(t)$, increases from $0$ to $4.27$ mice during the first 500 days of the experiment. In other words, the number of newborn mice that are recovered after getting infected at least once reaches to $4.27$ by the end of the experiment, i.e., ${x}^\texttt{i}_{31}(500) = 4.27$ mice. This is closer to the steady-state value of the indirect population transfer in the same direction, ${x}^\texttt{i}_{31} = 4.28$ mice, which can be computed as formulated in Table~\ref{tab:flow_stor}. In contrast, the number of initially susceptible mice that are recovered after being infected at least once vanishes in time due to death: $\lim_{t \to \infty} \ucbar{x}^\texttt{i}_{31}(t) = 0 = \ucbar{x}^\texttt{i}_{31}$.

The diagonal residence time matrix is a mathematical system analysis tool introduced in this work as quantitative system indicator \cite{Coskun2017DESM}. The $i^{th}$ diagonal entry of $\mathcal{R}(t,{\bm x})$ at time $t_1$, ${r}_{i} (t_1,{\bm x})$, can be interpreted as the time required for the outward throughflow, at the constant rate of $\hat{\tau}_i(t_1,{\bm x})$, to completely empty compartment $i$, with the storage of $x_{i}(t_1)$. The diagonal structure of the residence time matrix indicates that all subcompartments of compartment $i$ vanish simultaneously, The residence times measure compartmental activity levels \cite{Coskun2017SCSA}. The smaller the residence time the more active the corresponding compartment. The derivative of the residence time matrix will be called the {\em reverse activity rate} matrix \cite{Coskun2017DESM}. The residence times for this model are depicted in Fig.~\ref{fig:sir_diact}. Interestingly, the residence times for both the infectious and recovered mice populations are constant: $r_2(t,{\bm x}) = 9.43$ days and $r_3(t,{\bm x}) = 37.04$ days. This implies that their activity levels do not change in time. The graph of $r_1(t,{\bm x})$ indicates that the oscillations in the residence time of the susceptible mice population fade away in time as the function converges to its steady-state value, $r_1 = 45.08$ days. At time $t = 500$, for example, it would take $r_1(500,{\bm x}) = 45.04$ days at the constant outward throughflow rate of $\hat{\tau}_2(500,{\bm x}) = 0.42$ mice day$^{-1}$ for the susceptible population of $x_1(500)=18.92$ mice to completely vanish\textemdash enter the other two groups or exit the system due to death. The graph shows that the susceptible mice population gets less active, in terms of population transfer, during the first $66.8$ days of the experiment. The residence time gradually increases from the overall minimum $r_1(0,{\bm x}) = 16.13$ days to the maximum $r_1(66.8,{\bm x}) = 80.8$ days and then starts to decrease to $r_1(158.6,{\bm x}) = 38.95$ days during the first period of oscillations, $[0,158.6]$. Accordingly, during this period, the reverse activity rate for the susceptible mice population is positive, $\dot{r}_1(t,{\bm x}) > 0$, over the time interval $[0,66.8)$ and negative, $\dot{r}_1(t,{\bm x}) < 0$, over $(66.8,158.6)$. It can also be seen from the graphical representations that the infectious mice population is more active than the susceptible and recovered mice populations at all times during the course of experiment. The activity levels of the SIR groups are ordered as follows:
\[ r_2(t,{\bm x}) < r_3(t,{\bm x}) < r_1(t,{\bm x}) \quad \mbox{for} \quad t>18.4 . \]

The system approaches an epidemic equilibrium by the end of the experiment, as presented in the graphs of Fig.~\ref{fig:sir_ssm}. At this steady state, the system information becomes
\begin{equation}
\label{eq:ss_info}
F = \left[
\begin{array}{ccc}
0 & 0 & 0.090 \\
0.306 & 0 & 0 \\
0 & 0.116 & 0
\end{array}
\right], \,
\bm{x} = \left[
\begin{array}{ccc}
   18.929 \\
    2.890 \\
    4.281
\end{array}
\right],  \,
\bm{y} = \left[
\begin{array}{ccc}
    0.114 \\
    0.191 \\
    0.026
\end{array}
\right],  \,
\bm{z} = \left[
\begin{array}{ccc}
    0.33 \\
         0 \\
         0
\end{array}
\right] .
\nonumber
\end{equation}
\begin{figure}[h]
\begin{center}
\includegraphics[width=.45\textwidth]{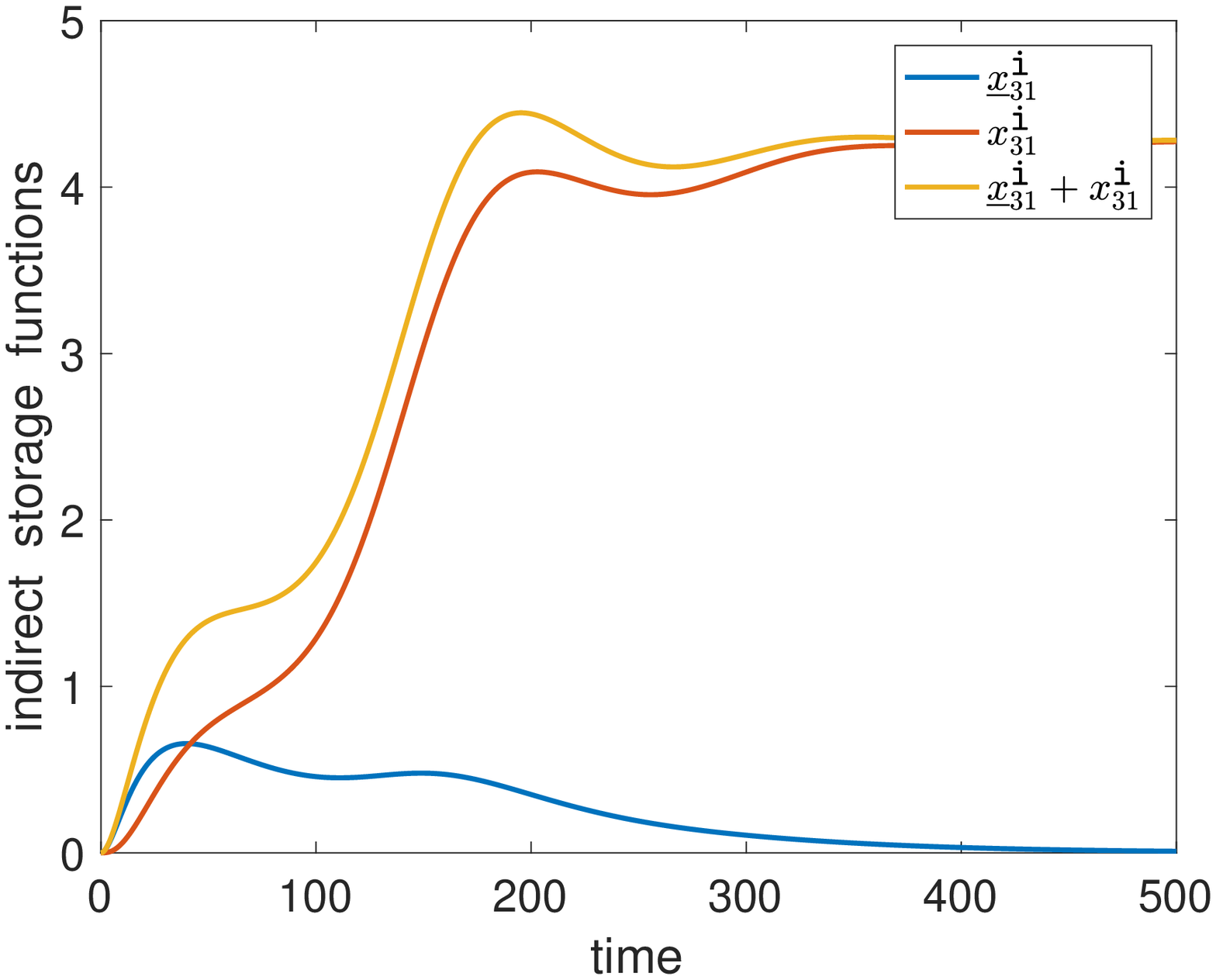}
\includegraphics[width=.47\textwidth]{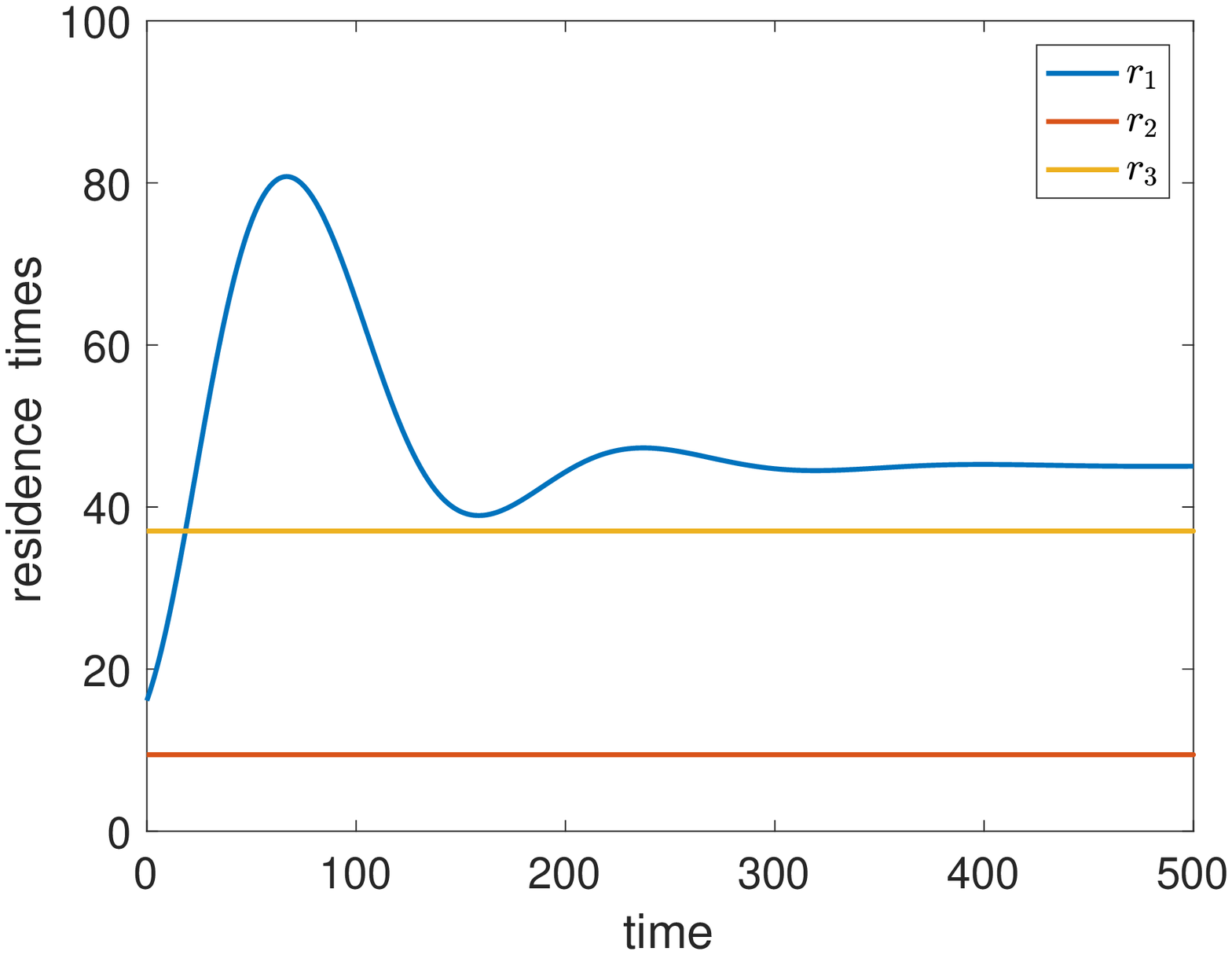}
\end{center}
\caption{The graphical representation for the indirect storages from compartment $1$ to $3$ through $2$ within both the subsystems and initial subsystems, $x^\texttt{i}_{31}(t)$ and $\ucbar{x}^\texttt{i}_{31}(t)$, respectively, and the residence times, $r_i(t)$ (Case study~\ref{ex:sir}). }
\label{fig:sir_diact}
\end{figure}

The static $\texttt{diact}$ flows and storages in matrix form are listed in Table~\ref{tab:flow_stor} as formulated by \cite{Coskun2017SCSA}. For this SIRS model, the composite indirect flow and storage matrices become
\begin{equation}
\label{eq:diact}
T^\texttt{i} = \left[
\begin{array}{ccc}
0.090  &  0.090  &  0 \\
0  &  0.066  &  0.066 \\
0.116  &  0  & 0.025
\end{array}
\right]
\quad \mbox{and} \quad
X^\texttt{i} = \left[
\begin{array}{ccc}
4.053  & 4.053 &  0 \\
0 & 0.619 & 0.619 \\
4.281 & 0 & 0.917
\end{array}
\right] .
\end{equation}
The zero entries of the matrices indicate that there is no indirect flow, and therefore, no storage generated in the corresponding flow direction. For example, the $(1,3)-$entry of $T^\texttt{i}$ is $\tau^\texttt{i}_{13} = 0$, which indicates that although there is a direct flow from the recovered population to the susceptible group, $f_{13}=0.090$, there is no indirect flow in the same direction through the infectious population at the epidemic equilibrium.

Due to the reflexivity of cycling flows, the diagonal elements of $T^\texttt{i}$ and $X^\texttt{i}$ represent the cycling flows and storages \cite{Coskun2017DCSAM}. For example, the cycling flow rate at compartment $3$, $\tau^\texttt{c}_{3} = \tau^\texttt{i}_{33} = 0.025$, indicates that $0.025$ recovered mice become susceptible by losing immunity, getting infected, and being recovered again per day. In other words, $2.5$ out of $100$ mice complete such a recovery cycle per day. This cycling flow constantly maintains the cycling storage of $x^\texttt{c}_{3} = x^\texttt{i}_{3 3} = 0.917$ mice within the recovered mice population. The other static $\texttt{diact}$ flows and associated storages can similarly be computed and interpreted.

Evidently, the detailed information and inferences enabled by the proposed\break methodology cannot be obtained by analyzing the original system through the state-of-the-art techniques.

\section{Discussion}
\label{sec:disc}

We introduced a nonlinear decomposition principle for dynamic compartmental systems based on the novel system and subsystem decomposition methodologies in the present paper. A comprehensive deterministic mathematical method is then developed for the dynamic analysis of nonlinear compartmental systems. The method is applied to various models from literature to demonstrate its efficiency and wide applicability. The results indicate that the proposed theory and methodology provides significant advancements in the theory, methodology, and practicality of the current nonlinear dynamic compartmental system analysis.

Many natural phenomena can be modeled through compartmental systems. Although good rationales are offered in the literature for the analysis of compartmental systems, they are mainly for special cases, such as linear models and static networks. Realistically, nature is always on the move and its systems are constantly changing to meet ever-renewing circumstances. Therefore, the need for the dynamic analysis of nonlinear compartmental systems has always been present.

This is the first manuscript in literature that proposes a theory and develops a comprehensive method for holistic analysis of nonlinear dynamic compartmental systems. The proposed theory is based on the dynamic system and subsystem decomposition methodologies. Nonlinear compartmental systems can be decomposed into mutually exclusive and exhaustive subsystems through the system decomposition methodology. The subsystems are then further decomposed along a given set of mutually exclusive and exhaustive subflow paths through the subsystem decomposition methodology. While the dynamic system decomposition determines the distribution of initial stocks and external inputs, as well as the organization of the associated storages generated by these stocks and inputs individually and separately within the system, the dynamic subsystem decomposition determines the distribution of intercompartmental flows and the organization of associated storages within the subsystems. The proposed mathematical method, therefore, as a whole, yields the dynamic decomposition of system flows and storages to the utmost level.

The {\em system} decomposition methodology yields the subthroughflows and substorages that respectively represent the flows at and storages in each compartment separately derived from individual initial stocks and external inputs. More specifically, the composite compartmental storage and throughflow, $x_i(t)$ and $\tau_i(t)$, are dynamically partitioned into the subcompartmental substorage and subthroughflow segments, $x_{i_k}(t)$ and $\tau_{i_k}(t)$ ($\ubar{x}_{i_k}(t)$ and $\ubar{\tau}_{i_k}(t)$), respectively, based on their constituent sources from external inputs (initial stocks), $z_k(t)$ ($x_{i_0}$). In other words, this methodology enables tracking the evolution of initial stocks and external inputs, as well as associated storages individually and separately within the system. The {\em subsystem} decomposition methodology then yields the transient and the dynamic direct, indirect, acyclic, cycling, and transfer ($\texttt{diact}$) flows and associated storages transmitted along a given flow path or from one compartment, directly or indirectly, to any other. The subsystem partitioning, therefore, enables tracking the fate of intercompartmental flows and associated storages along a particular or all subflow paths within the subsystems. Therefore, the spread of an arbitrary flow or storage segment from one compartment to the entire system can be determined and monitored. Moreover, a history of compartments visited by a particular system flow and storage can also be compiled.

The proposed dynamic methodology also constructs a base for the development of new mathematical system analysis tools as quantitative system indicators. Multiple novel measures, such as the substorage, subthroughflow, \texttt{diact} flow and storage matrices, are already introduced in the present manuscript. The illustrative case studies in Section~\ref{sec:results} and Appendix~\ref{apxsec:ex} demonstrate efficiency and wide applicability of the proposed methodology and measures. More dynamic and static measures and indices have recently been introduced in the context of ecosystem ecology by \cite{Coskun2017DCSAM,Coskun2017DESM,Coskun2017SCSA,Coskun2017SESM}.

The proposed methodology can easily be extended for similar analyses to systems of higher order ordinary and partial differential equations whose source terms governing the intercompartmental interactions are in the form of the conservative compartmental systems as defined in Definition~\ref{def:conservative}.

In summary, we consider that the proposed mathematical theory and methodology brings a novel, formal, deterministic, complex system theory to the service of urgent and challenging natural problems of the day.

\section*{Appendices}

The dynamic system and subsystem decomposition methodologies for the initial subsystem in parallel to the decomposition of the original system, the applications of the proposed method to special cases, such as linear and static models, and additional method formulations are presented in this section.

\appendix

\section{Initial system decomposition}
\label{sec:initsystemd}

The dynamic system decomposition methodology is introduced in Section~\ref{sec:systemd}. In order to analyze the flow distribution and the storage organization derived from the initial stocks individually and separately within the system, the initial subsystem will further be decomposed into {\em initial subsystems} in parallel to the system decomposition. The {\em initial system decomposition methodology} dynamically decomposes the composite subthroughflows and substorages within the initial subsystem into segments based on their constituent sources from the initial stocks. In other words, the initial system decomposition enables dynamically tracking the evolution of the initial stocks individually and separately within the system.

With some abuse of terminology, we will call the subsystems of the initial subsystem the initial subsystems, instead of the initial sub-sub-systems. Each initial subsystem is driven by an initial stock. Therefore, if there is no initial stock in a compartment (initial condition is zero) the corresponding initial subsystem becomes null (see Fig.~\ref{fig:sc} and~\ref{fig:fd}). The number of initial subsystems, therefore, is equal to the number of initial conditions (or compartments). Consequently, there are $n$ initial subsystems, one for each initial stock, indexed by $k=1,\ldots,n$. The dynamic initial system decomposition methodology is introduced in this section.

\subsection{Initial state decomposition}
\label{sec:isd}

Similar to the original system, the initial subcompartment can further be decomposed into $n$ subcompartments (see Fig.~\ref{fig:fd}). We will use the notation $x_{i_{k,0}} (t)$ for $k^{th}$ substate of $i^{th}$ initial substate or ${\ubar x}_{i_k} (t) \coloneqq x_{i_{k,0}} (t)$ for notational convenience. Based on this further decomposition of the initial substates, we have
\begin{equation}
\label{eq:apxi_decomposition}
x_{i_0}(t) = {\ubar x}_{i}(t) = \sum \limits_{k=1}^n {\ubar x}_{i_k}(t)
\end{equation}
for $i=1,\ldots,n$, and the corresponding initial conditions are
\begin{equation}
\label{eq:apxi_inits}
{\ubar x}_{i_k}(t_0) = \delta_{ik} \, x_{i_0}(t_0)  = \left \{
\begin{aligned}
& x_{i_0}(t_0) = x_{i,0}, \quad i = k \\
& 0. \quad \quad \quad \quad \quad \quad \, \, \, i \neq k
\end{aligned}
\right .
\end{equation}

The {\em initial substate matrix} function is then defined as
\begin{equation}
\label{eq:X0}
{\ubar X}(t) \coloneqq \left ( {\ubar x}_{i_k}(t) \right ) = \left [ \ubar{\mathbf{x}}_1(t) \, \ldots \, \ubar{\mathbf{x}}_n(t) \right ].
\end{equation}
We also define the {\em initial state} and the $k^{th}$ {\em initial substate matrix} functions, $\ubar{\mathcal{X}}(t)$ and $\ubar{\mathcal{X}}_k(t)$, as
\begin{equation}
\label{eq:ism}
\ubar{\mathcal{X}}(t) \coloneqq \mathcal{X}_0(t)  = \diag\left ( \ubar{\bm{x}}(t) \right ) \quad \mbox{and} \quad \ubar{\mathcal{X}}_k(t) \coloneqq \diag\left ( \ubar{\mathbf{x}}_k(t) \right )
\end{equation}
for $k=1,\ldots,n$. These matrices will, alternatively, be called the {\em initial substorage} and $k^{th}$ {\em initial substorage matrix} functions, respectively. The initial conditions of these matrices given in Eq.~\ref{eq:apxi_inits} can be expressed in matrix form as
\begin{equation}
\label{eq:X0_init}
\ubar{X}(t_0) = \ubar{\mathcal{X}}(t_0) = \diag (\bm x_0) \quad \mbox{and} \quad \ubar{\mathcal{X}}_k(t_0) \coloneqq \diag \left ( {x}_{k,0} \, \bm{e}_k  \right ) .
\end{equation}
Note that
\begin{equation}
\label{eq:decom_sumI}
\begin{aligned}
\ubar{\bm{x}}(t) = \ubar{X}(t) \, \mathbf{1} \quad \mbox{and} \quad \ubar{\mathbf{x}}_k(t) & = \ubar{\mathcal{X}}_k(t) \, \mathbf{1} .
\end{aligned}
\end{equation}

The state decomposition methodology for the initial subsystem can be summarized as follows:
\begin{center}
\begin{tikzpicture}
  \matrix (m) [matrix of math nodes,row sep=1em,column sep=10em,minimum width=1em]
  {
     {
     \mathbf{x}_0 = \ubar{\bm x}(t) =
 \begin{bmatrix}
  \ubar{x}_{1} \\
  \ubar{x}_{2} \\
  \vdots  \\
  \ubar{x}_{n}
 \end{bmatrix}
 }
 &
 {
 \ubar{X}(t) =
 \begin{bmatrix}
  \ubar{x}_{1_1} & \ubar{x}_{1_2} & \cdots & \ubar{x}_{1_n} \\
  \ubar{x}_{2_1} & \ubar{x}_{2_2} & \cdots & \ubar{x}_{2_n} \\
  \vdots  & \vdots  & \ddots & \vdots  \\
  \ubar{x}_{n_1} & \ubar{x}_{n_2} & \cdots & \ubar{x}_{n_n}
 \end{bmatrix}
 }
 \\
 };
  \path[-stealth, decorate]
    (m-1-1.east|-m-1-2) edge node [below] {state decomposition}
            node [above] {$\ubar{x}_i(t) = \sum \limits_{k=1}^n \ubar{x}_{i_k}(t)$} (m-1-2) ;
\end{tikzpicture}
\end{center}

\subsection{Initial flow decomposition}
\label{sec:ifrd}

Similar to the flow decomposition of the original system introduced in Section~\ref{sec:rd}, the flows within the initial subsystem are also decomposed into {\em initial subflows} that represent the rate of subflow segments between the initial subcompartments (see Fig.~\ref{fig:fd}).

For notational convenience, we set
\begin{equation}
\label{eq:init_flow}
{\ubar f}_{ij}(t,{\mathbf x}) \coloneqq {f}_{i_0 j_0}(t,{\mathbf x}), \, \, \, \ubar{z}_{i} (t,{\bf x}) \coloneqq \ubar{f}_{i 0}(t,\mathbf{x}) = 0, \, \, \, \mbox{and} \quad \ubar{y}_{i} (t,{\bf x}) \coloneqq \ubar{f}_{0i}(t,\mathbf{x}) ,
\end{equation}
for $i,j=1,\ldots,n$. We accordingly define the {\em initial subflow rate}, {\em input}, and {\em output matrix} functions as
\begin{equation}
\label{eq:init_FIO}
\ubar{F}(t,{\mathbf x}) \coloneqq F_0(t,{\mathbf x}), \quad \ubar{\mathcal Z}(t,{\mathbf x}) \coloneqq \mathcal{Z}_0(t,{\mathbf x}) = \bm{0}, \quad \mbox{and} \quad \ubar{\mathcal Y}(t,{\mathbf x}) \coloneqq \mathcal{Y}_0(t,{\mathbf x}) .
\end{equation}
We will use the notation $f_{{i_{k,0}}{j_{k,0}}}(t,{\mathbf x})$ for subflow from initial subcompartment $j_k$ with substorage of ${\ubar x}_{j_k}(t)= {x}_{j_{k,0}}(t)$ to $i_k$ with substorage of ${\ubar x}_{i_k}(t) = {x}_{i_{k,0}}(t)$, for $i,j,k=1,\ldots,n$. For notational convenience, we simplify the notation as ${\ubar f}_{i_k j_k}(t, {\mathbf x}) \coloneqq f_{{i_{k,0}}{j_{k,0}}}(t,{\mathbf x})$ and set $\ubar{z}_{i_k}(t,\mathbf{x}) \coloneqq \ubar{f}_{i_k 0}(t,\mathbf{x}) =0$ and $\ubar{y}_{i_k}(t,\mathbf{x}) \coloneqq \ubar{f}_{0 i_k}(t,\mathbf{x}) $.

The subflow decomposition within the initial subsystem can then be formulated as follows:
\begin{equation}
\label{eq:ic_rates}
\begin{aligned}
{\ubar f}_{i_k j_k}(t,{\mathbf x}) \coloneqq \frac{ {\ubar x}_{j_k}(t)} { {\ubar x}_j(t)} \, {\ubar f}_{ij}(t,{\mathbf x}) =  \frac{\ubar{x}_{j_k}(t)} {x_{j}(t)} \, {f_{ij}(t,{\bm x})}
\end{aligned}
\end{equation}
for $i,j,k=1,\ldots,n$. The second equality is due to Eq.~\ref{eq:csc_dense}. The initial state decomposition formulated in Eq.~\ref{eq:apxi_decomposition} implies that
\begin{equation}
\label{eq:rate_decompositionB}
{\ubar f}_{ij}(t,{\mathbf x}) = \sum \limits_{k=1}^n  {\ubar f}_{i_k j_k}(t,{\mathbf x})
\end{equation}
for $i,j=1,\ldots,n$. It can be seen from Eq.~\ref{eq:ic_rates} that the {\em initial flow} and {\em subflow rate intensities} between the same compartments in the same flow direction are the same. That is,
\begin{equation}
\label{eq:csc_denseI}
\frac{ \ubar{f}_{{i_k}{j_k}}(t,{\mathbf x}) }{ \ubar{x}_{j_k}(t) }  = \frac{ \ubar{f}_{ij}(t,{\mathbf x}) } { \ubar{x}_{j}(t)} = \frac{ f_{{i_0}{j_0}}(t,{\mathbf x}) }{ x_{j_0}(t) } =  \frac{ f_{ij}(t,\bm{x}) } {x_{j}(t)}
\end{equation}
for $i,j,k=1,\ldots,n$, where the denominators are nonzero. The last two equalities are due to Eq.~\ref{eq:csc_dense}.

The decomposition factors, $\ubar{d}_{j_k}({\mathbf x})$, for the initial subflows with the following definition and properties
\begin{equation}
\label{eq:ic_cf}
\ubar{d}_{j_k}({\mathbf x}) \coloneqq \frac{ {\ubar x}_{j_{k}}(t)} { {\ubar x}_{j}(t)} , \quad 0 \leq \ubar{d}_{j_k}({\mathbf x}) \leq 1,  \quad \mbox{and} \quad  \sum_{k=0}^{n} \ubar{d}_{j_k}({\mathbf x}) = 1 ,
\end{equation}
form another continuous partition of unity. Note also that, due to Eq.~\ref{eq:apxi_decomposition}, $\ubar{x}_{j_k}(t) \leq \ubar{x}_{j}(t)$. The decomposition factors are, therefore, well-defined even if $\ubar{x}_j(t) \to 0$. The respective {\em initial decomposition} and $k^{th}$ {\em initial decomposition matrices}, $\ubar{D}({\mathbf x}) \coloneqq \left ( \ubar{d}_{j_k}({\mathbf x}) \right )$ and $\mathcal{\ubar D}_k({\mathbf x}) \coloneqq \diag{([\ubar{d}_{1_k}({\mathbf x}), \ldots, \ubar{d}_{n_k}({\mathbf x})])}$, for the initial subsystems can be formulated, accordingly, as
\begin{equation}
\label{eq:apxcfactorsM2}
\begin{aligned}
\ubar{D}({\mathbf x}) &= \mathcal{\ubar X}^{-1}(t) \, \ubar{X}(t) \quad \mbox{and} \quad
\mathcal{\ubar D}_k({\mathbf x}) = \mathcal{\ubar X}^{-1}(t) \, \mathcal{\ubar X}_k(t)
\end{aligned}
\end{equation}
for $k=1,\ldots,n$. Equations~\ref{eq:decom_sumI} and~\ref{eq:ic_cf} imply that
\begin{equation}
\label{eq:cfactorsI}
\begin{aligned}
\bm{1} = \ubar{\mathcal{X}}^{-1}(t) \, \mathbf{x}_0(t) & = \ubar{\mathcal{X}}^{-1}(t) \, \ubar{X}(t) \, \bm{1} = \ubar{D}({\mathbf x}) \, \bm{1} .
\end{aligned}
\end{equation}
From Eqs.~\ref{eq:cfactorsM} and~\ref{eq:cfactorsI}, we also have
\begin{equation}
\label{eq:cfactors_new}
\begin{aligned}
\bm{x}(t) = \ubar{\bm{x}}(t) + \bar{\bm{x}}(t) = \ubar{X}(t) \, \bm{1} + {X}(t) \, \bm{1} ,
\end{aligned}
\end{equation}
similar to Eq.~\ref{eq:decom_sum}.

We define the {\em $k^{th}$ initial subflow matrix} function as
\begin{equation}
\label{eq:matrix_in_subrates}
\begin{aligned}
{\ubar F}_k(t, {\mathbf x} ) \coloneqq \left ( {\ubar f}_{i_k j_k}(t,{\mathbf x}) \right )
\end{aligned}
\end{equation}
for $k=1,\ldots,n$. Using Eq.~\ref{eq:ic_rates}, ${\ubar F}_k(t, {\mathbf x} )$ can be expressed in the following form:
\begin{equation}
\label{eq:matrix_subratesM2}
\ubar{F}_k(t,{\mathbf x}) = \ubar{F}(t,{\mathbf x}) \, \ubar{\mathcal{D}}_k({\mathbf x}) = \ubar{F}(t,{\mathbf x}) \, \ubar{\mathcal{X}}^{-1}(t) \, \ubar{\mathcal{X}}_k(t) = {F}(t,\bm{x}) \, \mathcal{X}^{-1}(t) \, \ubar{\mathcal{X}}_k(t) .
\end{equation}
That is, the $k^{th}$ initial decomposition matrix, $\mathcal{\ubar D}_k({\mathbf x})$, decomposes the direct initial flow matrix, $\ubar{F}(t,\bf{x})$, into the initial subflow matrices, $\ubar{F}_k(t,{\mathbf x})$. The last equality is derived from Eqs.~\ref{eq:ism} and~\ref{eq:matrix_subratesM}, as these equations imply that
\begin{equation}
\label{eq:matrix_subratesM3}
\ubar{F}(t,{\mathbf x}) = {F}_0(t,{\mathbf x}) =  {F}(t,{\bm x}) \, \mathcal{X}^{-1}(t) \, \mathcal{X}_0(t) = {F}(t,\bm{x}) \, \mathcal{X}^{-1}(t) \, \ubar{\mathcal{X}}(t) .
\end{equation}
Similarly, the $k^{th}$ {\em initial output matrix} function,
\begin{equation}
\label{eq:kthomI}
\ubar{\mathcal{Y}}_k(t,{\mathbf x}) \coloneqq \diag{\left (\left [ \ubar{f}_{0 1_k}(t,{\mathbf x}) , \ldots, \ubar{f}_{0 n_k}(t,{\mathbf x}) \right] \right ) },
\end{equation}
can be expressed in matrix form as
\begin{equation}
\label{eq:output_subratesMI}
\ubar{\mathcal{Y}}_k(t,{\mathbf x}) = \ubar{\mathcal{Y}}(t,{\mathbf x}) \, \ubar{\mathcal{D}}_k ({\mathbf x}) = \ubar{\mathcal{Y}}(t,{\mathbf x}) \, \ubar{\mathcal{X}}^{-1}(t) \, \ubar{\mathcal{X}}_k(t) = \mathcal{Y}(t,\bm{x}) \, \mathcal{X}^{-1}(t) \, \ubar{\mathcal{X}}_k(t)
\end{equation}
and the {\em $k^{th}$ initial input matrix} function becomes
\begin{equation}
\label{eq:kinputmI}
\ubar{\mathcal{Z}}_k(t,{\mathbf x}) = \bm{0}
\end{equation}
for $k=1,\ldots,n$. The {\em initial output} and $k^{th}$ {\em output vector} for the $k^{th}$ initial subsystem, $\ubar{\bm y} (t,{\mathbf x})$ and $\ubar{\mathbf{y}}_k (t,{\mathbf x})$, can be defined as follows:
\begin{equation}
\label{eq:apxin_out_k}
\begin{aligned}
\ubar{\bm{y}} (t,{\mathbf x}) \coloneqq \ubar{\mathcal{Y}} (t,{\mathbf x}) \, \mathbf{1}
\quad \mbox{and} \quad
\ubar{\mathbf{y}}_k (t,{\mathbf x}) \coloneqq \ubar{\mathcal{Y}}_k (t,{\mathbf x}) \, \mathbf{1} .
\end{aligned}
\end{equation}
The {\em initial input} and $k^{th}$ {\em input vectors} are $\ubar{\bm{z}} (t,{\mathbf x}) \coloneqq \ubar{\mathcal{Z}} (t,{\mathbf x}) \, \mathbf{1} = \bm{0} $ and $ \ubar{\mathbf{z}}_k (t,{\mathbf x}) \coloneqq  \ubar{\mathcal{Z}}_k (t,{\mathbf x}) \, \mathbf{1}  = \bm{0} $.

Using these notations, the flow decomposition given in Eq.~\ref{eq:rate_decompositionB} can be expressed in the following matrix form:
\begin{equation}
\label{eq:apxflow_decomposition}
\ubar{F}(t,\bm{x}) = \sum \limits_{k=1}^n  \ubar{F}_k(t,{\mathbf x}) , \, \, \, \ubar{\mathcal{Y}}(t,\bm{x}) = \sum \limits_{k=1}^n \ubar{\mathcal{Y}}_k(t,{\mathbf x}) , \, \, \, \ubar{\mathcal{Z}}(t,\bm{x}) = \sum \limits_{k=1}^n  \ubar{\mathcal{Z}}_k(t,{\mathbf x}) = \bm{0} .
\end{equation}
The equivalence of the flow, initial flow, and initial subflow intensities given in Eq.~\ref{eq:csc_denseI} can also be expressed in matrix form as follows:
\begin{equation}
\label{eq:intensity_flowI}
\ubar{F}_k(t,{\mathbf x})  \, \ubar{\mathcal{X}}_k^{-1}(t)  = \ubar{F}(t,\mathbf{x}) \, \ubar{\mathcal{X}}^{-1}(t) = F_0(t,{\mathbf x})  \, \mathcal{X}_0^{-1}(t) = F(t,{\bm x}) \, \mathcal{X}^{-1}(t)
\end{equation}
for $k=1,\ldots,n$, similar to Eq.~\ref{eq:intensity_flow}.

The initial flow decomposition methodology for the initial subsystem given in Eq.~\ref{eq:ic_rates} can be schematized as follows:
\begin{center}
\begin{tikzpicture}
  \matrix (m) [matrix of math nodes,row sep=.0em,column sep=13em,minimum width=1em]
  {
   \ubar{F} (t,{\mathbf x} ) = {F}_0 (t,{\mathbf x} )
   &
   \ubar{F}_k (t,{\mathbf x}) , \, \, \, k=1,\ldots,n
   \\
 \veq &    \hspace{-2cm} \veq \\
    {
      \begin{bmatrix}
  \ubar{f}_{11} & \cdots & \ubar{f}_{1n} \\
  \ubar{f}_{21} & \cdots & \ubar{f}_{2n} \\
  \vdots  & \ddots & \vdots  \\
  \ubar{f}_{n1} & \cdots & \ubar{f}_{nn}
 \end{bmatrix}
       }
 &
     {
     \begin{bmatrix}
  \ubar{f}_{1_k 1_k} & \cdots & \ubar{f}_{1_k n_k} \\
  \ubar{f}_{2_k 1_k} & \cdots & \ubar{f}_{2_k n_k} \\
  \vdots  & \ddots & \vdots  \\
  \ubar{f}_{n_k 1_k} & \cdots & \ubar{f}_{n_k n_k}
 \end{bmatrix}
      }
 \\ } ;
  \path[-stealth, decorate]
    (m-3-1.east|-m-3-2) edge node [below] {flow decomposition}
            node [above] {$\ubar{f}_{{i_k}{j_k}}(t,{\mathbf x}) = \frac{\ubar{x}_{j_k} (t)} {\ubar{x}_{j} (t)} \, {\ubar{f}_{ij}  (t,{\mathbf x}) } $} (m-3-2) ;
\end{tikzpicture}
\end{center}

\subsection{Initial subsystems}
\label{sec:dic}

The dynamic initial system decomposition methodology is composed of the substate and subflow decomposition components as detailed above. Through this methodology, the initial system is explicitly decomposed into mutually exclusive and exhaustive {\em initial subsystems}, each of which is generated by a single initial stock (see Fig.~\ref{fig:sc} and~\ref{fig:fd}).

The $k^{th}$ subcompartments of each initial subcompartment together with the corresponding $k^{th}$ initial substates, initial subflow rates, inputs, and outputs constitute the $k^{th}$ {\em initial subsystem}. Therefore, the system decomposition methodology generates mutually exclusive and exhaustive subsystems that are running within the initial subsystem and have the same structure and dynamics as the initial subsystem itself, except for their initial conditions. These otherwise-decoupled subsystems are coupled through the decomposition factors. In a system with $n$ compartments, each initial subcompartment has $n$ subcompartments. Consequently, the system has $n$ initial subsystems indexed by $k=1,\ldots,n$. The substorage of each of these $n$ initial subcompartments is derived from a single initial stock. If no initial stock in a compartment (the initial conditions is zero), the corresponding initial subsystem becomes null.

Similar to the governing equations for the original system Eq.~\ref{eq:model1}, the governing equations for the initial subsystem can be written in the following vector form:
\begin{equation}
\label{eq:subsystemsVI}
\begin{aligned}
\dot {\ubar{\bm x}}(t) & = \ubar{\check{\bm \tau}}(t,{\mathbf x}) - \ubar{\hat{\bm \tau}}(t,{\mathbf x}), \quad \ubar{\bm{x}}(t_0) = \bm{x}_0
\end{aligned}
\end{equation}
as given in Eq.~\ref{eq:subsystemsV} where $\ubar {\bm x}(t) = {\mathbf x}_0$, $\ubar{\check{\bm \tau}}(t,{\mathbf x}) = {\check{\bm \tau}}_0(t,{\mathbf x})$, and $\ubar{\hat{\bm \tau}}(t,{\mathbf x}) = {\hat{\bm \tau}}_0(t,{\mathbf x})$. The governing equations for the $k^{th}$ initial subsystem can then be written in vector form as
\begin{equation}
\label{eq:initial_subsystemV1}
\begin{aligned}
\dot {\ubar{\mathbf x}}_k(t) & = \check{\ubar{\bm{\tau}}}_k(t,{\mathbf x}) - \hat{\ubar{\bm{\tau}}}_k(t,{\mathbf x}) \\
& = {\ubar F}_k(t,\mathbf x) \, \bm{1} - \left ( \ubar{\mathbf{y}}_k(t,\mathbf x)  + \ubar{F}_k^T(t,\mathbf x) \, \bm{1} \right )
\end{aligned}
\end{equation}
for $k = 1,\ldots,n$. The corresponding initial conditions become $\ubar{\mathbf x}_k(t_0) = {x}_{k,0} \, \bm{e}_k$.

The $k^{th}$ {\em inward} and {\em outward subthroughflow vectors}, $\check{\ubar{\bm{\tau}}}_k(t,{\mathbf x})$ and $\hat{\ubar{\bm{\tau}}}_k(t,{\mathbf x})$, for the $k^{th}$ initial subsystem can be expressed in the following forms:
\begin{equation}
\label{eq:matrix_subratesB}
\begin{aligned}
\ubar{\check{\bm \tau}}_k (t,{\mathbf x}) & \coloneqq \ubar{F}_k (t,{\mathbf x}) \, \mathbf{1} \\
& =  \ubar{F} (t,\mathbf{x}) \, \ubar{\mathcal{X}}^{-1}(t) \, \ubar {\mathbf x}_k(t)  \\
& = {F} (t,\bm{x}) \, {\mathcal{X}}^{-1}(t) \, \ubar {\mathbf x}_k(t) , \\
\ubar{\hat{\bm \tau}}_k (t,{\mathbf x}) & \coloneqq \ubar{\mathbf y}_k (t,{\mathbf x}) + \ubar{F}^T_k (t,{\mathbf x}) \, \mathbf{1} \\
& = \ubar{\mathcal{Y}}(t,{\mathbf x}) \, \ubar{\mathcal{X}}^{-1}(t) \, \ubar{\mathcal{X}}_k(t) \, \mathbf{1} + \ubar{\mathcal{X}}_k(t) \, \ubar{\mathcal{X}}^{-1}(t) \, \ubar{F}^T(t,{\mathbf x}) \, \mathbf{1} \\
& = \left ( \mathcal{\ubar Y}(t,\bm{x}) + \diag \left ( \ubar{F}^T (t,\bm{x}) \, \mathbf{1} \right ) \right ) \, {\mathcal{\ubar X}}^{-1}(t) \, \ubar{\mathbf x}_k(t) \\
& = \left ( \mathcal{Y}(t,\bm{x}) + \diag \left ( {F}^T (t,\bm{x}) \, \mathbf{1} \right ) \right ) \, {\mathcal{X}}^{-1}(t) \, \ubar{\mathbf x}_k(t) \\
& = {\mathcal{T}}(t,\bm{x}) \, {\mathcal{X}}^{-1}(t) \, \ubar{\mathbf x}_k(t) .
\end{aligned}
\end{equation}
The $k^{th}$ {\em net subthroughflow rate} vector, $\ubar{\bm \tau}_k (t,{\mathbf x}) = \left [ \ubar{\tau}_{1_k}(t,{\mathbf x}), \ldots, \ubar{\tau}_{n_k}(t,{\mathbf x}) \right ]^T$, for the initial subsystem then becomes
\begin{equation}
\label{eq:total_flowsI}
\begin{aligned}
\ubar {\bm \tau}_k (t,{\mathbf x}) & \coloneqq \ubar {\check{\bm \tau}}_k (t,{\mathbf x}) - \ubar {\hat{\bm \tau}}_k (t,{\mathbf x})
= {A} (t,\bm{x}) \, \ubar{\mathbf x}_k(t) .
\end{aligned}
\end{equation}

For each fixed $j$, Eq.~\ref{eq:ic_rates} implies that
\begin{equation}
\label{eq:thr_denseI}
\begin{aligned}
\frac{{\hat{\tau}}_{j_0}(t,\bf{x}) }{{x}_{j_0}(t)} = \frac{\ubar{\hat{\tau}}_{j}(t,\mathbf x) }{\ubar{x}_{j}(t)} = \frac{ \sum_{i=0}^{n} \ubar{f}_{i j}(t,\mathbf x) }{\ubar{x}_{j}(t)} = \frac{  \sum_{i=0}^{n} \ubar{f}_{i_k j_k}(t,{\mathbf x}) }{\ubar{x}_{j_k}(t)} = \frac{\ubar{\hat{\tau}}_{j_k}(t,{\mathbf x}) }{\ubar{x}_{j_k}(t)}
\end{aligned}
\end{equation}
for $k=1,\ldots,n$, where the denominators are nonzero. This relationship indicates the equivalence of outward throughflow and subthroughflow intensities for the initial subsystems. The second and last equalities of Eq.~\ref{eq:thr_denseI} can be expressed in matrix form as
\begin{equation}
\label{eq:thr_dense2I}
\begin{aligned}
\mathcal{R}^{-1}(t,\bm{x}) & = \mathcal{T}(t,\bm{x}) \, \mathcal{X}^{-1}(t)  =
\hat{\mathcal{T}}_0(t,{\bf x}) \, \mathcal{X}_0^{-1} (t) \\
& = {\ubar{\mathcal{T}}}(t,{\bf x}) \, \ubar{\mathcal{X}}^{-1}(t)
= \hat{\ubar{\mathcal{T}}}_k(t,{\bf x}) \, \ubar{\mathcal{X}}_k^{-1}(t)
= \ubar{\mathcal{R}}^{-1}(t,\bm{x})
\end{aligned}
\end{equation}
where $ \mathcal{\hat T}_0 = \ubar{\mathcal{T}} := \mathcal{\ubar Y}(t,\bm{x}) + \diag \left ( \ubar{F}^T (t,\bm{x}) \, \mathbf{1} \right ) $. Equations~\ref{eq:thr_dense},~\ref{eq:ic_rates}, and~\ref{eq:thr_denseI} also imply the proportionality of the parallel initial subflows and the corresponding subthroughflows and substorages.
\begin{equation}
\label{eq:thr_denseI2}
\begin{aligned}
\frac{{\hat{\tau}}_{j_k}(t,{\mathbf x}) }{{\hat{\tau}}_{j_\ell}(t,{\mathbf x}) } = \frac{\ubar{\hat{\tau}}_{j_k}(t,{\mathbf x}) }{\ubar{\hat{\tau}}_{j_\ell}(t,{\mathbf x}) } = \frac{\ubar{x}_{j_k}(t) }{\ubar{x}_{j_\ell}(t)} = \frac{\ubar{f}_{i_k j_k}(t,{\mathbf x}) }{\ubar{f}_{i_\ell j_\ell}(t,{\mathbf x})}
\end{aligned}
\end{equation}
for $k,\ell=1,\ldots,n$, where the denominators are nonzero.

The $k^{th}$ {\em inward} and {\em outward subthroughflow matrices}, $\check{\mathcal{\ubar T}}_k(t,{\bf x}) \coloneqq \diag{ \left( \ubar{\check{\bm \tau}}_k (t,{\mathbf x}) \right)}$ and $\hat{\mathcal{\ubar T}}_k(t,{\bf x}) \coloneqq \diag{ \left( \ubar{\hat{\bm \tau}}_k (t,{\mathbf x}) \right)}$, for the $k^{th}$ initial subsystem can be expressed as
\begin{equation}
\label{eq:kth_thrmat}
\begin{aligned}
\check{\mathcal{\ubar T}}_k(t,{\bf x})  & = \diag{ \left( {F} (t,\bm{x}) \, \mathcal{X}^{-1}(t) \, \mathcal{\ubar X}_k(t)  \, \bm{1} \right) }  \\
\hat{\mathcal{\ubar T}}_k(t,{\bf x})  & = \mathcal{T}(t,\bm{x}) \, \mathcal{X}^{-1}(t)  \, \mathcal{\ubar X}_k (t)
\end{aligned}
\end{equation}
using Eq.~\ref{eq:matrix_subratesB}. The second relationship in Eq.~\ref{eq:kth_thrmat} can be used to reformulate ${\ubar F}_k (t,{\mathbf x})$, given in Eq.~\ref{eq:matrix_subratesM2}, in terms of the system flows only:
\begin{equation}
\label{eq:Tinoutb}
\begin{aligned}
\ubar{F}_k (t,{\bf x}) & = {F} (t,\bm{x}) \, \mathcal{T}^{-1}(t,\bm{x})  \, \hat{\mathcal{\ubar T}}_k(t,{\bf x})  .
\end{aligned}
\end{equation}

Equations~\ref{eq:TinoutA} and~\ref{eq:Tinoutb} imply that
\begin{equation}
\label{eq:intensity_flowIF}
\ubar{F}_k(t,{\mathbf x})  \, \ubar{\mathcal{\hat T}}_k^{-1}(t)  = \ubar{F}(t,\mathbf{x}) \, \ubar{\mathcal{T}}^{-1}(t) = F_0(t,{\mathbf x})  \, \mathcal{T}_0^{-1}(t) = F(t,{\bm x}) \, \mathcal{T}^{-1}(t)
\end{equation}
for $k=1,\ldots,n$. Equation~\ref{eq:intensity_flowIF} together with Eq.~\ref{eq:intensity_flowI} indicates the equivalence of the flow, initial flow, and initial subflow intensities both per unit throughflow and storage, respectively. Consequently, the flow and storage distribution matrices for both the subsystems and initial subsystems are the same. That is,
\begin{equation}
\label{eq:thr_QxQt}
\begin{aligned}
\ubar{Q}^{x}(t,{\mathbf  x}) & \coloneqq \ubar{F}(t,{\mathbf x}) \, \ubar{\mathcal{X}}^{-1}(t) = {Q}^{x}(t,{\bm x}) , \\
\ubar{Q}^{\tau}(t,{\mathbf  x}) & \coloneqq \ubar{F}(t,{\mathbf x}) \, \ubar{\mathcal{T}}^{-1}(t) = {Q}^{\tau}(t,{\bm x}) .
\end{aligned}
\end{equation}

We define the {\em inward} and {\em outward subthroughflow matrices}, $\ubar{\check{T}}(t,{\mathbf x})$ and $\ubar{\hat{T}}(t,{\mathbf x})$, for the initial subsystems as the matrices whose $k^{th}$ columns are the inward and outward initial subthroughflow vectors, $\ubar {\check{\bm \tau}}_k(t,{\mathbf x})$ and $\ubar {\hat{\bm \tau}}_k(t,{\mathbf x})$, $k=1,\ldots,n$, respectively:
\begin{equation}
\label{eq:matrix_subratesMB}
\begin{aligned}
\ubar{\check{T}} (t,{\mathbf x}) & \coloneqq \left(  \check{\ubar{\tau}}_{i_k}(t,{\mathbf x}) \right) = \left [ \ubar{\check{\bm \tau}}_1 (t,{\mathbf x}) \, \ldots \, \ubar{\check{\bm \tau}}_n (t,{\mathbf x}) \right ] ,
\\
\ubar{\hat{T}} (t,{\mathbf x}) & \coloneqq \left( \hat{\ubar{\tau}}_{i_k}(t,{\mathbf x}) \right) = \left [\ubar{\hat{\bm \tau}}_1 (t,{\mathbf x}) \, \ldots \, \ubar{\hat{\bm \tau}}_n (t,{\mathbf x}) \right ] .
\end{aligned}
\end{equation}
Using the relationships in Eq.~\ref{eq:matrix_subratesB}, these subthroughflow matrices for the initial subsystems can be expressed in matrix form as follows:
\begin{equation}
\label{eq:matrix_subrates2}
\begin{aligned}
\check{\ubar T} (t,{\mathbf x}) &
= \ubar{F} (t,\bm{x}) \,  {\mathcal{\ubar X}}^{-1} (t) \, {\ubar X}(t)
= {F} (t,\bm{x}) \,  {\mathcal{X}}^{-1} (t) \, {\ubar X}(t)  , \\
\hat{\ubar T} (t,{\mathbf x})
& = {\mathcal{\ubar T}}(t,\bm{x}) \, {\mathcal{\ubar X}}^{-1} (t) \, \ubar{X}(t)
= {\mathcal{T}}(t,\bm{x}) \, {\mathcal{X}}^{-1} (t) \, \ubar{X}(t) ,
\end{aligned}
\end{equation}
where the second equalities are due to Eqs.~\ref{eq:thr_QxQt} and~\ref{eq:thr_dense2I}. We then define the {\em net subthroughflow matrix}, $\ubar{T}(t,{\mathbf x})$, for the initial subsystems as
\begin{equation}
\label{eq:total_flowMI}
\begin{aligned}
\ubar{T}(t,{\mathbf x}) &\coloneqq \ubar{\check{T}} (t,{\mathbf x}) - \ubar{\check{T}} (t,{\mathbf x})
= A (t,\bm{x}) \, \ubar{X}(t) .
\end{aligned}
\end{equation}

The decomposition matrix $\ubar{D}({\mathbf x})$ can be expressed in terms of the subthroughflow functions instead of the substate functions:
\begin{equation}
\label{eq:apxmatrix_Pss0}
\begin{aligned}
\ubar{D}({\mathbf x}) = {\mathcal{\ubar X}}^{-1}(t) \, \ubar{X}(t) = {\mathcal{\ubar T}}^{-1}(t,\bm{x}) \, \ubar{\hat{T}} (t,{\mathbf x})
\end{aligned}
\end{equation}
due to Eq.~\ref{eq:matrix_subrates2}. Using the notations developed above, the initial subthroughflow matrices can be written in the following various forms:
\begin{equation}
\label{eq:matrix_subrates1_I2}
\begin{aligned}
\ubar{\check{T}} (t,{\mathbf x})
& = \ubar{F}(t,\bm{x}) \, \ubar{D}(\mathbf{x}) = Q^x(t,\bm{x}) \, \ubar{X}(t) = Q^\tau(t,\bm{x}) \, \ubar{\hat{T}}(t,\mathbf{x})  \\
\ubar{\hat{T}} (t,{\mathbf x}) & = \mathcal{R}^{-1}(t,\bm{x}) \, \ubar{X}(t) .
\end{aligned}
\end{equation}
The $k^{th}$ initial decomposition matrix, $\mathcal{\ubar D}_k({\mathbf x})$, can also be written as
\begin{equation}
\label{eq:matrix_Pss0KI}
\begin{aligned}
\mathcal{\ubar D}_k({\mathbf x}) = \mathcal{\ubar X}^{-1}(t) \, \mathcal{\ubar X}_k(t) = \mathcal{\ubar T}(t,\bm{x})^{-1} \, \hat{\mathcal{\ubar T}}_k (t,{\mathbf x}) ,
\end{aligned}
\end{equation}
using Eq.~\ref{eq:thr_dense2I}, similar to Eq.~\ref{eq:apxmatrix_Pss0}. The counterparts of the relationships in Eq.~\ref{eq:matrix_subrates1_I2} for the $k^{th}$ initial subsystem then become:
\begin{equation}
\label{eq:F_k2I}
\begin{aligned}
\ubar{F}_k(t,{\bf x}) & = \ubar{F}(t,{\bm x}) \, \mathcal{\ubar D}_k({\bf x}) =  Q^x(t,{\bm x}) \, \mathcal{\ubar X}_k(t)  = Q^\tau(t,{\bm x}) \, \hat{\mathcal{\ubar T}}_k(t,{\bf x})  \\
\hat{\mathcal{\ubar T}}_k(t,{\bf x}) & = \mathcal{R}^{-1}(t,{\bm x}) \, \mathcal{\ubar X}_k(t) .
\end{aligned}
\end{equation}

The flow and subthroughflow matrices for the initial subsystems can be integrated as follows:
\begin{equation}
\label{eq:tau_init}
\begin{aligned}
\ubar{F}(t,{\mathbf x}) \, \mathbf{1} & = \sum_{k=1}^n \ubar{F}_k (t,{\mathbf x}) \, \mathbf{1} = \sum_{k=1}^n \ubar{\check{\bm \tau}}_k (t,{\mathbf x})  = \ubar{\check{\bm \tau}} (t,{\mathbf x}) = \ubar{\check{T}} (t,{\mathbf x}) \, \mathbf{1}, \\
\ubar{\bm{y}}(t,{\mathbf x}) + \ubar{F}^T(t,{\mathbf x}) \, \mathbf{1} & = \sum_{k=1}^n \ubar{\mathbf y}_k (t,{\mathbf x}) + \ubar{F}^T_k (t,{\mathbf x}) \, \mathbf{1}  = \sum_{k=1}^n \ubar{\hat{\bm \tau}}_k (t,{\mathbf x}) = \ubar{\hat{\bm \tau}} (t,{\mathbf x})  \\ & = \ubar{\hat{T}} (t,{\mathbf x}) \, \mathbf{1} . \nonumber
\end{aligned}
\end{equation}

The governing equations for the initial subsystems can then be written in vector form as
\begin{equation}
\label{eq:initial_subsystemsV}
\begin{aligned}
\dot{\ubar{\mathbf x}}_k(t) = A(t,\bm{x}) \, \ubar{\mathbf{x}}_k(t) ,
\quad \ubar {\mathbf x}_k(t_0) = x_{k,0} \, \mathbf{e}_k
\end{aligned}
\end{equation}
for $k = 1,\ldots,n$. In matrix from, the governing equations for the decomposed system become
\begin{equation}
\label{eq:model_mat0}
\begin{aligned}
\dot{\ubar X}(t) = {\ubar T}(t,{\mathbf x}) = \ubar{\check{T}}(t,{\mathbf x}) - \ubar{\hat{T}}(t,{\mathbf x}), \quad \ubar{X}(t_0) = \mathcal{X}_0 .
\end{aligned}
\end{equation}
This system can also be expressed in terms of the flow intensity matrix, $A(t,\bm{x})$:
\begin{equation}
\label{eq:model_M2}
\begin{aligned}
\dot{\ubar X}(t) &
= A(t,\bm{x}) \, \ubar{X}(t) , \quad
{\ubar X}(t_0) = \mathcal{X}_0 .
\end{aligned}
\end{equation}

The system decomposition methodology that yields the governing equations for the initial subsystems in vector form, Eq~~\ref{eq:initial_subsystemsV}, or for the entire initial subsystem in matrix form, Eq.~\ref{eq:model_mat0}, can be schematized as follows:
\begin{center}
\begin{tikzpicture}
  \matrix (m) [matrix of math nodes,row sep=0em,column sep=7em,minimum width=1em]
  {
     \dot{\ubar{\bm x}}(t) = \ubar{\bm \tau}(t,{\mathbf x})
   &
     \hspace{2cm} \dot{\ubar{\mathbf x}}_k(t) = \ubar{\bm \tau}_k(t,{\mathbf x}), \, \, \, k=1,\ldots,n
   \\
 \veq & \veq \\
     {
 \begin{bmatrix}
  \dot{\ubar{x}}_{1} \\
  \dot{\ubar{x}}_{2} \\
  \vdots  \\
  \dot{\ubar{x}}_{n}
 \end{bmatrix}
=
 \begin{bmatrix}
  \ubar{\tau}_{1} \\
  \ubar{\tau}_{2} \\
  \vdots  \\
  \ubar{\tau}_{n}
 \end{bmatrix}
 }
 &
      {
 \begin{bmatrix}
  \dot{\ubar{x}}_{1_k} \\
  \dot{\ubar{x}}_{2_k} \\
  \vdots  \\
  \dot{\ubar{x}}_{n_k}
 \end{bmatrix}
=
 \begin{bmatrix}
  \ubar{\tau}_{1_k} \\
  \ubar{\tau}_{2_k} \\
  \vdots  \\
  \ubar{\tau}_{n_k}
 \end{bmatrix}
 }
\\
{  }
&
\hspace{-.5cm}
 {
 \begin{bmatrix}
  \dot{\ubar{x}}_{1_1} & \cdots & \dot{\ubar{x}}_{1_n} \\
  \dot{\ubar{x}}_{2_1} & \cdots & \dot{\ubar{x}}_{2_n} \\
  \vdots  & \ddots & \vdots  \\
  \dot{\ubar{x}}_{n_1} & \cdots & \dot{\ubar{x}}_{n_n}
 \end{bmatrix}
=
 \begin{bmatrix}
  \ubar{\tau}_{1_1} & \cdots & \ubar{\tau}_{1_n} \\
  \ubar{\tau}_{2_1} & \cdots & \ubar{\tau}_{2_n} \\
  \vdots & \ddots & \vdots  \\
  \ubar{\tau}_{n_1} & \cdots & \ubar{\tau}_{n_n}
 \end{bmatrix}
 }
 \\
  {} & {\hspace{-.5cm} } \veq  \\
   {} & \dot{\ubar{X}}(t) = \ubar{T}(t,{\mathbf x}) \\
};
  \path[-stealth, decorate]
    (m-3-1.east|-m-3-2) edge node [below] {system decomposition}
            node [above] {vector form} (m-3-2) ;
\hspace{-.5cm}
  \path[-stealth, decorate]
    (m-4-1.east|-m-4-2) edge node [below] {system decomposition}
            node [above] { matrix form } (m-4-2) ;
\end{tikzpicture}
\end{center}

\section{Analytic solution to linear systems}
\label{sec:linsys}

In this section, we formulate analytic solutions to linear systems with time-dependent inputs. Due to the cancellations in the formulation of $A(t,{\bm x})$ defined in Eq.~\ref{eq:matrix_A}, if the original system, Eq.~\ref{eq:model_orgV}, is linear, it can be expressed as
\begin{equation}
\label{eq:model_lin}
\begin{aligned}
\dot{\bm{x}}(t) = \bm{z}(t) + A (t) \, \bm{x}(t), \quad \bm{x}(t_0) = \bm{x}_0 .
\end{aligned}
\end{equation}
The system partitioning methodology yields a linear system, if the original system is linear. That is, the decomposed system, Eq.~\ref{eq:model_M}, takes the following linear form:
\begin{equation}
\label{eq:model_M2lin}
\begin{aligned}
\dot{X}(t) & = \mathcal{Z} (t) + A(t) \, X(t) ,
\quad X(t_0) = \mathbf{0}  , \\
\dot{\ubar X}(t) & = A(t) \, \ubar{X}(t) ,
\quad \quad \quad \quad \, \, {\ubar X}(t_0) = \mathcal{X}_0 . 
\end{aligned}
\end{equation}
Let $V(t)$ be the fundamental matrix solution of the system Eq.~\ref{eq:model_lin}, that is, the unique solution of the system
\begin{equation}
\label{eq:model_U}
\begin{aligned}
{\dot {V}}(t) & = A(t) \, {V}(t) , \quad  {V}(t_0) = I .
\end{aligned}
\end{equation}
The solutions for $X(t)$ and $\ubar X(t)$ in terms of $V(t)$ become
\begin{equation}
\label{eq:model_XXb_sln}
\begin{aligned}
X(t) & =  \int_{t_0}^{t}  {V}(t) \, {V}^{-1}(\tau) \, \mathcal{Z}(\tau) \, d\tau
\quad \mbox{and} \quad
\ubar X(t) = {V}(t) \, \mathcal{X}_0 .
\end{aligned}
\end{equation}
Therefore, we have the following observation.
\begin{remark}
\label{rem:linfun}
The initial substate matrix of the decomposed system, $\ubar X(t)$, scaled by the invertible matrix of the initial conditions, $\mathcal{X}_0>0$, is the {\em fundamental matrix solution} to the original linear system, Eq.~\ref{eq:model_lin}. That is, $V(t) = \ubar{X}(t) \, \mathcal{X}^{-1}_0$.
\end{remark}

The solution for $X(t)$ can then be expressed, in terms of this fundamental matrix solution, as
\begin{equation}
\label{eq:model_Xsln}
\begin{aligned}
X(t) & =  \int_{t_0}^{t}  \ubar{X}(t) \, \ubar{X}^{-1}(\tau) \, \mathcal{Z}(\tau) \, d\tau .
\end{aligned}
\end{equation}
For linear systems, Eq.~\ref{eq:model_MC} becomes
\begin{equation}
\label{eq:model_MClin}
\begin{aligned}
\dot{\sf X}(t) & = \mathcal{Z} (t) + A(t) \, {\sf X} (t) ,
\quad {\sf X}(t_0) = \mathcal{X}_0 .
\end{aligned}
\end{equation}
The solution for ${\sf X}(t)$ can be written, in terms of $\ubar{X}(t)$:
\begin{equation}
\label{eq:model_Xsln2}
\begin{aligned}
{\sf X}(t) = \ubar{X}(t) + X(t) =  \ubar{X}(t) + \int_{t_0}^{t}  \ubar{X}(t) \, \ubar{X}^{-1}(\tau) \, \mathcal{Z}(\tau) \, d\tau .
\end{aligned}
\end{equation}
Multiplying both sides by $\mathbf{1}$, we get
\begin{equation}
\label{eq:model_lin_sln}
\begin{aligned}
\bm{x} (t)
= \mathbf{x}_0(t) + \int_{t_0}^{t} \ubar{X}(t) \, \ubar{X}^{-1}(\tau) \, \bm{z}(\tau) \, d\tau
\end{aligned}
\end{equation}
which is the general solution to the original system, Eq.~\ref{eq:model_lin}.

For the special case of constant diagonalizable flow intensity matrix $A(t) = A$, the fundamental matrix solution can be written in the following form:
\begin{equation}
\label{eq:model_linfund_sln}
\begin{aligned}
{V} (t)  & = {\exp} \left( \int_{t_0}^t A \, ds \right ) = {\mathrm{e}}^{ \left(t-t_0 \right) \,A} = U \, {\mathrm{e}}^{(t-t_0) \, \Lambda} \, U^{-1}
\end{aligned}
\end{equation}
where $U$ is the matrix whose columns are the eigenvectors of A, and $\Lambda$ is the diagonal matrix whose diagonal elements are the corresponding eigenvalues of $A$. The solution to the matrix equation, Eq.~\ref{eq:model_MC}, given in Eq.~\ref{eq:model_Xsln2} then becomes
\begin{equation}
\label{eq:model_Xsln3}
\begin{aligned}
{\sf X}(t) = \ubar{X}(t) + X(t) = e^{ (t-t_0) A} \, \mathcal{X}_0 + \int_{t_0}^{t}  e^{ (t-\tau) A} \, \mathcal{Z}(\tau) \, d\tau .
\end{aligned}
\end{equation}
Consequently, Eq.~\ref{eq:model_lin_sln} takes the following form:
\begin{equation}
\label{eq:model_lin_sln2}
\begin{aligned}
\bm{x} (t) & = e^{ (t-t_0) \, A} \, \bm{x}_0 + \int_{t_0}^{t} e^{ (t-\tau) A} \, \bm{z}(\tau) \, d\tau .
\end{aligned}
\end{equation}

For linear systems, the governing equation for the subsystem decomposition, Eq.~\ref{eq:out_in_fs2}, can be solved explicitly for $x^w_{n_k \ell_k j_k}(t)$ as well. The solution becomes
\begin{equation}
\label{eq:sln_transient}
\begin{aligned}
x^w_{n_k \ell_k j_k}(t) = \int_{t_1}^t {\rm e}^{-\int_s^t r_\ell^{-1}(s',{\bm x}) \, ds'} \, f^w_{\ell_k j_k i_k}(s) \, ds
\end{aligned}
\end{equation}
where $r_\ell^{-1}(t,{\bm x}) = { {\hat{\tau}}_{\ell}(t,\bm{x}) } / { x_{\ell}(t) }$, as defined in Eq.~\ref{eq:thr_dense2}. Equation~\ref{eq:sln_transient} formulates the transient storage $x^w_{n_k \ell_k j_k}(t)$ generated by the transient inflow, $f^w_{\ell_k j_k i_k}(t)$, at subcompartment $\ell_k$ at any time $t \geq t_1$ (see Fig.~\ref{fig:subsystemp}).

\subsection{Static compartmental system analysis}
\label{sec:ss}

At steady state, the time derivatives of the state variables are zero. That is,
\begin{equation}
\label{eq:model_ss}
\begin{aligned}
\dot{X}(t) = \dot{\ubar{X}}(t) =\bm{0} .
\end{aligned}
\end{equation}
Clearly, if the decomposed system, Eq.~\ref{eq:model_M}, is at steady state, the original system, Eq.~\ref{eq:model_orgM2}, is also at steady state, due to Eq.~\ref{eq:xdervs}. Since $A$ is a strictly diagonally dominant constant matrix, it is invertible, and
\begin{equation}
\label{eq:model_ssS}
\begin{aligned}
{T} = \ubar{T} = \bm{0} \quad \Rightarrow \quad X = - A^{-1} \, \mathcal{Z}  \quad \mbox{and} \quad \ubar{X} = \mathbf{0}
\end{aligned}
\end{equation}
because of the relationships given in Eq.~\ref{eq:model_M}. The static systems can be decomposed and analyzed based on both external inputs and outputs. The proposed methodology for static systems in both the input- and output-orientations have recently been introduced and their duality is demonstrated by \cite{Coskun2017SCSA,Coskun2017SESM}.

\section{Subsystem decomposition methodology}
\label{apxsec:dsd}

The dynamic subsystem decomposition methodology is introduced in Section~\ref{sec:dsd}. The method formulation is based on the concept of directed subflow paths, which will be detailed below.

\subsection{Subflow paths}
\label{apxsec:subflow_paths}

A {\em link} will be defined as the connection between two system compartments that represents direct transactions between them. A link constitutes a {\em step} along a given flow path from one subcompartment to another. A {\em directed subflow path} in a subsystem will then be defined as a chain of connected links initiated at one subcompartment and ending in another of the same subsystem. The number of links or steps along a directed subflow path is called the {\em length} of the pathway. The {\em connection} of a subflow path to the ambient subsystem will be defined as the initial and only subcompartment on the path that receives inflow. The transient subflow and substorage computations along a subflow path must start at its connection. The link to the connection which represents the only inflow into the path will be called the {\em local input} and the source of this input will be called the {\em local source}. The outflow from the terminal subcompartment of the path will be called the {\em local output}. The environment can be taken as the local source or terminal subcompartment.

A subflow path $j_k \rightsquigarrow \ell_k \rightsquigarrow n_k \rightsquigarrow \cdots$ in subsystem $k$ with connection $j_k$ and the {\em local source} $i_k$ will be represented by $i_k \, {\color{red}{\mapsto}} \, j^{{\color{red}{*}}}_k \rightsquigarrow \ell_k \rightsquigarrow n_k \rightsquigarrow \cdots$. The connection is marked with red superscript $(*)$, and the local input with red arrow. A subflow link that does not directly contribute to the particular subflow or substorage in question will be represented by $(\rightsquigarrow)$ symbol. If the local input or output is the external input or output, the corresponding subcompartments will be denoted by $0_k$ on the pathway. Therefore, $0_k$ indicates that the local input is external input $z_k(t)$, and, since there is no external input for the initial subsystem, $0_0$ indicates that the local input is zero. Assuming that the terminal subcompartment of the path partially defined above is $v_k$, the complete subflow path will be represented by $p_{v_k j_k}=i_k \mapsto j^*_k \rightsquigarrow \ell_k \rightsquigarrow n_k \rightsquigarrow \cdots \to v_k$. If the initial and terminal subcompartments are the same, say $\ell_k$, a simpler notation will be used for the path $p_{\ell_k}$. For more than one path in the same subsystem, the path number will be represented by superscript $w$; $p^w_{v_k j_k}$. Having the connection and local source be the same, that is, having a path of type $i_k \mapsto i^*_k \rightsquigarrow \cdots$ implies that the local input will be taken as the subthroughflow into that subcompartment, $\check{\tau}_{i_k}(t, {\bf x})$, if not specified differently.

The subsystems can be decomposed into subflows and the associated substorages generated by these subflows along a set of mutually exclusive and exhaustive directed subflow paths. By {\em mutually exclusive} subflow paths, we mean that no given subflow path in a subsystem is a {\em subpath}, that is, completely inside of another path in the same subsystem. The {\em exhaustiveness}, in this context, means that such mutually exclusive subflow paths all together sum to the entire subsystem so partitioned. A subflow path that does not self intersect will be called the {\em linear path}, and the one with the same initial and terminal subcompartment will be called the {\em closed path}. A subflow path composed of linear and closed subpaths will be called the {\em mixed type path}. We will use the notation of $P_k$ for a set of mutually exclusive and exhaustive subflow paths in subsystem $k$, and $w_k$ for the number of paths in this set.

Each subsystem can be partitioned along a set of mutually exclusive and exhaustive subflow paths as follows: For subsystem $k$, subcompartment $k_k$ can be taken as the connection of all subflow paths with the local input being external input $z_k(t)$. The terminal subcompartments of linear and closed subpaths can be taken as the subcompartments with external output and the initial subcompartments, respectively. The cumulative transient outflow at the terminal subcompartment of a closed subpath will be considered as the local output. Consequently, the number of subflow paths in a subsystem obtained by this partitioning is equal to the number of local outputs. This subsystem partitioning will be called the {\em natural subsystem decomposition}. The natural subsystem partitioning of all subsystems yields a mutually exclusive and exhaustive partitioning of the entire system.

\subsection{Transient flows and storages}
\label{apxsec:transient}

The transient subflows and substorages are defined for linear subflow paths in Section~\ref{sec:dsd}. Additional relationships will be formulated in this section.

Self-intersecting directed flow paths are common within compartmental systems. The transient inflows and outflows and associated transient substorages change with each cycle along such paths. The cumulative values are obtained by summing up all transient subflows and substorages at the corresponding subcompartments with each cycle. The number of terms in these summations depends on the number of times the directed path pass through the corresponding subcompartments. In particular, transient subflows cycle along directed closed paths repeatedly and indefinitely. Therefore, in this case, the summations yield infinite series of functions that are convergent due to their construction.

The sum of the transient inflows from subcompartment $j_k$ to $\ell_k$ and the outflows from $\ell_k$ to $n_k$ generated at subcompartment $\ell_k$ at time $t$ by the local input into the connection of a given non-self-intersecting closed subflow path $p^w_{n_k j_k}$ during $[t_1,t]$, $t_1 \geq t_0$, will respectively be called the {\em inward} and {\em outward cumulative transient subflow} at subcompartment ${\ell_k}$ at time $t$. The associated storage generated by the inward cumulative transient subflow will be called {\em cumulative transient substorage}. These inward and outward cumulative transient subflows will be denoted by $\check{\tau}^{w}_{\ell_k}(t)$ and $\hat{\tau}^{w}_{\ell_k}(t)$, respectively, and associated cumulative transient substorage by ${x}^{w}_{\ell_k}(t)$. They can be formulated as
\begin{equation}
\label{eq:apxout_in_fs10}
\begin{aligned}
{x}^{w}_{\ell_k}(t) \coloneqq \sum_{m=1}^{m_w} {x}^{w,m}_{n_k \ell_k j_k}(t), \, \, \,
\check{\tau}^{w}_{\ell_k}(t) \coloneqq \sum_{m=1}^{m_w} {f}^{w,m}_{\ell_k j_k i_k}(t), \, \, \,
\hat{\tau}_{\ell_k}^{w}(t) \coloneqq \sum_{m=1}^{m_w} {f}^{w,m}_{n_k \ell_k j_k}(t) ,
\end{aligned}
\end{equation}
for $k=0,\ldots,n$, where the superscript $m$ represents the cycle number, and ${m_w}$ is the number of cycles, that is, the number of times the path $p^w_{n_k i_k}$ pass through subcompartment $\ell_k$. Large number of terms, ${m_w}$, in computation of these summations reduce truncation errors and, thus, improve the approximations. If the path is linear at ${\ell_k}$ ($m=1$), we write
\[ {x}^{w}_{\ell_k}(t) = {x}^w_{n_k \ell_k j_k}(t), \quad \check{\tau}^{w}_{\ell_k}(t) = {f}^w_{\ell_k j_k i_k}(t), \quad \mbox{and} \quad \hat{\tau}_{\ell_k}^{w}(t) = {f}^w_{n_k \ell_k j_k}(t) .\]

Let $P_k$ be the set of mutually exclusive and exhaustive subflow paths of the natural decomposition of subsystem $k$, and $w_k$ be the number of paths in this set. We also have
\begin{equation}
\label{eq:in_out_flows_new2}
x_{\ell_k}(t) = \sum^{w_k}_{w=1} x_{\ell_k}^{w}(t), \quad
{\check{\tau}}_{\ell_k}(t,{\bf x}) = \sum^{w_k}_{w=1} \check{\tau}^{w}_{\ell_k}(t),
\quad \mbox{and} \quad
{\hat{\tau}}_{\ell_k}(t,{\bf x}) = \sum^{w_k}_{w=1} \hat{\tau}^{w}_{\ell_k}(t) .
\nonumber
\end{equation}
That is, due to the mutual exclusiveness and exhaustiveness of the natural subsystem decomposition, the sum of the cumulative transient subflows and substorages are equal to the subthroughflows and substorages, respectively.

\subsection{Static subsystem decomposition}
\label{apxsec:ssd}

The static version of the dynamic subsystem decomposition introduced in Eqs.~\ref{eq:out_in_fs} and~\ref{eq:out_in_fs2} is formulated by \cite{Coskun2017SCSA}. Since the time derivatives of the state variables are zero at steady state, we set $\dot{x}^{w}_{n_k \ell_k j_k}(t) =0$ in Eq.~\ref{eq:out_in_fs2}. Then, the static transient outflow at subcompartment $\ell_k$, $f^{w}_{n_k \ell_k j_k}$, along subflow path $p^w_{n_k j_k}$ from $j_k$ to $n_k$, and the transient substorage, $x^{w}_{n_k \ell_k j_k}$, generated at $\ell_k$ by the transient inflow, $f^{w}_{\ell_k j_k i_k}$, are formulated as
\begin{equation}
\label{eq:out_in_fs2_SS}
\begin{aligned}
{x}^w_{n_k \ell_k j_k} =
 \frac{ x_{\ell} }{ \tau_{\ell} }  \,  f^w_{\ell_k j_k i_k}
\quad \mbox{and}
\quad f^w_{n_k \ell_k j_k}
= \frac{ f_{n \ell} }{x_{\ell} }  \, {x}^w_{n_k \ell_k j_k}
= \frac{ f_{n \ell}  }{ \tau_{\ell} }  \, f^w_{\ell_k j_k i_k}  .
\end{aligned}
\end{equation}

\section{The \texttt{diact} flows and storages}
\label{appsec:flows}

The \texttt{diact} flows and storages are introduced in Section~\ref{apxsec:flows} through the path-based approach based on the proposed dynamic subsystem decomposition methodology. The explicit formulations of the \texttt{diact} flows through the alternative dynamic approach introduced by \cite{Coskun2017DCSAM} are also presented in the same section. The detailed formulations of the \texttt{diact} transactions through the path-based approach are developed in this section. The \texttt{diact} flows and storages within the initial subsystems can be formulated similarly.
\begin{figure}[t]
\begin{center}
\begin{tikzpicture} [scale=.9]
   \draw[very thick, fill=blue!5, draw=black, text=blue] (-.05,-.05) rectangle node(R1) [pos=.5] {$ {x}_{i_k} $} (1.5,1.5) ;
   \draw[very thick, fill=blue!5, draw=black, text=blue] (8.2,-.05) rectangle node(R2) [pos=.5] { ${x}_{k_k}$ } (9.6,1.5) ;
    \draw[very thick,stealth-,draw=blue]  (9.7,1) -- (10.7,1) ;
    \node (z) [text=blue] at (10.3,1.3) {$z_k$};
    \draw[very thick,draw=blue,-stealth] (1.7,0.7) --  (8.1,0.7);
    \node[blue] (x) at (5,.3) {${f}_{k_k i_k}$};
   \node[text=black] at (5,1.1) { \small{direct (or cycling) subflow} };
   \node[text=black] at (5.2,1.9) {\small{indirect subflow} };
   \node[blue,anchor=east] at (6,2.4) (c) {${\tau}^\texttt{i}_{i_k k_k}$};
   \node[blue,anchor=west] at (.8,1.5) (d) {};
   \draw[very thick,draw=blue, dashed, -stealth]  (c) edge[out=180,in=90] (d);
   \node[blue,anchor=east] at (5.9,2.4) (h) {};
   \node[blue,anchor=west] at (8.8,1.5) (g) {};
   \draw[very thick,draw=blue, dashed]  (g) edge[out=90,in=0] (h);
\end{tikzpicture}
\end{center}
\caption{Schematic representation for the complementary nature of the simple indirect and cycling flows within the $k^{th}$ subsystem. The composite direct subflow, $f_{k_k i_k}(t,{\rm x})$, is represented by solid arrow. This subflow also contributes to the simple cycling flow at subcompartment $k_k$. The simple indirect subflow, ${\tau}^\texttt{i}_{i_k k_k}(t)$, is represented by dashed arrow.}
\label{fig:cycindices}
\end{figure}
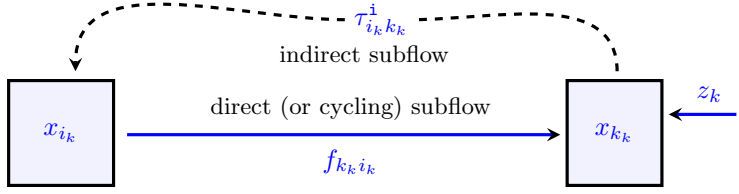

Let $P^\texttt{d}_{i_k j_k}$ and $P^\texttt{i}_{i_k j_k}$ be defined as the sets of mutually exclusive {\em direct} and {\em indirect subflow paths} $p^w_{i_k j_k}$ from subcompartment $j_k$, {\em directly} and {\em indirectly}, to $i_k$, respectively. The sets $P^\texttt{c}_{i_k j_k}$ and $P^\texttt{a}_{i_k j_k}$ are also defined as the sets of mutually exclusive {\em cyclic} and {\em acyclic subflow paths} $p^w_{i_k j_k}$ from $j_k$ to $i_k$ with a {\em closed} and {\em linear} subpath at terminal subcompartment $i_k$, respectively (see Fig.~\ref{fig:utilityfigs}). The cyclic subflow set, ${P}^{\texttt{c}}_{i_k}$, can alternatively be defined as the set of mutually exclusive subflow paths ${p}^w_{i_k}$ from subcompartment $i_k$ {\em indirectly} back to itself. The number of subflow paths in $P^\texttt{*}_{i_k j_k}$ will be denoted by $w_k$, where the superscript $(^\texttt{*})$ represent any of the \texttt{diact} symbols.

The simple and composite transfer flows, associated storages, and corresponding matrix measures are formulated in Section~\ref{apxsec:flows}, using the transfer subflow set, $P^\texttt{t}_{i_k j_k}$. All the other simple and composite $\texttt{diact}$ flows, associated storages, and matrix functions can then be formulated similarly by substituting the corresponding $\texttt{diact}$ flows and storages for their transfer counterparts in these equations and by using the corresponding $\texttt{diact}$ subflow sets instead. A compact derivation of the $\texttt{diact}$ flows and storages are presented below.

The {\em composite} \texttt{diact} {\em subflows} from subcompartment $j_k$ to $i_k$, $\tau^\texttt{*}_{i_k j_k}(t)$, can be expressed as the sum of the cumulative transient subflows, $\check{\tau}_{i_k}^{w}(t)$, generated by the outward subthroughflow at subcompartment $j_k$, $\hat{\tau}_{j_k}(t,{\bf x})$, during $[t_1,t]$, $t_1 \geq t_0$, and transmitted into $i_k$ at time $t$ along all subflow paths $p^w_{i_k j_k} \in P^\texttt{*}_{i_k j_k}$. The associated composite \texttt{diact} {\em substorage}, $x^\texttt{*}_{i_k j_k}(t)$, at subcompartment $i_k$ at time $t$ is the sum of the cumulative transient substorages, $x^{w}_{i_k}(t)$, generated by the cumulative transient inflows, $\check{\tau}_{i_k}^{w}(t)$, during $[t_1,t]$. Alternatively, $x^\texttt{*}_{i_k j_k}(t)$ can be defined as the storage segment generated by the composite \texttt{diact} subflow $\tau^\texttt{*}_{i_k j_k}(t)$ in subcompartment $i_k$ during $[t_1,t]$. The {\em simple} \texttt{diact} {\em subflows} can be defined similar to the their composite counterparts except for the following differences: the local inputs for the simple and composite \texttt{diact} subflows are $\hat{\tau}_{k_k}(t,{\bf x})$ and $\hat{\tau}_{j_k}(t,{\bf x})$, and the corresponding subflow sets are $P^\texttt{*}_{i_k k_k}$ and $P^\texttt{*}_{i_k j_k}$, respectively. Note that, for the cycling case, the first entrance of the transient subflows and substorages into $i_k$ are not considered as cycling subflows and substorages. Figure~\ref{fig:cycindices} depicts the complementary nature of the indirect and cycling subflows.

The composite \texttt{diact} subflows and substorages can then be formulated as
\begin{equation}
\label{eq:out_in_fsDT_diact}
\begin{aligned}
{\tau}^\texttt{*}_{i_k j_k}(t) \coloneqq
\sum_{w=1}^{w_k}  \check{\tau}_{i_k}^{w}(t)
\quad \mbox{and} \quad
x^\texttt{*}_{i_k j_k}(t) \coloneqq \sum_{w=1}^{w_k} x^{w}_{i_k}(t)
\end{aligned}
\end{equation}
where $w_k$ is the number of subflow paths $p^w_{i_k j_k} \in P^\texttt{*}_{i_k j_k}$. The sum of all the composite \texttt{diact} subflows and associated substorages from subcompartment $j_k$ to $i_k$ within each subsystem $k \neq 0$ will be called the {\em composite} \texttt{diact} {\em flow} and {\em storage} at time $t$, ${\tau}^\texttt{*}_{i j}(t)$ and $x^\texttt{*}_{i j}(t)$, from compartment $j$ to $i$:
\begin{equation}
\label{eq:out_in_diact}
\begin{aligned}
\tau^\texttt{*}_{i j}(t) \coloneqq \sum_{k=1}^{n} \tau^\texttt{*}_{i_k j_k}(t)
\quad & \mbox{and} \quad
x^\texttt{*}_{i j}(t) \coloneqq \sum_{k=1}^{n} x^\texttt{*}_{i_k j_k}(t) .
\end{aligned}
\end{equation}

For notational convenience, we define $n \times n$ matrix functions ${T}^{\texttt{*}}_k(t)$ and ${X}^{\texttt{*}}_k(t)$ whose $(i,j)-$elements are $\tau^\texttt{*}_{i_kj_k}(t)$ and $x^\texttt{*}_{i_kj_k}(t)$, respectively. That is,
\begin{equation}
\label{eq:fmT_diact}
\begin{aligned}
{T}^{\texttt{*}}_k(t) \coloneqq \left( \tau^\texttt{*}_{i_k j_k}(t) \right)
\quad \mbox{and} \quad
{X}^{\texttt{*}}_k(t) \coloneqq \left( x^\texttt{*}_{i_k j_k}(t) \right) ,
\end{aligned}
\end{equation}
for $k=0,\ldots,n$. These matrix measures ${T}^{\texttt{*}}_k(t)$ and ${X}^{\texttt{*}}_k(t)$ will be called the $k^{th}$ {\em composite} \texttt{diact} {\em subflow} and associated {\em substorage matrix} functions. The corresponding {\em composite} \texttt{diact} {\em flow} and associated {\em storage matrix} functions are then defined as ${T}^{\texttt{*}}(t) \coloneqq \left( \tau^\texttt{*}_{i j}(t) \right)$ and ${X}^{\texttt{*}}(t) \coloneqq \left( x^\texttt{*}_{i j}(t) \right)$, respectively.

The {\em simple} \texttt{diact} {\em flows} and {\em storages} can be formulated also in terms of their composite counterparts as follows:
\begin{equation}
\label{eq:simple_diact_in}
\begin{aligned}
{\tau}^\texttt{*}_{i_k}(t) = {\tau}^\texttt{*}_{i_k k_k}(t) \quad \mbox{and} \quad {x}^\texttt{*}_{i_k}(t) = {x}^\texttt{*}_{i_k k_k}(t) .
\end{aligned}
\end{equation}
To distinguish the composite and simple \texttt{diact} flow and storage matrices, we use a tilde notation over the simple versions. That is, the simple \texttt{diact} flow and storage matrices, for example, will be denoted by $\tilde{T}^{\texttt{*}}(t) \coloneqq \left( \tau^\texttt{*}_{i_k}(t) \right)$ and $\tilde{X}^{\texttt{*}}(t) \coloneqq \left( x^\texttt{*}_{i_k}(t) \right) $.

The {\em simple} \texttt{diact} {\em throughflow} and {\em compartmental storage matrices} and {\em vectors} can be formulated as
\begin{equation}
\label{eq:comp_diact}
\begin{aligned}
\tilde{\mathcal{T}}^\texttt{*}(t) \coloneqq \diag{( \tilde{T}^\texttt{*}(t) \, \bm{1} )} \, \, \, & \Rightarrow \, \, \,
\tilde{\tau}^\texttt{*}(t) \coloneqq \tilde{\mathcal{T}}^\texttt{*}(t) \, \bm{1}
\quad \mbox{and} \\
\tilde{\mathcal{X}}^\texttt{*}(t) \coloneqq \diag{( \tilde{X}^\texttt{*}(t) \, \bm{1} )} \, \, \, & \Rightarrow \, \, \,
\tilde{x}^\texttt{*}(t) \coloneqq \tilde{\mathcal{X}}^\texttt{*}(t) \, \bm{1} .
\end{aligned}
\end{equation}
The composite counterparts of these functions can similarly be formulated.

The difference between the composite and simple \texttt{diact} flows, $\tau^\texttt{*}_{ik}(t)$ and $\tau^\texttt{*}_{i_k}(t)$, and storages, $x^\texttt{*}_{ik}(t)$ and $x^\texttt{*}_{i_k}(t)$, is that the composite flow and storage from compartment $k$ to $i$ are generated by outward throughflow $\hat{\tau}_k(t,{\bm x}) - \hat{\tau}_{k_0}(t,{\bf x})$ derived from all external inputs and their simple counterparts from input-receiving subcompartment $k_k$ to $i_k$ are generated by outward subthroughflow $\hat{\tau}_{k_k}(t,{\bf x})$ derived from single external input $z_k(t)$ (see Fig.~\ref{fig:utilityfigs}). In that sense, the composite and simple \texttt{diact} flows and storages measure the influence of one compartment on another induced by all and a single external input, respectively.

\subsection{Static \texttt{diact} flows and storages}
\label{apxsec:sdfs}

The static $\texttt{diact}$ flows and storages are introduced by \cite{Coskun2017SCSA}, using the proportionality formulated in Eq.~\ref{eq:thr_dense3}, as listed in Table~\ref{tab:flow_stor}.
\begin{table}
     \centering
     \caption{The \texttt{diact} flow and storage distribution and the simple and composite \texttt{diact} (sub)flow and (sub)storage matrices. The superscript ($^\texttt{*}$) in each equation represents any of the \texttt{diact} symbols.}
     \label{tab:flow_stor}
     \begin{tabular}{c l l l l }
     \hline
\texttt{diact} & \multicolumn{2}{c} {flow and storage distribution matrices} & \multicolumn{1}{c} {flows} & \multicolumn{1}{c} {storages} \\
     \hline
     \noalign{\vskip 2pt}
\texttt{d} & $ N^\texttt{d} =  F \, \mathcal{T}^{-1}  $ & \multirow{5}{*}{ ${S}^\texttt{*} = \mathcal{R} \, {N}^\texttt{*}$ }  &
\multirowcell{5}{ $ {T}^\texttt{*} = {N}^\texttt{*} \, \mathcal{T} $ \\ $ {T}^\texttt{*}_\ell = {N}^\texttt{*} \, \mathcal{T}_\ell $ \\ $ \tilde{T}^\texttt{*} = {N}^\texttt{*} \, \mathsf{T} $ }
& \multirowcell{5}{ $ {X}^\texttt{*} = {S}^\texttt{*} \, \mathcal{T} $ \\ $ {X}^\texttt{*}_\ell = {S}^\texttt{*} \, \mathcal{T}_\ell $ \\ $ \tilde{X}^\texttt{*} = {S}^\texttt{*} \, \mathsf{T} $ } \\
\texttt{i} & $ N^\texttt{i} = \displaystyle (N-I) \, \mathcal{N}^{-1} - F \,\mathcal{T}^{-1}  $ & & & \\
\texttt{a} & $ {N}^\texttt{a} = \displaystyle (\mathcal{N}^{-1} \, N - I) \, \mathcal{N}^{-1} $ & & & \\
\texttt{c} & $ {N}^\texttt{c} = \displaystyle (N - \mathcal{N}^{-1} N) \, \mathcal{N}^{-1} $ & & & \\
\texttt{t} & $ N^\texttt{t} = \displaystyle (N-I) \, \mathcal{N}^{-1}   $ & & & \\
\noalign{\vskip 1pt}
\hline
     \end{tabular}
\end{table}
All quantities in the table are the static counterparts of their dynamic versions introduced in the present paper. For invertible matrix $\mathcal{Z}$, we also define
\[ N \coloneqq \check{T} \, \mathcal{Z}^{-1} = \hat{T} \, \mathcal{Z}^{-1}, \quad \mathcal{N} \coloneqq \operatorname{diag}(N), \quad \mbox{and} \quad \mathsf{T} \coloneqq \operatorname{diag}( \check{T} ) = \operatorname{diag}( \hat{T} ) . \]
See Case study~\ref{apxsec:tilly} for an application of these formulations to a static ecosystem model.

\section{Case studies}
\label{apxsec:ex}

A linear dynamic and a static model from ecosystem ecology are analyzed in this section. The linear dynamic model is solved analytically through the proposed dynamic methodology. The static model is used for an application of the static version of the proposed method.

\subsection{Case study}
\label{ex:hippe}

In this example, a linear dynamic model introduced by \cite{Hippe1983} is solved {\em analytically}. The model has two state variables, $x_1(t)$ and $x_2(t)$ (see Fig.~\ref{fig:hippe_diag}). The external inputs, $\bm{z}(t,\bm{x}) = [z_1(t,\bm{x}), z_2(t,\bm{x})]^T$, outputs, $\bm{y}(t,\bm{x}) = [y_1(t,\bm{x}), y_2(t,\bm{x})]^T$, and flow rate functions, $F(t,\bm{x})$, are given as
\begin{equation}
\label{eq:hippe_flows}
\begin{aligned}
F(t,\bm{x}) =
 \begin{bmatrix}
  0 & \frac{2}{3} \, x_2 \\
  \frac{4}{3} \, x_1 & 0 \\
 \end{bmatrix} ,
\quad
\bm{y}(t,\bm{x})=
\begin{bmatrix}
  \frac{1}{3} \, x_1 \\
  \frac{5}{3} \, x_2
\end{bmatrix},
\quad \mbox{and} \quad
\bm{z}(t,\bm{x})=
\begin{bmatrix}
  z_1 \\
  z_2
\end{bmatrix} .
 \end{aligned}
\end{equation}
The input, output, and state matrices become
\begin{equation}
\label{eq:hippe_flows2}
\begin{aligned}
\mathcal{Z}(t,\bm{x})=
\begin{bmatrix}
  z_1 & 0\\
  0 & z_2
\end{bmatrix},
\,
\mathcal{Y}(t,\bm{x})=
\begin{bmatrix}
  \frac{1}{3} \, x_1 & 0 \\
  0 & \frac{5}{3} \, x_2
\end{bmatrix},
\, \mbox{and} \, \, \,
\mathcal{X}(t,\bm{x}) =
 \begin{bmatrix}
  x_1 & 0 \\
  0 & x_2 \\
 \end{bmatrix}.
 \end{aligned}
\end{equation}

The system of governing equations, Eq.~\ref{eq:model2}, take the following form:
\begin{equation}
\label{eq:hippe_model}
\begin{aligned}
\dot x_1(t) & = z_1(t) + \frac{2}{3} x_2(t) - \left(\frac{4}{3} + \frac{1}{3} \right) \, x_1(t) \\
\dot x_2(t) & = z_2(t) + \frac{4}{3} x_1(t) - \left(\frac{2}{3} + \frac{5}{3} \right) \, x_2(t)
\end{aligned}
\end{equation}
with the initial conditions $\bm{x}_{0} = [x_{1.0}, x_{2,0}]^T = [3,3]^T$. In vector form, using the notation of Eq.~\ref{eq:model_orgM1}, the governing system can be expressed as
\begin{equation}
\label{eq:hippe_modelV}
\begin{aligned}
\dot{\bm{x}}(t) = \left ( \bm{z}(t,\bm{x}) + F(t,\bm{x}) \, \mathbf{1} \right ) - \left ( \bm{y}(t,\bm{x}) + F^T(t,\bm{x}) \, \mathbf{1} \right )
\end{aligned}
\end{equation}
with the initial conditions of $\bm{x}(t_0) = \bm{x}_0$. In matrix form, as given in Eq.~\ref{eq:model_orgV}, the governing system of equations becomes
\begin{equation}
\label{eq:hippe_mat_ss}
\dot{\bm{x}}(t) = \bm{z}(t) + A \, \bm{x}(t), \quad {\bm x}(t_0) = \bm{x}_{0}
\end{equation}
where $A$ is the constant flow intensity matrix given in Eq.~\ref{eq:matrix_A}:
\begin{equation}
\label{eq:matrix_As}
\begin{aligned}
A = \left ( F - \mathcal{T} \right ) \, \mathcal{X}^{-1} .
\end{aligned}
\end{equation}
It can be written explicitly as
\begin{equation}
\label{eq:hippe_matAS}
A = \begin{bmatrix}
  -(\frac{4}{3} + \frac{1}{3} ) & \frac{2}{3} \\
  \frac{4}{3} &  -(\frac{2}{3} + \frac{5}{3} )
\end{bmatrix}
=
\begin{bmatrix}
   - \frac{5}{3}  & \frac{2}{3}  \\
  \frac{4}{3} &  - \frac{7}{3}
\end{bmatrix} .
\end{equation}
\begin{figure}[t]
\begin{center}
\begin{tikzpicture}
\centering
   \draw[very thick,  fill=blue!5, draw=black] (-.05,-.05) rectangle node(R1) {$x_1(t)$} (1.5,1.5) ;
   \draw[very thick,  fill=blue!5, draw=black] (3.95,-.05) rectangle node(R2) {$x_2(t)$} (5.5,1.5) ;
       \draw[very thick,-stealth,draw=red, line width=5pt, opacity=.3]  (1.6,1) -- (3.8,1) -- (4.5,1) -- (4.5,.5) -- (1.6,.5) ;
       \node (p) [text=red] at (3,1.5) {$p^1_{1_1}$};
       \draw[very thick,stealth-,draw=black]  (1.6,.5) -- (3.8,.5) ;
       \draw[very thick,-stealth,draw=black]  (1.6,1) -- (3.8,1) ;
       \draw[very thick,stealth-,draw=black]  (5.6,.5) -- (6.6,.5) ;
       \draw[very thick,-stealth,draw=black]  (5.6,1) -- (6.6,1) ;
       \draw[very thick,stealth-,draw=black]  (-1.2,.5) -- (-0.15,.5) ;
       \draw[very thick,-stealth,draw=black]  (-1.2,1) -- (-0.15,1) ;
\end{tikzpicture}
\end{center}
\caption{Schematic representation of the model network. Subflow path $p^1_{1_1}$, along which the cycling flow and storage functions are computed, is red (subsystems are not shown)  (Case study~\ref{ex:hippe}).}
\label{fig:hippe_diag}
\end{figure}
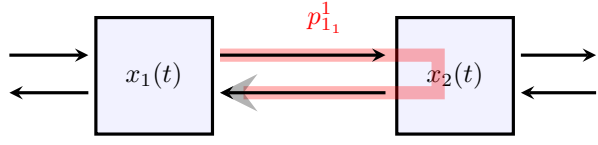

For the following state decomposition,
\begin{equation}
\label{eq:hippe_sc_exB}
\begin{aligned}
x_i(t) & = \sum_{k=0}^2 x_{i_k} (t)
\end{aligned}
\end{equation}
the substate and subflow rate functions become
\begin{equation}
\label{eq:hippe_flows_sc}
\begin{aligned}
X(t) & =
 \begin{bmatrix}
  x_{1_1} & x_{1_2} \\
  x_{2_1} & x_{2_2} \\
 \end{bmatrix} ,
\\
F_k(t,{\bf x}) & =
 \begin{bmatrix}
  f_{1_k1_k} & f_{1_k2_k} \\
  f_{2_k1_k} & f_{2_k2_k} \\
 \end{bmatrix}
=
 \begin{bmatrix}
  0 & \frac{2}{3} \, d_{2_k} \, x_{2} \\
  \frac{4}{3} \, d_{1_k} \, x_{1} & 0 \\
 \end{bmatrix}
=
 \begin{bmatrix}
  0 & \frac{2}{3} \, x_{2_k} \\
  \frac{4}{3} \, x_{1_k} & 0 \\
 \end{bmatrix},
\\
\bm{z}_k(t,{\bf x}) &=
\begin{bmatrix}
  z_{1_k} \\
  z_{2_k}
\end{bmatrix}
=
\begin{bmatrix}
  \delta_{1k} \, z_{1} \\
  \delta_{2k} \, z_{2}
\end{bmatrix}
=
\begin{bmatrix}
  \delta_{1k} \\
  \delta_{2k}
\end{bmatrix},
\\
\bm{y}_k(t,{\bf x}) & =
\begin{bmatrix}
  y_{1_k} \\
  y_{2_k}
\end{bmatrix}
=
\begin{bmatrix}
  \frac{1}{3} \, d_{1_k} \, x_{1} \\
  \frac{5}{3} \, d_{2_k} \, x_{2}
\end{bmatrix}
=
\begin{bmatrix}
  \frac{1}{3} \, x_{1,k} \\
  \frac{5}{3} \, x_{2,k}
\end{bmatrix} .
\\
 \end{aligned}
\end{equation}
The decomposed system, Eq.~\ref{eq:model_d1}, can be expressed as
\begin{equation}
\label{eq:hippe_sc_exA}
\begin{aligned}
\dot x_{1_k}(t) & = z_{1k}(t) + \frac{2}{3} x_{2_k}(t) - \left(\frac{4}{3} + \frac{1}{3} \right) \, x_{1_k}(t) \\
\dot x_{2_k}(t) & = z_{2k}(t) + \frac{4}{3} x_{1_k}(t) - \left(\frac{2}{3} + \frac{5}{3} \right) \, x_{2_k}(t)
\end{aligned}
\end{equation}
with the initial conditions of $x_{i_k} (t_0) = 0$ for $i,k = 0,1,2$.

Similarly, for the following decomposition of the initial subsystem ($k=0$),
\begin{equation}
\label{eq:hippe_sc_exSystem}
\begin{aligned}
\ubar{x}_{i}(t) & = \sum_{k=1}^2 \ubar{x}_{i_k} (t)
\end{aligned}
\end{equation}
the initial subflow rate functions become
\begin{equation}
\label{eq:hippe_flows_sc2}
\begin{aligned}
\ubar{X}(t) & =
 \begin{bmatrix}
  \ubar{x}_{1_1} & \ubar{x}_{1_2} \\
  \ubar{x}_{2_1} & \ubar{x}_{2_2} \\
 \end{bmatrix} ,
\\
\ubar{F}_{k}(t,{\bf x}) & =
 \begin{bmatrix}
  \ubar{f}_{1_k 1_k} & \ubar{f}_{1_k 2_k } \\
  \ubar{f}_{2_k 1_k} & \ubar{f}_{2_k 2_k } \\
 \end{bmatrix}
=
 \begin{bmatrix}
  0 & \frac{2}{3} \, \ubar{d}_{2_k} \, \ubar{x}_2 \\
  \frac{4}{3} \, \ubar{d}_{1_k} \, \ubar{x}_1 & 0 \\
 \end{bmatrix}
=
 \begin{bmatrix}
  0 & \frac{2}{3} \, \ubar{x}_{2_k} \\
  \frac{4}{3} \, \ubar{x}_{1_k} & 0 \\
 \end{bmatrix},
\\
\ubar{\bm{z}}_k(t,{\bf x}) & =
\begin{bmatrix}
  \ubar{z}_{1_k} \\
  \ubar{z}_{2_k}
\end{bmatrix}
=
\begin{bmatrix}
  0 \\
  0
\end{bmatrix},
\\
\ubar{\bm{y}}_k(t,{\bf x}) & =
\begin{bmatrix}
  \ubar{y}_{1_k} \\
  \ubar{y}_{2_k}
\end{bmatrix}
=
\begin{bmatrix}
  \frac{1}{3} \, \ubar{d}_{1_k} \, \ubar{x}_1 \\
  \frac{5}{3} \, \ubar{d}_{2_k} \, \ubar{x}_2
\end{bmatrix}
=
\begin{bmatrix}
  \frac{1}{3} \, \ubar{x}_{1,k} \\
  \frac{5}{3} \, \ubar{x}_{2,k}
\end{bmatrix} .
\\
 \end{aligned}
\end{equation}
The initial subsystem decomposition yields the following governing equations, as formulated in Eq.~\ref{eq:model_d2}:
\begin{equation}
\label{eq:hippe_sc_ex_model}
\begin{aligned}
\dot{\ubar x}_{1_k}(t) & = \frac{2}{3} \ubar{x}_{2_k}(t) - \left(\frac{4}{3} + \frac{1}{3} \right) \, \ubar{x}_{1_k}(t) \\
\dot{\ubar x}_{2_k}(t) & = \frac{4}{3} \ubar{x}_{1_k}(t) - \left(\frac{2}{3} + \frac{5}{3} \right) \, \ubar{x}_{2_k}(t)
\end{aligned}
\end{equation}
with the initial conditions of $\ubar{x}_{i_k} (t_0) = x_{i_{k,0}} (t_0) = 3 \, \delta_{ik}$ for $i,k = 1,2$. Thus, there are $2n^2 = 8$ equations in the decomposed system; $n^2=4$ of them are for the substates and the other $n^2$ equations are for the initial substates.

The governing equations for the decomposed system can be written in vector form, as given in Eq.~\ref{eq:model_V}:
\begin{equation}
\label{eq:hippe_sc_exV}
\begin{aligned}
\dot{\mathbf{x}}_k(t) & = {\mathbf z}_k + A \, {\mathbf x}_k(t), \quad
\mathbf{x}_k(t_0) = \mathbf{0} , \\
\dot{\ubar{\mathbf{x}}}_k(t) & = A \, \ubar{\mathbf{x}}_k(t) ,
\quad \quad \quad \, \, \ubar{\mathbf{x}}_k(t_0) = x_{k,0} \,  \mathbf{e}_k ,
\end{aligned}
\end{equation}
for $k=1,2$ or in matrix form, as given in Eq.~\ref{eq:model_M}:
\begin{equation}
\label{eq:hippe_sc_ex0}
\begin{aligned}
\dot X(t) & = \mathcal{Z} + A \, X(t), \quad
X(t_0) = \mathbf{0} , \\
\dot{\ubar X}(t) & =  A \, \ubar{X}(t) , \quad   \quad \quad
\ubar{X}(t_0) = \mathcal{X}_0 .
\end{aligned}
\end{equation}

The governing system, Eq.~\ref{eq:hippe_sc_ex0}, is linear. It can, therefore, be solved analytically as formulated in Section~\ref{sec:linsys}. Since the flow intensity matrix, $A$, is constant, the fundamental matrix solution becomes
\begin{equation}
\label{eq:fund_sln}
\begin{aligned}
V(t) = \left[
\begin{array}{cc}
\frac{2\,{\mathrm{e}}^{-t}}{3}+\frac{{\mathrm{e}}^{-3\,t}}{3} &
\frac{{\mathrm{e}}^{-t}}{3}-\frac{{\mathrm{e}}^{-3\,t}}{3} \\
\frac{2\,{\mathrm{e}}^{-t}}{3}-\frac{2\,{\mathrm{e}}^{-3\,t}}{3} &
\frac{{\mathrm{e}}^{-t}}{3}+\frac{2\,{\mathrm{e}}^{-3\,t}}{3}
\end{array}
\right] ,
\end{aligned}
\end{equation}
as formulated in Eq.~\ref{eq:model_linfund_sln}. The fundamental matrix solutions for the matrix equation, Eq.~\ref{eq:hippe_sc_ex0}, as formulated in Eq.~\ref{eq:model_XXb_sln}, then become
\begin{equation}
\label{eq:XXb_sln}
\begin{aligned}
\ubar{X}(t) & =
\left[
\begin{array}{cc}
2\,{\mathrm{e}}^{-t}+{\mathrm{e}}^{-3\,t} &
{\mathrm{e}}^{-t}-{\mathrm{e}}^{-3\,t} \\
2\,{\mathrm{e}}^{-t}-2\,{\mathrm{e}}^{-3\,t} &
{\mathrm{e}}^{-t}+2\,{\mathrm{e}}^{-3\,t}
\end{array}
\right] ,
\\
X(t) & =
\left[
\begin{array}{cc}
\frac{7}{9} - \frac{{\mathrm{e}}^{-3\,t}}{9}-\frac{2\,{\mathrm{e}}^{-t}}{3} &
\frac{2}{9} + \frac{{\mathrm{e}}^{-3\,t}}{9}-\frac{{\mathrm{e}}^{-t}}{3} \\
\frac{4}{9} + \frac{2\,{\mathrm{e}}^{-3\,t}}{9}-\frac{2\,{\mathrm{e}}^{-t}}{3} &
\frac{5}{9} - \frac{2\,{\mathrm{e}}^{-3\,t}}{9}-\frac{{\mathrm{e}}^{-t}}{3}
\end{array}
\right] .
\end{aligned}
\end{equation}
Therefore, the solution to the original system is
\begin{equation}
\label{eq:vec_sln}
\begin{aligned}
\bm{x}(t) &= {\sf X}(t) \, \bm{1} =
\left[
\begin{array}{c}
2\,{\mathrm{e}}^{-t}+1\\ 2\,{\mathrm{e}}^{-t}+1
\end{array}
\right]
\end{aligned}
\end{equation}
where
\begin{equation}
\label{eq:vec_sln2}
\begin{aligned}
{\sf X}(t) &= \ubar X(t) + X(t) =
\left[
\begin{array}{cc}
\frac{7}{9} + \frac{4\,{\mathrm{e}}^{-t}}{3}+\frac{8\,{\mathrm{e}}^{-3\,t}}{9} &
\frac{2}{9} + \frac{2\,{\mathrm{e}}^{-t}}{3}-\frac{8\,{\mathrm{e}}^{-3\,t}}{9} \\
\frac{4}{9} + \frac{4\,{\mathrm{e}}^{-t}}{3}-\frac{16\,{\mathrm{e}}^{-3\,t}}{9} &
\frac{5}{9} + \frac{2\,{\mathrm{e}}^{-t}}{3}+\frac{16\,{\mathrm{e}}^{-3\,t}}{9}
\end{array}\right] ,
\end{aligned}
\end{equation}
as given in Eqs.~\ref{eq:model_Xsln3} and~\ref{eq:model_lin_sln2}.

The steady state solutions, as formulated in Eq.~\ref{eq:model_ssS}, also become
\begin{equation}
\label{eq:hippe_sc_ex3}
\begin{aligned}
X & = - A^{-1} \, \mathcal{Z} = \mathcal{X} \, \left ( \mathcal{T} - F \right )^{-1} \, \mathcal{Z} =
\left[
\begin{array}{cc}
\frac{7}{9} & \frac{2}{9} \\
\frac{4}{9} & \frac{5}{9}
\end{array}
\right]
\quad \mbox{and} \quad \ubar{X} = \bm{0}.
\end{aligned}
\end{equation}
It can easily be seen that the dynamic solution, Eq.~\ref{eq:XXb_sln}, converges to this steady state solution as $t \rightarrow \infty$.

We also analyze the system with a time dependent input $z(t) = [3+\sin(t),3+\sin(2 t)]^T$. The fundamental matrix (initial substate matrix), $\ubar X(t)$, is the same as the one given in Eq.~\ref{eq:XXb_sln} for the constant input case above. Similar computations lead us to the following initial substate vector and substate matrix components:
\begin{equation}
\label{eq:hippe_pedX}
\begin{aligned}
x_{1_0}(t) & = x_{2_0}(t) =  3\,{\mathrm{e}}^{-t} , \\
x_{1_1}(t) & = \frac{7}{3} - \frac{11\,\cos\left(t\right)}{30} + \frac{13\,\sin\left(t\right)}{30} -\frac{5\,{\mathrm{e}}^{-t}}{3} - \frac{3\,{\mathrm{e}}^{-3\,t}}{10} , \\
x_{1_2}(t) & = \frac{2}{3} - \frac{16\,\cos\left(2\,t\right)}{195} - \frac{2\,\sin\left(2\,t\right)}{195} - \frac{13\,{\mathrm{e}}^{-t}}{15} + \frac{11\,{\mathrm{e}}^{-3\,t}}{39} , \\
x_{2_1}(t) & = \frac{4}{3} -\frac{4\,\cos\left(t\right)}{15} + \frac{2\,\sin\left(t\right)}{15} -\frac{5\,{\mathrm{e}}^{-t}}{3} + \frac{3\,{\mathrm{e}}^{-3\,t}}{5} , \\
x_{2_2}(t) & = \frac{5}{3} -\frac{46\,\cos\left(2\,t\right)}{195} + \frac{43\,\sin\left(2\,t\right)}{195} - \frac{13\,{\mathrm{e}}^{-t}}{15} - \frac{22\,{\mathrm{e}}^{-3\,t}}{39} .
\end{aligned}
\end{equation}
The solutions to the system, given in Eq.~\ref{eq:XXb_sln}, for the constant external input case are the same as the ones given by \cite{Hippe1983}. The authors, however, did not provide an explicit solution for the time-dependent input case for a comparison. Asymptotically, the solution for the substate matrix becomes
\begin{equation}
\label{eq:hippe_ped_ss}
\begin{aligned}
\lim_{t \rightarrow \infty} X(t) =
\left[
\begin{array}{cc}
\frac{7}{3} - \frac{11\,\cos\left(t\right)}{30} + \frac{13\,\sin\left(t\right)}{30} &
\frac{2}{3} - \frac{16\,\cos\left(2\,t\right)}{195} - \frac{2\,\sin\left(2\,t\right)}{195} \\
\frac{4}{3} -\frac{4\,\cos\left(t\right)}{15} + \frac{2\,\sin\left(t\right)}{15} &
\frac{5}{3} -\frac{46\,\cos\left(2\,t\right)}{195} + \frac{43\,\sin\left(2\,t\right)}{195}
\end{array}
\right] .
\end{aligned}
\end{equation}

Similarly, the elements of the inward initial subthroughflow vector and inward subthroughflow matrix are
\begin{equation}
\label{eq:hippe_thr}
\begin{aligned}
\check{\tau}_{1_0}(t) & = 2\,{\mathrm{e}}^{-t}, \quad \check{\tau}_{2_0}(t) = 4\,{\mathrm{e}}^{-t} , \\
\check{\tau}_{1_1}(t) & = \frac{35}{9} -\frac{8\,\cos\left(t\right)}{45}+\frac{49\,\sin\left(t\right)}{45} -\frac{10\,{\mathrm{e}}^{-t}}{9} + \frac{2\,{\mathrm{e}}^{-3\,t}}{5} , \\
\check{\tau}_{1_2}(t) & = \frac{742}{585} - \frac{184\, \cos^2\left(t\right)}{585} +\frac{86\,\sin\left(2\,t\right)}{585} -\frac{26\,{\mathrm{e}}^{-t}}{45}-\frac{44\,{\mathrm{e}}^{-3\,t}}{117} , \\
\check{\tau}_{2_1}(t) & = \frac{28}{9} -\frac{22\,\cos\left(t\right)}{45} + \frac{26\,\sin\left(t\right)}{45}  -\frac{20\,{\mathrm{e}}^{-t}}{9} - \frac{2\,{\mathrm{e}}^{-3\,t}}{5} , \\
\check{\tau}_{2_2}(t) & = \frac{2339}{585} -\frac{128\, \cos^2\left(t\right)}{585} +\frac{577\,\sin\left(2\,t\right)}{585} -\frac{52\,{\mathrm{e}}^{-t}}{45}+\frac{44\,{\mathrm{e}}^{-3\,t}}{117} .
\end{aligned}
\end{equation}
The graphical representations of the substate and the inward subthroughflow matrices, $X(t)$ and $\check{T}(t)$, given in Eqs.~\ref{eq:hippe_pedX} and~\ref{eq:hippe_thr}, are depicted in Fig.~\ref{fig:hippe}.
\begin{figure}[t]
\begin{center}
\includegraphics[width=.32\textwidth]{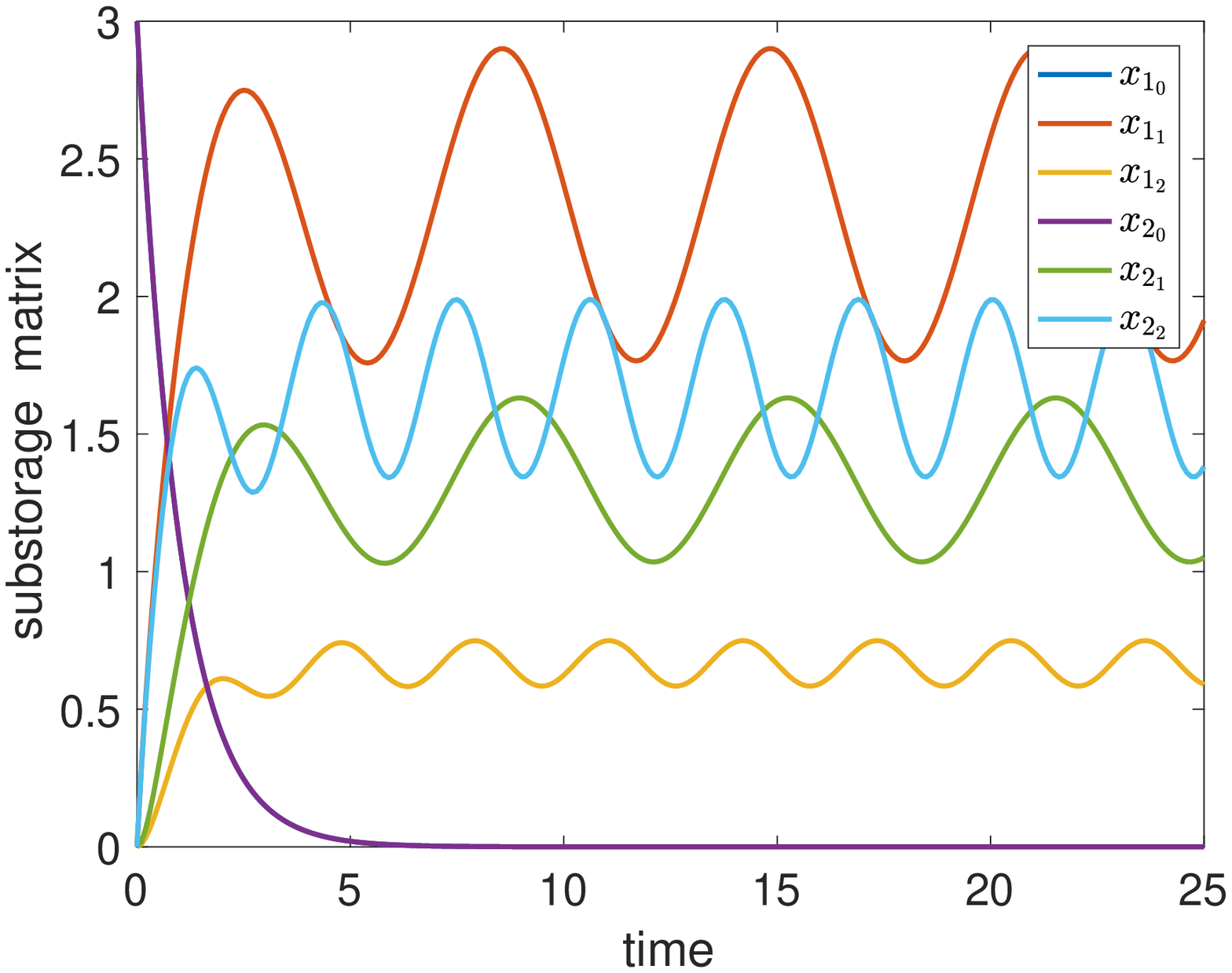}
\includegraphics[width=.31\textwidth]{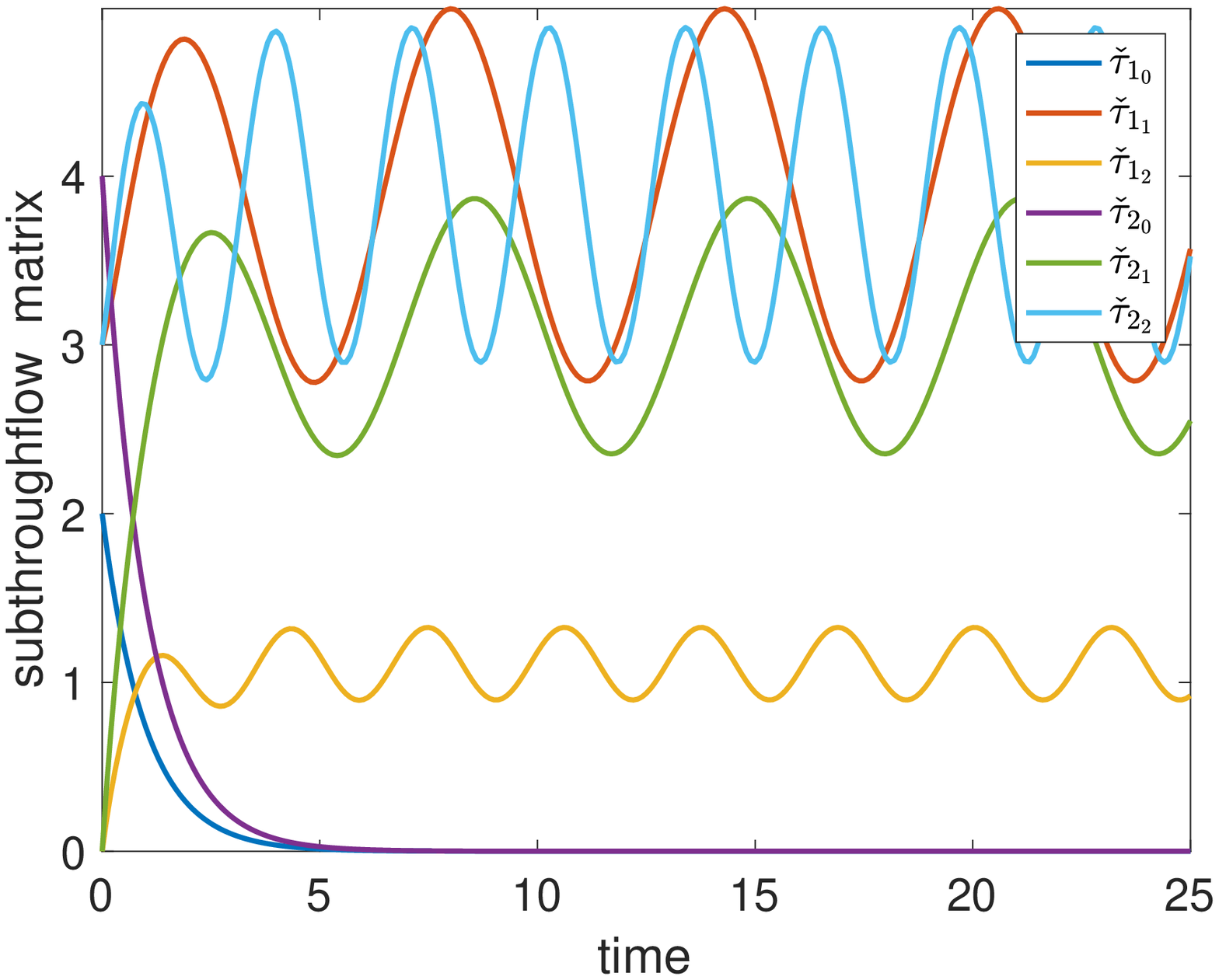}
\includegraphics[width=.32\textwidth]{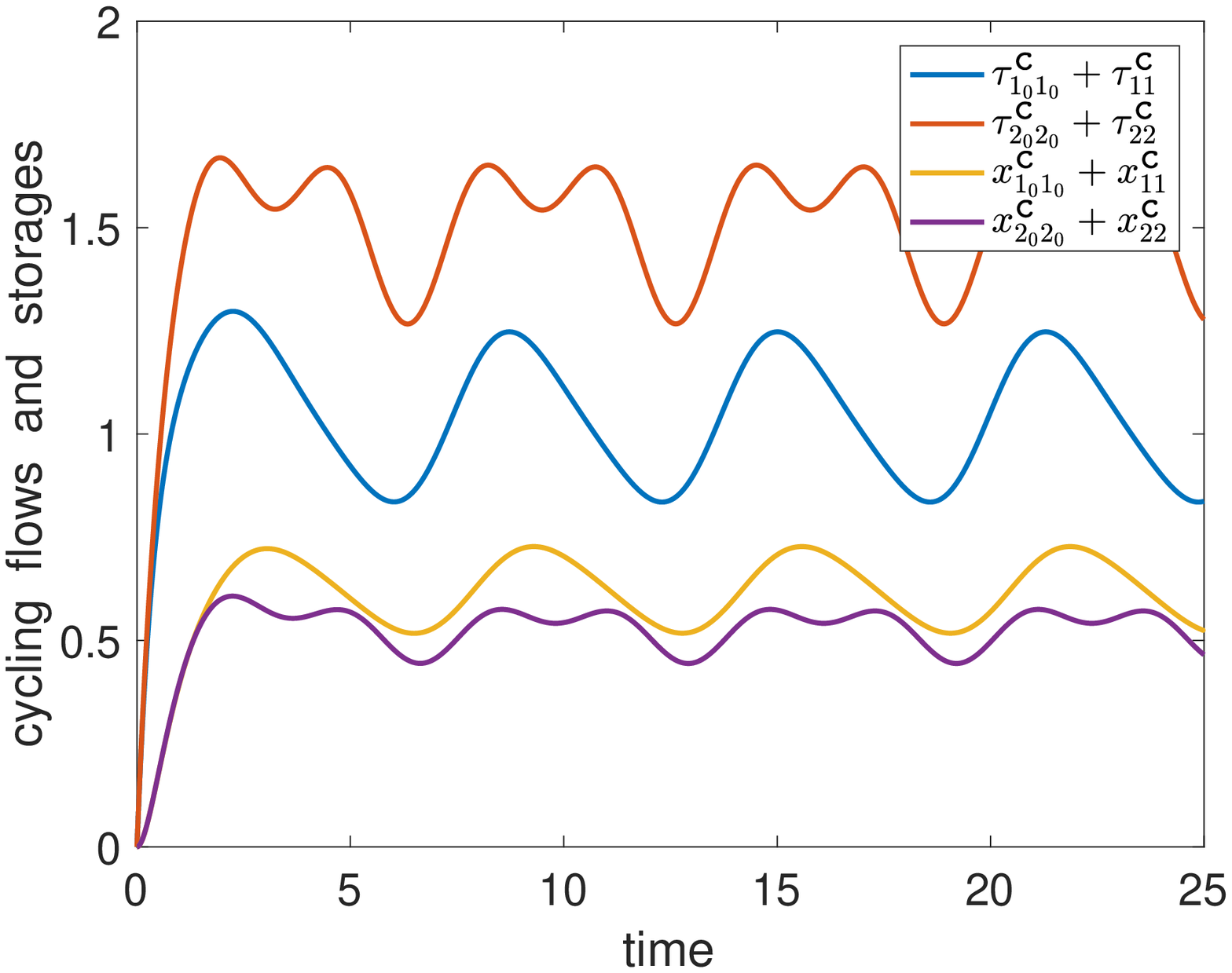}
\end{center}
\caption{The graphical representation of the substorage and inward subthroughflow matrices, $X(t)$ and $\check{T}(t)$, for time dependent input $z(t) = \left[ 3+\operatorname{sin}(t), 3+\operatorname{sin}(2 \, t) \right]^T$, and composite cycling flows and storages, $\tau^c_{i_0 i_0}(t)+\tau^c_i(t)$ and $x^c_{i_0 i_0}(t)+x^c_i(t)$ (Case study~\ref{ex:hippe}).}
\label{fig:hippe}
\end{figure}
Using these matrices, the dynamic distribution of the external inputs as inward throughflows and the organization of the associated storages generated by the inputs within the system can be analyzed individually and separately.

The cycling flows and the associated storages generated by these flows are also calculated below, as an application of the subsystem partitioning methodology. The sets of mutually exclusive subflow paths from subcompartment $k_k$ to $1_k$ with a closed subpath at $1_k$, $P^\texttt{c}_{1_k k_k}$, are given as $P^\texttt{c}_{1_1 1_1} = \{ p^1_{1_1 1_1} \}$ and $P^\texttt{c}_{1_2 2_2} = \{ p^1_{1_2 2_2}\}$, where $p_{1_1 1_1}^1 \coloneqq  0_1 \mapsto 1_1 \rightsquigarrow 2_1 \to 1_1$ and $p_{1_2 2_2}^1 \coloneqq 0_2 \mapsto 2_2 \rightsquigarrow 1_2 \rightsquigarrow 2_2 \to 1_2$. The set of mutually exclusive subflow paths within the initial subsystem with a closed subpath at $1_0$ is $P^\texttt{c}_{1_0 0_0} = \{p^1_{1_0 1_0} , p^2_{1_0 2_0}\}$ where $p_{1_0 1_0}^1 \coloneqq 0_0 \mapsto 1_0 \rightsquigarrow 2_0 \to 1_0$, $p_{1_0 2_0}^2 \coloneqq 0_0 \mapsto 2_0 \rightsquigarrow 1_0 \rightsquigarrow 2_0 \to 1_0 $. For the subflow paths in $P^\texttt{c}_{1_0 0_0}$, the composite cycling subflows are derived from the initial stocks, and for the ones in $P^\texttt{c}_{1_1 1_1}$ and $P^\texttt{c}_{1_2 2_2}$, the simple cycling flows are generated by the respective environmental inputs of $z_1(t)$ and $z_2(t)$. The sets of subflow paths for subsystem $2$, $P^\texttt{c}_{2_k k_k}$, $k=0,1,2$, can similarly be defined.

The simple cycling subflow at subcompartment $1_2$ along the only subflow path ($w_2 = 1$) in subsystem $2$, $p^1_{1_2 2_2} \in P^\texttt{c}_{1_2 2_2}$, and associated substorage are
\begin{equation}
\label{eq:cyc1}
\begin{aligned}
{\tau}^{c}_{1_2}(t) = \sum_{w=1}^1 \check{\tau}^{w}_{1_2}(t) = \check{\tau}^{1}_{1_2}(t) \quad & \mbox{and} \quad x^c_{1_2}(t) = \sum_{w=1}^1 x^{w}_{1_2}(t) = x^{1}_{1_2}(t) ,
\end{aligned}
\nonumber
\end{equation}
as formulated in Eq.~\ref{eq:out_in_fsDT_diact}. The links contributing to the cycling flow along the path are marked with red cycle numbers in the extended subflow diagram below:
\begin{equation}
\begin{aligned}
p^1_{1_2 2_2} &= 0_2 \mapsto 2_2 \, \rightsquigarrow \, 1_2 \rightsquigarrow 2_2  \, \xrightarrow{ {\color{red} 1 } } \,  1_2 \rightsquigarrow 2_2  \, \xrightarrow{ {\color{red} 2 } } \,  1_2 \rightsquigarrow   \cdots
\end{aligned}
\nonumber
\end{equation}
Note that the first flow entrance into $1_2$ is not considered as cycling flow. The cumulative transient inflow $\check{\tau}^1_{1_2}(t)$ and substorage $x^1_{1_2}(t)$ can be approximated by two terms ($m_1=2$) along the closed subpath as formulated in Eq.~\ref{eq:apxout_in_fs10}:
\begin{equation}
\label{eq:hippe_ss10}
\begin{aligned}
{x}^{1}_{1_2}(t) & = \sum_{m=1}^2 {x}^{1,m}_{2_2 1_2 2_2}(t) \approx {x}^{1,1}_{2_2 1_2 2_2}(t) + {x}^{1,2}_{2_2 1_2 2_2}(t) , \\
\check{\tau}^{1}_{1_2}(t) & =  \sum_{m=1}^2 {f}^{1,m}_{1_2 2_2 1_2}(t) \approx  f^{1,1}_{1_2 2_2 1_2}(t) + f^{1,2}_{1_2 2_2 1_2}(t) .
\end{aligned}
\nonumber
\end{equation}
The governing equations for the transient subflows and associated substorages, $f^{w,m}_{1_2 2_2 1_2}(t)$ and $x^{w,m}_{2_2 1_2 2_2}(t) $, as well as the other transient subflows and substorages involved in Eq.~\ref{eq:cyc2_sln}, as formulated in Eqs.~\ref{eq:out_in_fs} and~\ref{eq:out_in_fs2}, are coupled and solved simultaneously together with the decomposed system, Eqs.~\ref{eq:model_d1} and~\ref{eq:model_d2}. The numerical results for the composite cycling flow and associated storage functions
\begin{equation}
\label{eq:cyc2_sln}
\begin{aligned}
{\tau}^\texttt{c}_{i_0 i_0}(t) + {\tau}^\texttt{c}_{ii}(t) = \sum_{k=0}^2 {\tau}^\texttt{c}_{i_k i_k}(t)  \quad & \mbox{and} \quad x^\texttt{c}_{i_0 i_0}(t) + x^\texttt{c}_{ii}(t) = \sum_{k=0}^2 x^\texttt{c}_{i_k i_k}(t)
\end{aligned}
\end{equation}
for $i=1,2$, are presented in Fig.~\ref{fig:hippe}.

Note that, due to the reflexivity of cycling flows, the same computations can be done more practically in only two steps using the sets of closed subflow paths, $P^\texttt{c}_{i_k}$, instead (for example, along $p_{1_1}^1 \coloneqq  1_1 \mapsto 1_1 \rightsquigarrow 2_1 \to 1_1$ in subcompartment $1_1$, see Fig.~\ref{fig:hippe_diag}), with the local inputs being the corresponding outwards subthroughflows, $\hat{\tau}_{k_k}(t,{\mathbf x})$. The cycling flows can also be computed along closed paths at the compartmental level as well, where the local inputs become outward throughflows, $\hat{\tau}_{k}(t,{\bm x})$.

The cycling subflows can also be computed analytically through the dynamic approach as formulated in Eq.~\ref{eq:comp_diact_subs} \cite{Coskun2017DCSAM}. For example, the composite cycling flow at initial subcompartment $1_0$, $\tau^{\texttt{c}}_{1_0 1_0}(t)$, becomes
\begin{equation}
\label{eq:cyc_analytic}
\begin{aligned}
\tau^{\texttt{c}}_{1_0 1_0}(t) = -\frac{ 36 \, {\mathrm{e}}^{-t} + 80\,{\mathrm{e}}^{2\,t}-100\,{\mathrm{e}}^{1\,t}-16\,{\mathrm{e}}^{2\,t}\,\cos\left(t\right)+8\,{\mathrm{e}}^{2\,t}\,\sin\left(t\right) }
{9 + 50\,{\mathrm{e}}^{2\,t}-70\,{\mathrm{e}}^{3\,t}+11\,{\mathrm{e}}^{3\,t}\,\cos\left(t\right)-13\,{\mathrm{e}}^{3\,t}\,\sin\left(t\right) } .
\end{aligned}
\end{equation}
The cycling substorages can then be obtained by coupling Eq.~\ref{eq:out_in_diact2}, for the cycling flows and storages, with the decomposed system, Eqs.~\ref{eq:model_d1} and~\ref{eq:model_d2}, and solving them simultaneously. Since the system is linear, the cycling substorages can, alternatively, be obtained analytically as formulated in Eq.~\ref{eq:sln_transient}. Because of the lengthy analytical formulations of the other cycling subflows and substorages, only $\tau^{\texttt{c}}_{1_0}(t)$ is presented in Eq.~\ref{eq:cyc_analytic} as an example.

\subsection{Case study}
\label{apxsec:tilly}

To demonstrate an application of the static version of the proposed dynamic methodology, a commonly studied ecosystem network proposed by \cite{Tilly1968} is used as an example in this section. This ecosystem has already been analyzed in detail by \cite{Coskun2017SCSA,Coskun2017SESM}.

Cone Spring is a small, shallow spring-brook located in Louisa County, Iowa. The study area consists of $116 \, m^2$. The network has $5$ compartments representing $1-$plants, $2-$detritus, $3-$bacteria, $4-$detritus feeders, and $5-$carnivores. These compartments are connected by the transaction of energy between them. The conserved quantity needs to be investigated within the system is energy. The system flow information is given as follows:
\begin{equation}
\label{eq:diact2}
\begin{aligned}
F =
\left[
\begin{array}{ccccc}
0 & 0 & 0 & 0 & 0\\ 8881 & 0 & 1600 & 200 & 167\\ 0 & 5205 & 0 & 0 & 0\\ 0 & 2309 & 75 & 0 & 0\\ 0 & 0 & 0 & 370 & 0
\end{array}
\right] ,
\quad
y = \left[
\begin{array}{c}
2303\\ 3969\\ 3530\\ 1814\\ 203
\end{array}
\right] ,
\quad
z =
\left[
\begin{array}{c}
11184\\ 635\\ 0\\ 0\\ 0
\end{array}
\right] .
\end{aligned}
\nonumber
\end{equation}
The unit for energy flows and storages are kkal m$^{-2}$ y$^{-1}$ and kkal m$^{-2}$, respectively.

The system decomposition methodology yields the subthroughflow and substorage matrices:
\begin{equation}
\label{appex:tr_fl_1}
T = \left[
\begin{array}{ccccc}
11184 & 0 & 0 & 0 & 0 \\
    10717 & 766 & 0 & 0 & 0 \\
    4858 & 347 & 0 & 0 & 0 \\
    2225 & 159 & 0 & 0 & 0 \\
    345 & 25 & 0 & 0 & 0
\end{array}
\right]
\quad \mbox{and} \quad
X = \left[
\begin{array}{ccccc}
    285 & 0 & 0 & 0 & 0 \\
    3340.5 & 238.9 & 0 & 0 & 0 \\
    108.8 & 7.8 & 0 & 0 & 0 \\
    56.0 & 4.0 & 0 & 0 & 0 \\
    15.9 & 1.1 & 0 & 0 & 0
\end{array}
\right] .
\nonumber
\end{equation}

The subsystem decomposition methodology then yields the $\texttt{diact}$ flows and storages as given in Table~\ref{tab:flow_stor}. The direct flow matrix is $T^\texttt{d} = F$. The composite indirect flow matrix becomes
\begin{equation}
\label{appex:tr_fl_2}
T^{\texttt{i}} = \left[
\begin{array}{ccccc}
         0      &   0     &    0    &     0  &   0 \\
    1835.74  &  1967.00  &  63.02  &  226.41  &  31.04 \\
    4857.67   &      0  &  753.81  &  193.28   &    89.76 \\
    2224.91  &  75.00  &  334.40  &  88.52   &    41.11 \\
    345.31  &  370.00  &  63.53   &      0   &    6.38
\end{array}
\right] .
\nonumber
\end{equation}
Note the $\tau^\texttt{i}_{32} = \tau^\texttt{i}_{54} = 0$. This is because of the fact that there is no indirect flow from compartment $2$ to $3$ or from $4$ to $5$ (see Fig.~\ref{fig:cone_sp}). There is no indirect flow to compartment $1$ from any other compartments either, so the first row vector of $T^\texttt{i}$ is zero.
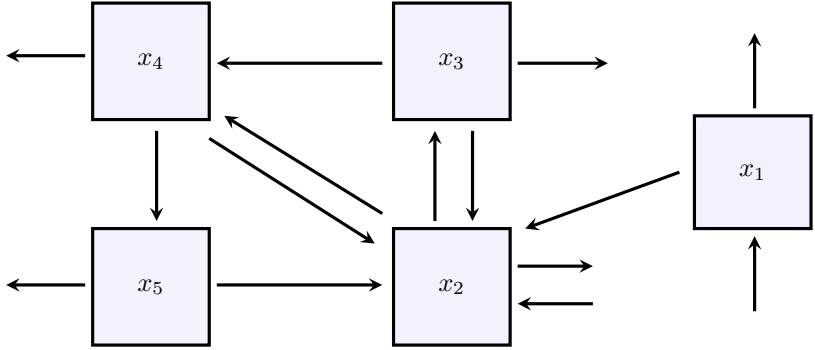
\begin{figure}[t]
\begin{center}
\begin{tikzpicture}
\centering
   \draw[very thick,  fill=blue!5, draw=black] (-.05,-.05) rectangle node(R1) {$x_5$} (1.5,1.5) ;
   \draw[very thick,  fill=blue!5, draw=black] (3.95,-.05) rectangle node(R2) {$x_2$} (5.5,1.5) ;
   \draw[very thick,  fill=blue!5, draw=black] (-.05,2.95) rectangle node(R3) {$x_4$} (1.5,4.5) ;
   \draw[very thick,  fill=blue!5, draw=black] (3.95,2.95) rectangle node(R3) {$x_3$} (5.5,4.5) ;
   \draw[very thick,  fill=blue!5, draw=black] (7.95,1.5) rectangle node(R3) {$x_1$} (9.5,3) ;
       \draw[very thick,-stealth,draw=black]  (5.6,3.7) -- (6.8,3.7) ;   
       \draw[very thick,stealth-,draw=black]  (1.6,3.7) -- (3.8,3.7) ;   
       \draw[very thick,stealth-,draw=black]  (1.7,3) -- (3.8,1.7) ;   
       \draw[very thick,-stealth,draw=black]  (1.5,2.7) -- (3.7,1.3) ;   
       \draw[very thick,stealth-,draw=black]  (5.6,.5) -- (6.6,.5) ;       
       \draw[very thick,-stealth,draw=black]  (5.6,1) -- (6.6,1) ;          
       \draw[very thick,stealth-,draw=black]  (-1.2,.75) -- (-0.15,.75) ;      
       \draw[very thick,-stealth,draw=black]  (4.5,1.6) -- (4.5,2.8) ;  
       \draw[very thick,stealth-,draw=black]  (5,1.6) -- (5,2.8) ;  
       \draw[very thick,stealth-,draw=black]  (.8,1.6) -- (.8,2.8) ;       
       \draw[very thick,-stealth,draw=black]  (8.75,0.4) -- (8.75,1.4) ;       
       \draw[very thick,-stealth,draw=black]  (8.75,3.1) -- (8.75,4.1) ;       
       \draw[very thick,stealth-,draw=black]  (5.7,1.5) -- (7.75,2.25) ; 
       \draw[very thick,stealth-,draw=black]  (3.8,.75) -- (1.6,0.75) ;   
       \draw[very thick,stealth-,draw=black]  (-1.2,3.8) -- (-.15,3.8) ;   
\end{tikzpicture}
\end{center}
\caption{Schematic representation of the model network. (subsystems are not shown) (Case study~\ref{apxsec:tilly}).}
\label{fig:cone_sp}
\end{figure}

The composite transfer flow matrix as formulated in Table~\ref{tab:flow_stor} becomes
\begin{equation}
\label{appex:tr_fl}
T^{\texttt{t}} = \left[
\begin{array}{ccccc}
         0      &   0     &    0    &     0  &   0 \\
    10716.74    &  1967.00    &   1663.02    &   426.42    &  198.04 \\
    4857.67    &   5205.00   &    753.81    &   193.28   &   89.76 \\
    2224.91    &   2384.00    &   409.40  &   88.53   &   41.12 \\
    345.31   &    370.00    &   63.54   &   370.00    &   6.38
    \end{array}
\right] .
\end{equation}
The simple cycling and acyclic subflow matrices can also be expressed as follows:
{\small\begin{equation}
\label{appex:tr_fl2}
\tilde{T}^{\texttt{c}} = \left[
\begin{array}{ccccc}
         0      &   0     &    0    &     0          & 0 \\
    1835.74  &  131.26     &    0    &     0  & 0 \\
    703.51  &  50.30     &    0     &    0  & 0 \\
    82.62  &  5.91    &     0     &    0  & 0 \\
    5.96  &  0.43   &      0     &    0  & 0
    \end{array}
\right]
\quad \mbox{and} \quad
\tilde{T}^{\texttt{a}} = \left[
\begin{array}{ccccc}
    11184  &       0     &    0     &    0       & 0 \\
    8881  &  635  &       0   &      0   & 0 \\
    4154.16  &  297.03   &      0   &      0   & 0 \\
    2142.30  &  153.18    &     0    &     0   & 0 \\
    339.35  &  24.26     &    0    &     0   & 0
    \end{array}
\right] .
\nonumber
\end{equation}}

\section*{Acknowledgments}
The author would like to thank Hasan Coskun for useful discussions and his helpful comments that improved the manuscript.

\bibliographystyle{siamplain}
\bibliography{references}

\begin{thebibliography}{10}

\bibitem{Anderson2013}
{\sc D.~Anderson}, {\em Compartmental Modeling and Tracer Kinetics}, Springer
  Berlin Heidelberg, 2013.

\bibitem{Anderson1979}
{\sc R.~M. Anderson and R.~M. May}, {\em Population biology of infectious
  diseases: Part i}, Nature, 280 (1979), pp.~361--367,
  \url{https://doi.org/10.1038/280361a0},
  \url{https://doi.org/10.1038/280361a0}.

\bibitem{Coskun2017DCSAM}
{\sc H.~Coskun}, {\em Dynamic ecological system analysis}, Heliyon, 5 (2019),
  \url{https://doi.org/10.1016/j.heliyon.2019.e02347},
  \url{https://doi.org/10.31219/osf.io/35xkb}.

\bibitem{Coskun2017DESM}
{\sc H.~Coskun}, {\em Dynamic ecological system measures}, Results in Applied
  Mathematics,  (2019), \url{https://doi.org/10.1016/j.rinam.2019.100007},
  \url{https://doi.org/10.31219/osf.io/j2pd3}.

\bibitem{Coskun2017SCSA}
{\sc H.~Coskun}, {\em Static ecological system analysis}, Theoretical Ecology,
  (2019), \url{https://doi.org/10.1007/s12080-019-0421-8},
  \url{https://doi.org/10.31219/osf.io/zqxc5}.

\bibitem{Coskun2017SESM}
{\sc H.~Coskun}, {\em Static ecological system measures}, Theoretical Ecology,
  (2019), \url{https://doi.org/10.1007/s12080-019-0422-7},
  \url{https://doi.org/10.31219/osf.io/g4xzt}.

\bibitem{Keshet2004}
{\sc L.~Edelstein-Keshet}, {\em Mathematical Models in Biology}, SIAM, 2004.

\bibitem{Hannon1973}
{\sc B.~Hannon}, {\em The structure of ecosystems}, Journal of theoretical
  biology, 41 (1973), pp.~535--546.

\bibitem{Hippe1983}
{\sc P.~W. Hippe}, {\em {Environ analysis of linear compartmental systems: The
  dynamic, time-invariant case}}, Ecological Modelling, 19 (1983), pp.~1--26,
  \url{https://doi.org/10.1016/0304-3800(83)90067-4}.

\bibitem{Jacquez1993}
{\sc J.~Jacquez and C.~Simon}, {\em Qualitative theory of compartmental
  systems}, SIAM Review, 35 (1993), pp.~43--79,
  \url{https://doi.org/10.1137/1035003}.

\bibitem{Kermack1927}
{\sc W.~O. Kermack and A.~G. McKendrick}, {\em A contribution to the
  mathematical theory of epidemics}, Proceedings of the Royal Society of London
  A: Mathematical, Physical and Engineering Sciences, 115 (1927), pp.~700--721,
  \url{https://doi.org/10.1098/rspa.1927.0118},
  \url{http://rspa.royalsocietypublishing.org/content/115/772/700},
  \url{https://arxiv.org/abs/http://rspa.royalsocietypublishing.org/content/115/772/700.full.pdf}.

\bibitem{Leontief1936}
{\sc W.~W. Leontief}, {\em Quantitative input and output relations in the
  economic systems of the united states}, The review of economic statistics, 18
  (1936), pp.~105--125.

\bibitem{Leontief1966}
{\sc W.~W. Leontief}, {\em Input-output economics}, Oxford University Press on
  Demand, New York, 1986.

\bibitem{Patten1978}
{\sc B.~C. Patten}, {\em {Systems approach to the concept of environment}},
  Ohio Journal of Science, 78 (1978), pp.~206--222.

\bibitem{Szyrmer1987}
{\sc J.~Szyrmer and R.~E. Ulanowicz}, {\em Total flows in ecosystems},
  Ecological Modelling, 35 (1987), pp.~123--136.

\bibitem{Tilly1968}
{\sc L.~J. Tilly}, {\em {The structure and dynamics of Cone Spring}},
  Ecological Monographs, 38 (1968), pp.~169--197.

\end{thebibliography}

\end{document}